\documentclass[a4paper,reqno]{amsart}
\usepackage[all]{xy}           
\usepackage{amssymb}           
\usepackage{hyperref}
\usepackage{eucal}
\usepackage{graphicx}
\usepackage{epsfig}
\usepackage[usenames,dvipsnames]{color}

\numberwithin{equation}{section}

\newtheorem{definition}{Definition}[section]
\newtheorem{theorem}[definition]{Theorem}
\newtheorem{proposition}[definition]{Proposition}
\newtheorem{corollary}[definition]{Corollary}
\newtheorem{remarkth}[definition]{Remark}

\newenvironment{remark}{\begin{remarkth}\upshape}{\hfill$\diamond$\end{remarkth}}
\renewcommand{\emph}[1]{{\bfseries\itshape{#1}}}


\newcommand{\R}{\mathbb{R}}      

\newcommand{\T}{\mathbb{T}}

\newcount\ancho \newcount\anchom \newcount\anchoa
\newcount\anchob \newcount\altura

\newcommand{\ltilde}[3][0]{\altura=0 \advance\altura by #1
           \ancho=#2 \anchom=\ancho \divide\anchom by 2
           \anchoa=\ancho \divide\anchoa by 4
           \anchob=\anchom \advance\anchob by \anchoa
           \kern-3pt \begin{array}[b]{c}
           \begin{picture}(1,1)(\anchom,-\altura)
        \qbezier(0,2)(\anchoa,5)(\anchom,2)
        \qbezier(\anchom,2)(\anchob,-1)(\ancho,4)
        \qbezier(0,2)(\anchoa,4.5)(\anchom,1.8)
        \qbezier(\anchom,1.8)(\anchob,-1.5)(\ancho,4)
       \end{picture} \\[-4pt]{#3}
                       \end{array} \kern-4pt    }

\newcommand{\lhat}[3][0]{\altura=0 \advance\altura by #1
           \ancho=#2 \anchom=\ancho \divide\anchom by 2
           \anchoa=\ancho \divide\anchoa by 4
           \anchob=\anchom \advance\anchob by \anchoa
           \kern-3pt \begin{array}[b]{c}
           \begin{picture}(1,1)(\anchom,-\altura)
        \qbezier(0,2)(\anchoa,4)(\anchom,6)
        \qbezier(\anchom,6)(\anchob,4)(\ancho,2)
        \qbezier(0,2)(\anchoa,3.8)(\anchom,5.6)
        \qbezier(\anchom,5.6)(\anchob,3.8)(\ancho,2)
       \end{picture} \\[-4pt] {#3}
                       \end{array} \kern-4pt    }

\newcommand{\lcf}{\lbrack\! \lbrack}
\newcommand{\rcf}{\rbrack\! \rbrack}

\newcommand{\lvec}[1]{\overleftarrow{#1}}

\newcommand{\qquand}{\qquad\text{and}\qquad}

\renewcommand{\sec}[1]{{\rm Sec}(#1)}

\makeatletter
\newcommand\prol{\@ifstar{\@proldf}{\@prolpf}}  
\def\@prolpf{\@ifnextchar[{\@prolpf@wrt}{\@prolpf@}}
\def\@prolpf@wrt[#1]#2{\@ifnextchar[{\@prolpf@wrt@at{#1}{#2}}{\@prolpf@wrt@{#1}{#2}}}
\def\@prolpf@wrt@at#1#2[#3]{\prolsymbol^{#1}_{#3}#2}
\def\@prolpf@wrt@#1#2{\prolsymbol^{#1}#2}
\def\@prolpf@#1{\@ifnextchar[{\@prolpf@at{#1}}{\@prolpf@@{#1}}}
\def\@prolpf@at#1[#2]{\prolsymbol_{#2}#1}
\def\@prolpf@@#1{\prolsymbol#1}
\def\@proldf{\@ifnextchar[{\@proldf@wrt}{\@proldf@}}
\def\@proldf@wrt[#1]#2{\@ifnextchar[{\@proldf@wrt@at{#1}{#2}}{\@proldf@wrt@{#1}{#2}}}
\def\@proldf@wrt@at#1#2[#3]{\prolsymbol^{*#1}_{#3}#2}
\def\@proldf@wrt@#1#2{\prolsymbol^{*#1}#2}
\def\@proldf@#1{\@ifnextchar[{\@proldf@at{#1}}{\@proldf@@{#1}}}
\def\@proldf@at#1[#2]{\prolsymbol^*_{#2}#1}
\def\@proldf@@#1{\prolsymbol^*#1}
\def\prolsymbol{\mathcal{T}}
\makeatother




\newcommand{\cnabla}{\check{\nabla}}



\def\lcf{\lbrack\! \lbrack}
\def\rcf{\rbrack\! \rbrack}
\setlength{\parskip}{3pt}

\def\cnabla{\check{\nabla}}

\let\sec\Sec










%

%

%


\begin{document}
{\Large

\title[Hamilton-Jacobi Equation and nonholonomic Mechanics]{Linear
almost Poisson structures and Hamilton-Jacobi Equation. Applications
to nonholonomic Mechanics}

\author[M.\ de Le\'on]{Manuel de Le\'on}
\address{M.\ de Le\'on:
Instituto de Ciencias Matem\'aticas (CSIC-UAM-UC3M-UCM),  Serrano 123, 28006
Madrid, Spain} \email{mdeleon@imaff.cfmac.csic.es}

\author[J.\ C.\ Marrero]{Juan C.\ Marrero}
\address{Juan C.\ Marrero:
ULL-CSIC Geometr\'{\i}a Diferencial y Mec\'anica Geom\'etrica\\
Departamento de Matem\'atica Fundamental, Facultad de
Ma\-te\-m\'a\-ti\-cas, Universidad de la Laguna, La Laguna,
Tenerife, Canary Islands, Spain} \email{jcmarrer@ull.es}

\author[D.\ Mart\'{\i}n de Diego]{David Mart\'{\i}n de Diego}
\address{D.\ Mart\'{\i}n de Diego:
Instituto de Ciencias Matem\'aticas (CSIC-UAM-UC3M-UCM), Serrano 123, 28006
Madrid, Spain} \email{d.martin@imaff.cfmac.csic.es}

\thanks{This work has been partially supported by MEC (Spain)
Grants MTM 2006-03322, MTM 2007-62478, project "Ingenio
Mathematica" (i-MATH) No. CSD 2006-00032 (Consolider-Ingenio 2010)
and S-0505/ESP/0158 of the Comunidad de Madrid. We gratefully
acknowledge helpful comments and suggestions of Anthony Bloch,
Eduardo Mart\'{\i}nez and Tomoki Oshawa. The authors also thank
the referees, who suggested important improvements upon the first
versions of our paper}

\keywords{Hamilton-Jacobi equation, linear almost Poisson
structure, almost differential, skew-symmetric algebroid, orbit
theorem, Hamiltonian morphism, nonholonomic mechanical system}

\subjclass[2000]{17B66,37J60,53D17,70F25,70G45,70H20}

\begin{abstract}
In this paper, we  study the underlying geometry in the classical
Hamilton-Jacobi equation. The proposed formalism is also valid for
nonholonomic systems. We first introduce the essential geometric
ingredients: a vector bundle,  a linear almost Poisson structure
and a Hamiltonian function, both on the dual bundle (a Hamiltonian
system). From them, it is possible to formulate the
Hamilton-Jacobi equation, obtaining as a particular case, the
classical theory. The main application in this paper is to
nonholonomic mechanical systems. For it, we first construct the
linear almost Poisson structure on the dual space of the vector
bundle of admissible directions, and then, apply the
Hamilton-Jacobi theorem. Another important fact in our paper  is the use of the orbit theorem
to symplify the Hamilton-Jacobi equation, the introduction of the notion of morphisms preserving the
Hamiltonian system; indeed, this concept will be very useful to
treat with reduction procedures for systems with symmetries.
Several detailed examples are given to illustrate the utility of these new developments.
\end{abstract}

\maketitle

\tableofcontents

\section{Introduction}

The standard Hamilton-Jacobi equation is the first-order, non-linear partial differential equation,
 \begin{equation}\label{hj1}
\frac{\partial S}{\partial t} + H(q^A, \frac{\partial S}{\partial
q^A}) = 0,
\end{equation}
 for a function $S(t, q^A)$ (called the principal function) and
 where $H$ is the Hamiltonian function of the system. Taking $S(t,
 q^A) = W(q^A) - t E$, where $E$ is a constant, we rewrite the previous equations as
\begin{equation}\label{hj2}
H(q^A, \frac{\partial W}{\partial q^A}) = E,
\end{equation}
where $W$ is called the characteristic function. Equations
(\ref{hj1}) and (\ref{hj2}) are indistinctly referred as the
Hamilton-Jacobi equation (see \cite{AbMa,Godb}; see also
\cite{CaGrMaMaMuRo} for a recent geometrical approach).

The motivation of the present paper is to extend this theory for
the case of nonholonomic mechanical systems, that is, those
mechanical systems subject to linear constraints on the
velocities. In Remark \ref{history} of our paper, we carefully
summarize previous approaches to this subject. These tried  to
adapt the standard  Hamilton-Jacobi equations for systems without
constraints to the nonholonomic setting. But for nonholonomic
mechanics is necessary to take into account that the dynamics is
obtained from  an almost Poisson bracket, that is, a bracket not
satisfying the Jacobi identity.  In this direction, in a recent
paper \cite{IgLeMa}, the authors have developed a new approach
which permits to extend the Hamilton-Jacobi equation to
nonholonomic mechanical systems. However, the expression of the
corresponding Hamilton-Jacobi equation is  far from the standard
Hamilton-Jacobi equation for unconstrained systems. This fact has
motivated the present discussion since it was necessary to
understand the underlying geometric structure in the proposed
Hamilton-Jacobi equation for nonholonomic systems.

To go further in this direction, we need a new framework, which
captures the non-Hamiltonian essence of a nonholonomic problem.
Thus, we have considered a more general minimal ``Hamiltonian"
scenario. The starting point is a vector bundle $\tau_D : D
\longrightarrow Q$ such that its dual vector bundle $\tau_{D^*} :
D^* \longrightarrow Q$ is equipped with a linear almost Poisson
bracket $\{\cdot , \cdot \}_{D^*}$, that is, a linear bracket
satisfying all the usual properties of a Poisson bracket except
the  Jacobi identity. The existence of such bracket is equivalent
to the existence of an skew-symmetric algebroid structure $(\lcf
\cdot, \cdot \rcf_D, \rho_D)$ on $\tau_D : D \longrightarrow Q$
(i.e. a Lie algebroid structure eliminating the integrability
property), or even, the existence of an almost differential $d^D$
on $\tau_D : D \longrightarrow Q$, that is, an operator $d^D$
which acts on the ``forms" on $D$ and it satisfies all the
properties of an standard differential except that $(d^D)^2$ is
not, in general, zero. We remark that skew-symmetric algebroid
structures are almost Lie structures in the terminology of
\cite{Po} (see also \cite{PoPo}) and that the one-to-one
correspondence between skew-symmetric algebroids and almost
differentials was obtained in \cite{Po}. We also note that an
skew-symmetric algebroid also is called a pre-Lie algebroid in the
terminology introduced in some papers (see, for instance,
\cite{GrUr0,GrUr,KoMa}) and the relation between linear almost
Poisson brackets and skew-symmetric algebroid structures was
discussed in these papers (see also \cite{GrGr,GrUrGr} for some
applications to Classical Mechanics).

In this framework, a Hamiltonian system is given by a Hamiltonian
function $h: D^* \longrightarrow \R$. The dynamics is provided by
the corresponding Hamiltonian vector field ${\mathcal
H}_h^{\Lambda_{D^*}}$ (${\mathcal H}_h^{\Lambda_{D^*}}(f) = \{f,
h\}_{D^*}$, for all real function $f$ on $D^*$). Here,
$\Lambda_{D^*}$ is the almost Poisson tensor field defined from
$\{\cdot , \cdot \}_{D^*}$. The reader can immediately recognize
that we are extending the standard model, where $D=TQ$, $D^*=T^*Q$
and $\lcf \cdot , \cdot \rcf_{D}$ is the usual Lie bracket of
vector fields which is related with the canonical Poisson bracket
on $T^*Q$, so that $d^D$ is just, in this case, the usual exterior
differential. Another important fact is the introduction of the
notion of morphisms preserving the Hamiltonian system; indeed,
this concept will be very useful to treat with reduction
procedures for nonholonomic systems with symmetries. We remark
that this type of procedures were intensively discussed in the
seminal paper \cite{BlKrMaMu} by Bloch {\em et al.}

In the above framework we can prove the main result of our paper:
Theorem \ref{maintheorem}. In this theorem, we obtain the
Hamilton-Jacobi equation whose expression seems a natural
extension of the classical Hamilton-Jacobi equation for
unconstrained systems, as appears, for instance, in \cite{AbMa}.
Moreover,  our construction is preserved under the natural
morphims of the theory. This fact is proved in Theorem
\ref{morHa-Ja}.

Furthermore, using the orbit theorem (see \cite{AgSa}), we will
show that the classical form of the Hamilton-Jacobi equation:
$H\circ\alpha=\hbox{constant}$, with $\alpha: Q\to D^*$ satisfying
$d^D\alpha=0$,  remains valid for a special class of nonholonomic
mechanical systems: those satisfying the condition of being
completely nonholonomic. See Section \ref{orbit-theorem} for more
details and also the paper by Ohsawa and Bloch \cite{OhBl} for the
particular case when $D$ is a distribution on $Q$.

The above theorems are applied to the theory of mechanical systems
subjected to linear nonholonomic constraints on a Lie algebroid
$A$. The ingredients of this theory are a Lie algebroid $\tau_{A}:
A \to Q$ over a manifold $Q$, a Lagrangian function $L: A \to \R$
of mechanical type, and a vector subbundle $\tau_{D}: D \to Q$ of
$A$. The total space $D$ of this vector subbundle is the
constraint submanifold (see \cite{CoLeMaMa}). Then, using the
corresponding linear Poisson structure on $A^*$, one may introduce
a linear almost Poisson bracket on $D^*$, the so-called
nonholonomic bracket. A linear almost Poisson bracket on $D$ which
is isomorphic to the nonholonomic bracket was considered in
\cite{CoLeMaMa}; however, it should be remarked that our formalism
simplifies very much the procedure to obtain it. Using all these
ingredients one can apply the general procedure (Theorems
\ref{maintheorem} and \ref{morHa-Ja}) to obtain new and
interesting results. We also remark that the main part of the
relevant information for developing  the Hamilton-Jacobi equation
for the nonholonomic system $(L, D)$  is contained in the vector
subbundle $D$ or, equivalently, in its dual $D^*$ (see Theorems
\ref{maintheorem} and \ref{morHa-Ja}). Then, the computational
 cost is lower  than in
previous approximations to the theory.

In the particular case when $A$ is the standard Lie algebroid
$\tau_{TQ}: TQ\to Q$ then the constraint subbundle is  a
distribution $D$ on $Q$. The linear almost Poisson bracket on
$D^*$ is provided by the classical nonholonomic bracket (which is
usually induced from the canonical Poisson bracket on $T^*Q$),
clarifying previous constructions \cite{CaLeMa,IbLeMaMa,VaMa}. In
addition, as  a consequence, we recover some of the results
obtained in \cite{IgLeMa} about the Hamilton-Jacobi equation for
nonholonomic mechanical systems (see Corollary
\ref{nonHa-Jastan}). Moreover, we apply our results to an explicit
example: the  two-wheeled carriage. On the other hand, if our
Lagrangian system on an arbitrary Lie algebroid $A$ is
unconstrained (that is, the constraint subbundle $D = A$) then,
using our general theory, we recover some results on the
Hamilton-Jacobi equation for Lie algebroids (see Corollary
\ref{unconsHam-Ja}) which were proved in \cite{LeMaMa}.
Furthermore, if $A$ is the standard Lie algebroid $\tau_{TQ}: TQ
\to Q$ then we directly deduce some well-known facts about the
classical Hamilton-Jacobi equation (see Corollary
\ref{Ham-Jastan}).

Another interesting application is discussed; the particular case
when the Lie algebroid is the Atiyah algebroid $\tau_{\bar{A}}:
\bar{A}=TQ/G\to \bar{Q}=Q/G$ associated with a principal
$G$-bundle $F: Q\to \bar{Q}=Q/G$. In such a case, we have a
Lagrangian function $\bar{L}: \bar{A}\to \R$ of mechanical type
and a constraint subbundle $\tau_{\bar{D}}: \bar{D}\to \bar{Q}$ of
$\tau_{\bar{A}}: \bar{A}=TQ/G\to \bar{Q}$. This nonholonomic
system is precisely the reduction, in the sense of Theorem
\ref{morHa-Ja}, of a constrained system $(L, D)$ on the standard
Lie algebroid $\tau_{A}: A=TQ\to Q$. In fact, using Theorem
\ref{morHa-Ja}, we deduce that the solutions of the
Hamilton-Jacobi equations for both systems are related in a
natural way by projection. We also characterize the nonholonomic
bracket on $\bar{D}^*$. All these results are applied to a very
interesting example: the snakeboard. In this example, an explicit
expression of the reduced nonholonomic bracket is found; moreover,
the Hamilton-Jacobi equations are proposed and it is shown the
utility of our framework to integrate the equations of motion.

We expect that the results of this paper will be useful for
analytical integration of many difficult systems (see, as an
example, the detailed study of the snakeboard in this paper and
the examples in \cite{OhBl}) and the key for the construction of
geometric integrators based on the Hamilton-Jacobi equation (see,
for instance, Chapter VI in \cite{Hair} and references therein for
the particular case of standard nonholonomic mechanical systems).

The structure of the paper is as follows. In Section
\ref{section2}, the relation between linear almost Poisson
structures on a vector bundle, skew-symmetric algebroids and
almost differentials is obtained. In Section \ref{orbit-theorem},
we introduce the notion of a completely nonholonomic
skew-symmetric algebroid and we prove that on an algebroid $D$ of
this kind with connected base $Q$ the space $H^{0}(d^{D}) = \{f
\in C^{\infty}(Q) / d^D f = 0 \}$ is isomorphic to $\R$. We also
prove that on an arbitrary skew-symmetric algebroid $D$ the
condition $d^D f = 0$ implies that $f$ is constant on the leaves
of a certain generalized foliation (see Theorem
\ref{orb-theo-Lieal}). For this purpose, we will use the orbit
theorem. In Section \ref{section3}, we consider Hamiltonian
systems associated with a linear almost Poisson structure on the
dual bundle $D^*$ to a vector bundle and a Hamiltonian function on
$D^*$. Then, the Hamilton-Jacobi equation is proposed in this
setting. Moreover, using the results of Section
\ref{orbit-theorem}, we obtain an interesting expression of this
equation. In Section \ref{section4}, we apply the previous results
to nonholonomic mechanical systems and, in particular,  to some
explicit examples. Moreover, we review in this section some
previous approaches to the topic. We conclude our paper with the
future lines of work and an appendix with the proof of some
technical results.

\section{Linear almost Poisson structures, skew-symmetric algebroids
and almost differentials}\label{section2}

Most of the results contained in this section are well-known in
the literature (see \cite{GLMM,GrUr0,GrUr,Po,PoPo}). However, to make
the paper more self-contained, we will include their proofs.

Let $\tau_{D}: D \to Q$ be a vector bundle of rank $n$ over a
manifold $Q$ of dimension $m$. Denote by $D^*$ the dual vector
bundle to $D$ and by $\tau_{D^*}: D^* \to Q$ the corresponding
vector bundle projection.

\begin{definition}\label{laps}
A \emph{linear almost Poisson structure} on $D^*$ is a bracket of
functions
\[
\{ \cdot , \cdot \}_{D^*}: C^{\infty}(D^*) \times C^{\infty}(D^*)
\to C^{\infty}(D^*)
\]
such that:
\begin{enumerate}
\item
$\{ \cdot , \cdot \}_{D^*}$ is skew-symmetric, that is,
\[
\{\varphi, \psi \}_{D^*} = -\{\psi, \varphi\}_{D^*}, \; \; \mbox{
for } \varphi, \psi \in C^{\infty}(D^*).
\]

\item
$\{\cdot , \cdot \}_{D^*}$ satisfies the Leibniz rule, that is,
\[
\{\varphi \varphi', \psi \}_{D^*} = \varphi \{\varphi',
\psi\}_{D^*} + \varphi' \{\varphi, \psi \}_{D^*}, \; \; \mbox{ for
} \varphi, \varphi', \psi \in C^{\infty}(D^*).
\]

\item
$\{\cdot , \cdot \}_{D^*}$ is linear, that is, if $\varphi$ and
$\psi$ are linear functions on $D^*$ then $\{\varphi, \psi
\}_{D^*}$ is also a linear function.

\end{enumerate}

If, in addition, the bracket satisfies the Jacobi identity then we
have that $\{\cdot, \cdot\}_{D^*}$ is a \emph{linear Poisson
structure} on $D^*$.

\end{definition}

Properties (i) and (ii) in Definition \ref{laps} imply that there
exists a $2$-vector $\Lambda_{D^*}$ on $D^*$ such that
\[
\Lambda_{D^*}(d\varphi, d\psi) = \{\varphi, \psi\}_{D^*}, \; \;
\mbox{ for } \varphi, \psi \in C^{\infty}(D^*).
\]
$\Lambda_{D^*}$ is called the \emph{linear almost Poisson
$2$-vector} associated with the linear almost Poisson structure
$\{\cdot, \cdot\}_{D^*}$.

Note that there exists a one-to-one correspondence between the
space $\Gamma(\tau_{D})$ of sections of the vector bundle
$\tau_{D}: D \to Q$ and the space of linear functions on $D^{*}$.
In fact, if $X \in \Gamma(\tau_{D})$ then the corresponding linear
function $\hat{X}$ on $D^*$ is given by
\[
\hat{X}(\alpha) = \alpha(X(\tau_{D^*}(\alpha))), \; \; \mbox{ for
} \alpha \in D^*.
\]

\begin{proposition}\label{properlin}
Let $\{\cdot, \cdot\}_{D^*}$ be a linear almost Poisson structure
on $D^*$.
\begin{enumerate}
\item
If $X$ is a section of $\tau_{D}: D \to Q$ and $f$ is a real
$C^{\infty}$-function on $Q$ then the bracket $\{\hat{X}, f \circ
\tau_{D^*}\}_{D^*}$ is a basic function with respect to the
projection $\tau_{D^*}$.

\item
If $f$ and $g$ are real $C^{\infty}$-functions on $Q$ then
\[
\{f \circ \tau_{D^*}, g \circ \tau_{D^*}\}_{D^*} = 0.
\]
\end{enumerate}

\end{proposition}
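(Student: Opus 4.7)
The plan is to exploit two elementary observations about functions on $D^*$: first, if $f\in C^\infty(Q)$ and $Y\in\Gamma(\tau_D)$, then the product $(f\circ\tau_{D^*})\hat{Y}$ is again a linear function, namely $\widehat{fY}$; second, a smooth function $\psi$ on $D^*$ is linear (respectively basic) if and only if it is $1$-homogeneous (respectively $0$-homogeneous) under the fiberwise dilations, i.e.\ $\Delta\psi=\psi$ (respectively $\Delta\psi=0$), where $\Delta$ denotes the Liouville vector field on $D^*$. Combined with the Leibniz rule, these observations will allow the linearity axiom of Definition \ref{laps}(iii) to propagate from the purely linear case to the mixed and basic cases.

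For (i), I would set $\Phi:=\{\hat{X},f\circ\tau_{D^*}\}_{D^*}$ and fix an arbitrary section $Y\in\Gamma(\tau_D)$. Since $\widehat{fY}=(f\circ\tau_{D^*})\hat{Y}$ is linear, the bracket $\{\hat{X},\widehat{fY}\}_{D^*}$ is linear by the linearity axiom, and Leibniz yields
\[
\{\hat{X},\widehat{fY}\}_{D^*}=(f\circ\tau_{D^*})\,\{\hat{X},\hat{Y}\}_{D^*}+\hat{Y}\,\Phi.
\]
The first summand on the right is again linear (basic times linear), so $\hat{Y}\,\Phi$ is linear, i.e.\ $\Delta(\hat{Y}\,\Phi)=\hat{Y}\,\Phi$. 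Using $\Delta\hat{Y}=\hat{Y}$ and Leibniz for $\Delta$, this simplifies to $\hat{Y}\,\Delta\Phi=0$ for every section $Y$. Since at every point $\alpha\in D^*$ off the zero section one can pick $Y$ with $\hat{Y}(\alpha)\neq 0$, we get $(\Delta\Phi)(\alpha)=0$; continuity then yields $\Delta\Phi\equiv 0$ on $D^*$, so $\Phi$ is $0$-homogeneous and hence basic.

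For (ii), I would set $F:=\{f\circ\tau_{D^*},g\circ\tau_{D^*}\}_{D^*}$ and apply the same trick. For any $Y\in\Gamma(\tau_D)$, part (i) together with skew-symmetry of the bracket shows that $\{f\circ\tau_{D^*},\widehat{gY}\}_{D^*}$ is basic, while Leibniz expands it as
\[
\{f\circ\tau_{D^*},\widehat{gY}\}_{D^*}=(g\circ\tau_{D^*})\,\{f\circ\tau_{D^*},\hat{Y}\}_{D^*}+\hat{Y}\,F,
\]
whose first summand is basic (basic times basic, again by (i)). Hence $\hat{Y}\,F$ is basic for every $Y$. If $F(\alpha_0)\neq 0$ at some $\alpha_0\in D^*_q$, choosing $Y$ with $\hat{Y}(\alpha_0)\neq 0$ would make $\hat{Y}\,F$ a nonzero constant on the fiber $D^*_q$, forcing $F$ to behave like $\mathrm{const}/\hat{Y}$ on $\{\hat{Y}\neq 0\}\cap D^*_q$, which contradicts smoothness across $\{\hat{Y}=0\}$. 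Hence $F\equiv 0$.

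The genuinely nontrivial step in both parts is the closing pointwise argument that converts the statement ``$\hat{Y}\,\Phi$ (or $\hat{Y}\,F$) has the right homogeneity for every $Y$'' into a conclusion about $\Phi$ or $F$ alone; this is where smoothness and the assumption that $\tau_D$ has positive rank actually enter. The rest is formal bookkeeping with Leibniz and the linearity axiom.
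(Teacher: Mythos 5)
Your proof is correct and follows essentially the same route as the paper's: test the bracket against $\widehat{fY}=(f\circ\tau_{D^*})\hat{Y}$, expand by the Leibniz rule, and compare the resulting homogeneities. The only difference is that you make explicit, via the Liouville vector field and a pointwise choice of $Y$, the closing step (from ``$\hat{Y}\,\Phi$ linear, respectively $\hat{Y}\,F$ basic, for all $Y$'' to ``$\Phi$ basic, respectively $F=0$''), which the paper treats as evident.
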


\begin{proof}
Let $Y$ be an arbitrary section of $\tau_{D}: D \to Q$.

Using Definition \ref{laps}, we have that
\[
\{\hat{X}, (f \circ \tau_{D^*}) \hat{Y}\}_{D^*} = (f \circ
\tau_{D^*})\{\hat{X}, \hat{Y}\}_{D^{*}} + (\{\hat{X},
f\circ\tau_{D^*}\}_{D^*})\hat{Y}
\]
is a linear function on $D^*$. Thus, since $(f \circ
\tau_{D^*})\{\hat{X}, \hat{Y}\}_{D^*}$ also is a linear function,
it follows that $\{\hat{X}, f\circ \tau_{D^*}\}_{D^*}$ is a basic
function with respect to $\tau_{D^*}$. This proves (i).

On the other hand, using (i) and Definition \ref{laps}, we deduce
that
\[
\{(f \circ \tau_{D^*})\hat{Y}, g \circ \tau_{D^*}\}_{D^*} = (f
\circ \tau_{D^*})\{\hat{Y}, g\circ \tau_{D^*}\}_{D^*} + (\{f \circ
\tau_{D^*}, g\circ \tau_{D^*}\}_{D^*})\hat{Y}
\]
is a basic function with respect to $\tau_{D^*}$. Therefore, as
$(f \circ \tau_{D^*})\{\hat{Y}, g \circ \tau_{D^*}\}_{D^*}$ also
is a basic function with respect to $\tau_{D^*}$, we conclude that
$\{f \circ \tau_{D^*}, g \circ \tau_{D^*}\}_{D^*} = 0$. This
proves (ii).
\end{proof}

If $(q^i)$ are local coordinates on an open subset $U$ of $Q$ and
$\{X_{\alpha}\}$ is a basis of sections of the vector bundle
$\tau_{D}^{-1}(U) \to U$ then we have the corresponding local
coordinates $(q^i, p_{\alpha})$ on $D^*$. Moreover, from
Proposition \ref{properlin}, it follows that
\[
\{p_{\alpha}, p_{\beta}\}_{D^*} = -C_{\alpha
\beta}^{\gamma}p_{\gamma}, \; \; \{q^j, p_{\alpha}\}_{D^*} =
\rho^{j}_{\alpha}, \; \; \{q^i, q^j\}_{D^*} = 0,
\]
with $C_{\alpha \beta}^{\gamma}$ and $\rho^{j}_{\alpha}$ real
$C^{\infty}$-functions on $U$.

Consequently, the linear almost Poisson $2$-vector $\Lambda_{D^*}$
has the following local expression
\begin{equation}\label{LambdaDstar}
\Lambda_{D^*} = \rho^{j}_{\alpha} \displaystyle
\frac{\partial}{\partial q^j} \wedge \frac{\partial}{\partial
p_{\alpha}} - \frac{1}{2} C_{\alpha \beta}^{\gamma} p_{\gamma}
\frac{\partial}{\partial p_{\alpha}} \wedge
\frac{\partial}{\partial p_{\beta}}.
\end{equation}

\begin{definition}\label{alas}
An \emph{skew-symmetric algebroid structure} on the vector bundle
$\tau_{D}: D \to Q$ is a $\R$-linear bracket $\lcf \cdot, \cdot
\rcf_{D}: \Gamma(\tau_{D}) \times \Gamma(\tau_{D}) \to
\Gamma(\tau_{D})$ on the space $\Gamma(\tau_{D})$ and a vector
bundle morphism $\rho_{D}: D \to TQ$, the \emph{anchor map}, such
that:
\begin{enumerate}
\item
$\lcf \cdot, \cdot \rcf_{D}$ is skew-symmetric, that is,
\[
\lcf X, Y \rcf_{D} = -\lcf Y, X \rcf_{D}, \; \; \mbox{ for } X, Y
\in \Gamma(\tau_{D}).
\]

\item
If we also denote by $\rho_{D}: \Gamma(\tau_{D}) \to {\frak X}(Q)$
the morphism of $C^{\infty}(Q)$-modules induced by the anchor map
then
\[
\lcf X, fY \rcf_{D} = f \lcf X, Y \rcf_{D} + \rho_{D}(X)(f) Y, \;
\; \mbox{ for } X, Y \in \Gamma(D) \mbox{ and } f \in
C^{\infty}(Q).
\]

\end{enumerate}

If the bracket $\lcf \cdot, \cdot \rcf_{D}$ satisfies the Jacobi
identity, we have that the pair $(\lcf \cdot, \cdot \rcf_{D},
\rho_{D})$ is a \emph{Lie algebroid structure} on the vector
bundle $\tau_{D}: D \to Q$.

\end{definition}

\begin{remark}\label{foliation-remark} {\rm If $(D, \lcf\cdot, \cdot\rcf_{D}, \rho_{D})$ is
a Lie algebroid over $Q$, we may consider the generalized
distribution $\tilde{D}$ whose characteristic space at a point $q
\in Q$ is given by $\tilde{D}(q) = \rho_{D}(D_{q})$, where $D_{q}$
is the fibre of $D$ over $q$. The distribution $\tilde{D}$ is
finitely generated and involutive. Thus, $\tilde{D}$ defines a
generalized foliation on $Q$ in the sense of Sussmann \cite{Su}.
$\tilde{D}$ is the \emph{Lie algebroid foliation} on $Q$
associated with $D$. }
\end{remark}

Now, we will denote by ${\mathcal L}{\mathcal A}{\mathcal P}(D^*)$
(respectively, ${\mathcal L}{\mathcal P}(D^*)$) the set of linear
almost Poisson structures (respectively, linear Poisson
structures) on $D^*$. Denote also by ${\mathcal S}{\mathcal S}{\mathcal A}(D)$ (respectively, ${\mathcal L}{\mathcal A}(D)$)
the set of skew-symmetric algebroid (respectively, Lie algebroid)
structures on the vector bundle $\tau_{D}: D \to Q$. Then, we will
see in the next theorem that there exists a one-to-one
correspondence between ${\mathcal L}{\mathcal A}{\mathcal P}(D^*)$
(respectively, ${\mathcal L}{\mathcal P}(D^*)$) and the set of
skew-symmetric algebroid (respectively, Lie algebroid) structures
on $\tau_{D}: D \to Q$.

\begin{theorem}\label{alp-ala}
There exists a one-to-one correspondence $\Psi$ between the sets
${\mathcal L}{\mathcal A}{\mathcal P}(D^*)$ and ${\mathcal
S}{\mathcal S}{\mathcal A}(D)$. Under the bijection $\Psi$, the
subset ${\mathcal L}{\mathcal P}(D^*)$ of ${\mathcal L}{\mathcal
A}{\mathcal P}(D^*)$ corresponds with the subset ${\mathcal
L}{\mathcal A}(D)$ of ${\mathcal S}{\mathcal S}{\mathcal A}(D)$.
Moreover, if $\{\cdot, \cdot \}_{D^*}$ is a linear almost Poisson
structure on $D^*$ then the corresponding skew-symmetric algebroid
structure $(\lcf \cdot, \cdot \rcf_{D}, \rho_{D})$ on $D$ is
characterized by the following conditions
\begin{equation}\label{alafromalp}
\widehat{\lcf X, Y \rcf}_{D} = -\{\hat{X}, \hat{Y}\}_{D^*}, \; \;
\; \rho_{D}(X)(f) \circ \tau_{D^*} = \{f \circ \tau_{D^*},
\hat{X}\}_{D^*}
\end{equation}
for $X, Y \in \Gamma(\tau_{D})$ and $f \in C^{\infty}(Q)$.
\end{theorem}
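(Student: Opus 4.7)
The plan is to construct $\Psi$ explicitly from (\ref{alafromalp}), verify that the output is a skew-symmetric algebroid, exhibit an inverse by prescribing a bracket on a set of generators of $C^{\infty}(D^*)$, and finally match the Jacobi condition on the two sides.

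For the forward direction, given $\{\cdot,\cdot\}_{D^*}\in{\mathcal L}{\mathcal A}{\mathcal P}(D^*)$: the linearity axiom in Definition \ref{laps} says that $\{\hat{X},\hat{Y}\}_{D^*}$ is a linear function, hence of the form $\widehat{Z}$ for a unique $Z\in\Gamma(\tau_{D})$, which I declare to be $-\lcf X,Y\rcf_{D}$. By Proposition \ref{properlin}(i), $\{f\circ\tau_{D^*},\hat{X}\}_{D^*}$ is basic, hence the pull-back of a unique function on $Q$, which I take as $\rho_{D}(X)(f)$. Skew-symmetry of $\lcf\cdot,\cdot\rcf_{D}$ is immediate from skew-symmetry of $\{\cdot,\cdot\}_{D^*}$. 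That $\rho_{D}(X)$ is a vector field follows by testing the Leibniz rule of $\{\cdot,\cdot\}_{D^*}$ against $(fg)\circ\tau_{D^*}$. Writing $\widehat{fX}=(f\circ\tau_{D^*})\hat{X}$ and using the Leibniz rule together with Proposition \ref{properlin}(ii) yields $C^{\infty}(Q)$-linearity of $\rho_{D}$ in $X$, so $\rho_{D}$ descends to a vector bundle morphism. The algebroid Leibniz rule $\lcf X, fY\rcf_{D}=f\lcf X,Y\rcf_{D}+\rho_{D}(X)(f)Y$ pops out of expanding $\{\hat{X},(f\circ\tau_{D^*})\hat{Y}\}_{D^*}$ via the Leibniz rule of the Poisson bracket.

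For the reverse direction, given $(\lcf\cdot,\cdot\rcf_{D},\rho_{D})\in{\mathcal S}{\mathcal S}{\mathcal A}(D)$: I define $\{\cdot,\cdot\}_{D^*}$ on the three types of generating pairs---linear/linear and linear/basic via (\ref{alafromalp}), and basic/basic to be zero---then extend by skew-symmetry and the Leibniz rule. Well-definedness is cleanest through the coordinate expression (\ref{LambdaDstar}): with structure functions $\lcf X_{\alpha},X_{\beta}\rcf_{D}=C_{\alpha\beta}^{\gamma}X_{\gamma}$ and $\rho_{D}(X_{\alpha})=\rho_{\alpha}^{j}\partial/\partial q^{j}$, the bivector $\Lambda_{D^*}$ is manifestly smooth on each chart, and checking that a change of local frame on $D$ transforms the $C$'s and $\rho$'s in such a way that $\Lambda_{D^*}$ is invariant yields the global object. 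The two constructions are mutual inverses essentially by construction, since both are characterized by their action on linear and basic functions, where (\ref{alafromalp}) is imposed.

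To match the subsets ${\mathcal L}{\mathcal P}(D^*)$ and ${\mathcal L}{\mathcal A}(D)$, I compare Jacobiators. Since the Jacobiator of $\{\cdot,\cdot\}_{D^*}$ is a triderivation, it vanishes iff it vanishes on triples drawn from generators. On $(\hat{X},\hat{Y},\hat{Z})$ it equals, via the first identity of (\ref{alafromalp}) and the injectivity of $X\mapsto\hat{X}$, (the hat of) the Jacobiator of $\lcf\cdot,\cdot\rcf_{D}$ applied to $(X,Y,Z)$; on $(\hat{X},\hat{Y},f\circ\tau_{D^*})$ it encodes the failure of $\rho_{D}$ to intertwine the brackets, which is known to be a consequence of the algebroid Jacobi identity; and on triples with two or more basic entries it vanishes by Proposition \ref{properlin}(ii). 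I expect the main obstacle to be the bookkeeping in the reverse direction---showing that the bracket defined only on generators extends uniquely and consistently to all of $C^{\infty}(D^*)$---which is most cleanly handled via (\ref{LambdaDstar}) rather than an abstract extension argument.
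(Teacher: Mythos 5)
Your proposal is correct, and for the main bijection it follows essentially the same route as the paper: the forward direction extracts $(\lcf \cdot,\cdot\rcf_{D},\rho_{D})$ from (\ref{alafromalp}) using the linearity axiom and Proposition \ref{properlin} exactly as in the text, and the converse is built from the local model (\ref{LambdaDstar}); the only cosmetic difference there is that you globalize by checking frame-change invariance of the local bivector, while the paper glues the local brackets by appealing to their uniqueness, which amounts to the same computation since a skew Leibniz bracket is determined by its values on basic and linear functions (whose differentials span $T^*_{\mu}D^*$ at every point). Where you genuinely add something is the correspondence ${\mathcal L}{\mathcal P}(D^*)\leftrightarrow{\mathcal L}{\mathcal A}(D)$: the paper simply declares both implications ``clear''/``well-known'', whereas you observe that the Jacobiator of an almost Poisson bracket is a trivector (a derivation in each slot), so it suffices to evaluate it on basic and linear functions; the linear--linear--linear triple reproduces the hat of the algebroid Jacobiator, the linear--linear--basic triple measures $\rho_{D}\lcf X,Y\rcf_{D}-[\rho_{D}X,\rho_{D}Y]$ (which vanishes as a standard consequence of the algebroid Jacobi identity), and triples with two basic entries vanish by Proposition \ref{properlin}(ii). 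This gives a slightly more self-contained proof of the subset statement than the paper's, at the cost of invoking the standard anchor-morphism lemma; otherwise the two arguments coincide.
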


\begin{proof}
Let $\{\cdot, \cdot\}_{D^*}$ be a linear almost Poisson structure
on $D^*$. Then, it is easy to prove that $\lcf \cdot, \cdot
\rcf_{D}$ (defined as in (\ref{alafromalp})) is a $\R$-bilinear
skew-symmetric bracket. Moreover, since $\{\cdot, \cdot\}_{D^*}$
satisfies the Leibniz rule, it follows that $\rho_{D}(X)$ is a
vector field on $Q$, for $X\in \Gamma(\tau_{D})$. In addition,
using again that $\{\cdot, \cdot\}_{D^*}$ satisfies the Leibniz
rule and Proposition \ref{properlin}, we deduce that
\[
\rho_{D}(gX) = g\rho_{D}(X), \; \; \; \mbox{ for } g \in
C^{\infty}(Q) \mbox{ and } X\in \Gamma(\tau_{D}).
\]
Thus, $\rho_{D}: \Gamma(\tau_{D}) \to {\frak X}(Q)$ is a morphism
of $C^{\infty}(Q)$-modules.

On the other hand, from (\ref{alafromalp}), we obtain that
\[
\widehat{\lcf X, fY \rcf}_{D} = -\{\hat{X}, (f\circ
\tau_{D^*})\hat{Y}\}_{D^*} = (\rho_{D}(X)(f) \circ
\tau_{D^*})\hat{Y} - (f \circ \tau_{D^*}) \{\hat{X},
\hat{Y}\}_{D^*}.
\]
Therefore,
\[
\lcf X, fY \rcf_{D} = f \lcf X, Y \rcf_{D} + \rho_{D}(X)(f) Y
\]
and $(\lcf \cdot, \cdot \rcf_{D}, \rho_{D})$ is an skew-symmetric
algebroid structure on $\tau_{D}: D \to Q$. It is clear that if
$\{\cdot, \cdot\}_{D^*}$ is a Poisson bracket then $\lcf \cdot,
\cdot \rcf_{D}$ satisfies the Jacobi identity.

Conversely, if $(\lcf \cdot, \cdot \rcf_{D}, \rho_{D})$ is an
skew-symmetric algebroid structure on $\tau_{D}: D \to Q$ and $x
\in Q$ then one may prove that there exists an open subset $U$ of
$Q$ and a unique linear almost Poisson structure on the vector
bundle $\tau_{\tau_{D}^{-1}(U)^*}: \tau_{D}^{-1}(U)^* \to U$ such
that
\[
\{\hat{X}, \hat{Y}\}_{\tau_{D}^{-1}(U)^*} = -\widehat{\lcf X,
Y\rcf}_{\tau_{D}^{-1}(U)}, \; \; \; \{f \circ
\tau_{\tau_{D}^{-1}(U)^*}, \hat{X}\}_{\tau_{D}^{-1}(U)^*} =
\rho_{\tau_{D}^{-1}(U)}(X)(f) \circ \tau_{\tau_{D}^{-1}(U)^*},
\]
and
\[
\{f \circ \tau_{\tau_{D}^{-1}(U)^*}, g \circ
\tau_{\tau_{D}^{-1}(U)^*}\}_{\tau_{D}^{-1}(U)^*} = 0,
\]
for $X, Y$ sections of the vector bundle $\tau_{D}^{-1}(U) \to U$
and $f, g \in C^{\infty}(U)$. Here, $(\lcf \cdot, \cdot
\rcf_{\tau_{D}^{-1}(U)}, \rho_{\tau_{D}^{-1}(U)})$ is the
skew-symmetric algebroid structure on $\tau_{D}^{-1}(U)$ induced,
in a natural way, by the skew-symmetric algebroid structure $(\lcf
\cdot, \cdot \rcf_{D}, \rho_{D})$ on $D$. In addition, we have
that if $\lcf \cdot, \cdot \rcf_{D}$ satisfies the Jacobi identity
then $\{\cdot, \cdot\}_{\tau_{D}^{-1}(U)^*}$ is a linear Poisson
bracket on $\tau_{D}^{-1}(U)^*$. Thus, we deduce that there exists
a unique linear almost Poisson structure $\{\cdot, \cdot\}_{D^*}$
on $D^*$ such that (\ref{alafromalp}) holds.
\end{proof}

Let $\{\cdot, \cdot\}_{D^*}$ be a linear almost Poisson structure
on $D^*$ and $(\lcf \cdot, \cdot \rcf_{D}, \rho_{D})$ be the
corresponding skew-symmetric algebroid structure on $\tau_{D}: D
\to Q$. If $(q^i)$ are local coordinates on an open subset $U$ of
$Q$ and $\{X_{\alpha}\}$ is a basis of sections of the vector
bundle $\tau_{D}^{-1}(U) \to U$ such that $\Lambda_{D^*}$ is given
by (\ref{LambdaDstar}) (on $\tau_{D}^{-1}(U)$) then
\[
\lcf X_{\alpha}, X_{\beta} \rcf_{D} = C_{\alpha \beta}^{\gamma}
X_{\gamma}, \; \; \; \rho_{D}(X_{\alpha}) = \rho_{\alpha}^{j}
\displaystyle \frac{\partial}{\partial q^j}.
\]
$C_{\alpha\beta}^{\gamma}$ and $\rho_{\alpha}^{j}$ are called the
\emph{local structure functions} of the skew-symmetric algebroid
structure $(\lcf \cdot, \cdot \rcf_{D}, \rho_{D})$ with respect to
the local coordinates $(q^i)$ and the basis $\{X_{\alpha}\}$.

Next, we will see that there exists a one-to-one correspondence
between ${\mathcal S}{\mathcal S}{\mathcal A}(D)$ and the set of
almost differentials on the vector bundle $\tau_{D}: D \to M$.

\begin{definition}\label{almostdiff}
\emph{An almost differential} on the vector bundle $\tau_{D}: D
\to Q$ is a $\R$-linear map
\[
d^{D}: \Gamma(\Lambda^{k}\tau_{D^*}) \to
\Gamma(\Lambda^{k+1}\tau_{D^*}), \; \; k\in \{0, \dots, n-1\}
\]
such that
\begin{equation}\label{Derivation}
d^{D}(\alpha \wedge \beta) = d^{D}\alpha \wedge \beta + (-1)^{k}
\alpha \wedge d^{D}\beta, \; \; \mbox{ for } \alpha \in
\Gamma(\Lambda^{k}\tau_{D^*}) \mbox{ and } \beta \in
\Gamma(\Lambda^{r}\tau_{D^*}).
\end{equation}

If $(d^{D})^{2} = 0$ then $d^{D}$ is said to be \emph{a
differential} on the vector bundle $\tau_{D}: D \to Q$.

\end{definition}

Denote by ${\mathcal A}{\mathcal D}(D)$ (respectively, ${\mathcal
D}(D)$) the set of almost differentials (respectively,
differentials) on the vector bundle $\tau_{D}: D \to Q$.

\begin{theorem}\label{ala-ad}
There exists a one-to-one correspondence $\Phi$ between the sets
${\mathcal S}{\mathcal S}{\mathcal A}(D)$ and ${\mathcal
A}{\mathcal D}(D)$. Under the bijection $\Phi$ the subset
${\mathcal L}{\mathcal A}(D)$ of ${\mathcal S}{\mathcal S}{\mathcal A}(D)$ corresponds with the subset ${\mathcal D}(D)$
of ${\mathcal A}{\mathcal D}(D)$. Moreover, we have:
\begin{enumerate}
\item
If $d^{D}$ is an almost differential on the vector bundle
$\tau_{D}: D \to Q$ then the corresponding skew-symmetric
algebroid structure $(\lcf \cdot, \cdot \rcf_{D}, \rho_{D})$ on
$D$ is characterized by the following conditions:
\begin{equation}\label{bracDrhoD}
\alpha(\lcf X, Y \rcf_{D}) = d^{D}(\alpha(Y))(X) -
d^{D}(\alpha(X))(Y) - (d^{D}\alpha)(X, Y), \; \; \rho_{D}(X)(f) =
(d^{D}f)(X),
\end{equation}
for $X, Y \in \Gamma(\tau_{D})$, $\alpha \in \Gamma(\tau_{D^*})$
and $f \in C^{\infty}(Q)$.

\item
If $(\lcf \cdot, \cdot \rcf_{D}, \rho_{D})$ is an skew-symmetric
algebroid structure on the vector bundle $\tau_{D}: D \to Q$ then
the corresponding almost differential $d^{D}$ is defined by
\begin{equation}\label{deD}
\begin{array}{rcl}
(d^{D}\alpha)(X_{0}, X_{1}, \dots, X_{k}) &=& \displaystyle
\sum_{i=0}^{k} (-1)^{i} \rho_{D}(X_{i})(\alpha(X_{0}, \dots,
\hat{X}_{i}, \dots, X_{k})) \\[5pt]
&& + \displaystyle \sum_{i < j} (-1)^{i+j} \alpha (\lcf X_{i},
X_{j} \rcf_{D}, X_{0}, X_{1}, \dots, \hat{X}_{i}, \dots,
\hat{X}_{j}, \dots, X_{k})
\end{array}
\end{equation}
for $\alpha \in \Gamma(\Lambda^{k}\tau_{D^*})$ and $X_{0}, \dots,
X_{k} \in \Gamma(\tau_{D})$.
\end{enumerate}
\end{theorem}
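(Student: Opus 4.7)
The plan is to construct the correspondence $\Phi$ and its inverse explicitly via the formulas (\ref{deD}) and (\ref{bracDrhoD}), verify that both constructions are well-defined, show they are mutually inverse, and finally check that the Jacobi identity on the bracket matches the differential condition $(d^D)^2=0$.

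Starting from a skew-symmetric algebroid $(\lcf\cdot,\cdot\rcf_D,\rho_D)$, I would define $d^D$ on functions by $(d^Df)(X)=\rho_D(X)(f)$ and extend to $k$-forms by the Koszul-type formula (\ref{deD}). The two verifications to perform are: (a) the right-hand side of (\ref{deD}) is $C^\infty(Q)$-linear in each argument $X_i$, so that it really defines a section of $\Lambda^{k+1}\tau_{D^*}$; and (b) $d^D$ satisfies the graded Leibniz rule (\ref{Derivation}). For (a) the key cancellation is between the derivation terms coming from $\rho_D(X_i)(f\,\alpha(\ldots))$ and the anchor contribution $\rho_D(X_j)(f)\,X_i$ appearing inside $\lcf X_i,fX_j\rcf_D$; this uses crucially the Leibniz rule from Definition~\ref{alas}(ii). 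For (b) it suffices (by $\R$-linearity) to verify (\ref{Derivation}) on a local basis of generators: on $0$-forms it reduces to $\rho_D(X)$ being a derivation, on $1$-forms it can be checked by hand from (\ref{deD}), and then higher degrees follow by induction on the degree of $\alpha$.

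For the reverse direction, given an almost differential $d^D$, I would take (\ref{bracDrhoD}) as the definition of the anchor and the bracket. The anchor $\rho_D(X)(f):=(d^Df)(X)$ is manifestly $C^\infty(Q)$-linear in $X$, and $\rho_D(X)$ is a derivation of $C^\infty(Q)$ because $d^D(fg)=(d^Df)g+f(d^Dg)$ by (\ref{Derivation}) with $k=0$. For the bracket, I would check that the right-hand side of the first equation in (\ref{bracDrhoD}) is $C^\infty(Q)$-linear in $\alpha$ (so that it dualizes to a section $\lcf X,Y\rcf_D\in\Gamma(\tau_D)$) and skew-symmetric in $X,Y$; both follow routinely from the Leibniz rule on $d^D$ applied to $\alpha$, $f\alpha$, etc. The Leibniz axiom $\lcf X,fY\rcf_D=f\lcf X,Y\rcf_D+\rho_D(X)(f)Y$ is then obtained by evaluating (\ref{bracDrhoD}) after replacing $Y$ by $fY$ and expanding $d^D(f\alpha(Y))$ via (\ref{Derivation}).

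That $\Phi$ and its inverse are truly inverse maps is essentially tautological at degree $1$: setting $k=1$ in (\ref{deD}) gives $(d^D\alpha)(X,Y)=\rho_D(X)(\alpha(Y))-\rho_D(Y)(\alpha(X))-\alpha(\lcf X,Y\rcf_D)$, which rearranges to (\ref{bracDrhoD}); so the two constructions are inverse on the level of brackets and anchors, and therefore also on the level of the differential, since a derivation on $\Gamma(\Lambda^\bullet\tau_{D^*})$ is determined by its action on $C^\infty(Q)$ and $\Gamma(\tau_{D^*})$. Finally, for the Lie algebroid / genuine differential part, one computes $(d^D)^2$ on $f\in C^\infty(Q)$ and obtains, after applying (\ref{deD}), a multiple of $\rho_D(\lcf X,Y\rcf_D)(f)-[\rho_D(X),\rho_D(Y)](f)$; and $(d^D)^2$ on $\alpha\in\Gamma(\tau_{D^*})$ yields the Jacobiator of $\lcf\cdot,\cdot\rcf_D$ paired with $\alpha$. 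Thus $(d^D)^2=0$ if and only if $\rho_D$ is a bracket-preserving morphism and $\lcf\cdot,\cdot\rcf_D$ satisfies Jacobi, i.e., precisely when $(\lcf\cdot,\cdot\rcf_D,\rho_D)$ is a Lie algebroid structure.

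The main obstacle is the bookkeeping in step (a) above, namely checking $C^\infty(Q)$-linearity of (\ref{deD}) in each $X_i$: every $f$ pulled out of an $X_i$ generates two types of extra terms (one from each $\rho_D(X_j)$ that hits $f$, and one from each bracket $\lcf X_j,fX_i\rcf_D$), and one must match these pairwise with the appropriate signs. This is the standard Chevalley--Eilenberg calculation adapted to the non-Jacobi setting, and while combinatorially heavy it is formal, so the argument poses no genuine conceptual difficulty.
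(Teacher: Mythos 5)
Your proposal is correct and follows essentially the same route as the paper: both directions of the correspondence are defined by the formulas (\ref{deD}) and (\ref{bracDrhoD}), the algebroid axioms and the graded Leibniz rule (\ref{Derivation}) are established by the standard Chevalley--Eilenberg-type computations (which the paper labels ``straightforward'' or delegates to the literature for the Jacobi $\Leftrightarrow (d^D)^2=0$ part), and the two constructions are seen to be mutually inverse from the degree $\leq 1$ formulas together with the fact that a degree-one derivation is determined by its action on functions and $1$-forms. The only cosmetic imprecision is that $(d^D)^2\alpha$ for a $1$-form $\alpha$ contains, besides the Jacobiator paired with $\alpha$, also anchor-defect terms of the form $\bigl([\rho_D(X),\rho_D(Y)]-\rho_D\lcf X,Y\rcf_D\bigr)(\alpha(\cdot))$, but since you treat it jointly with the computation of $(d^D)^2 f$ (and Jacobi plus the Leibniz axiom forces the anchor to be bracket-preserving), the stated equivalence is unaffected.
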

\begin{proof}
Let $d^{D}$ be an almost differential on $\tau_{D}: D \to Q$ and
suppose that $\rho_{D}$ and $\lcf \cdot, \cdot \rcf_{D}$ are given
by (\ref{bracDrhoD}). Then, using the fact that
\[
d^{D}(ff') = fd^{D}f' + f'd^{D}f, \; \; \; \mbox{ for } f, f' \in
C^{\infty}(Q),
\]
we deduce that $(\lcf \cdot, \cdot \rcf_{D}, \rho_{D})$ is an
skew-symmetric algebroid structure on $\tau_{D}: D \to Q$.
Moreover, it is well-known that if $(d^{D})^2 = 0$ then $\lcf
\cdot, \cdot \rcf_{D}$ satisfies the Jacobi identity and the pair
$(\lcf \cdot, \cdot \rcf_{D}, \rho_{D})$ is a Lie algebroid
structure on $\tau_{D}: D \to Q$ (see, for instance,
\cite{KoMa,Xu}).

Conversely, if $(\lcf \cdot, \cdot \rcf_{D}, \rho_{D})$ is an
skew-symmetric algebroid structure on $\tau_{D}: D \to Q$ and
$d^{D}$ is the operator defined by (\ref{deD}) then, it is clear
that,
\[
(d^{D}\alpha)(X, Y) = \rho_{D}(X)(\alpha(Y)) -
\rho_{D}(Y)(\alpha(X)) - \alpha \lcf X, Y \rcf_{D}, \; \; (d^{D}
f)(X) = \rho_{D}(X)(f),
\]
for $f \in C^{\infty}(M)$, $\alpha \in \Gamma(\tau_{D^*})$ and $X,
Y \in \Gamma(\tau_{D})$. In addition, an straightforward
computation proves that
\[
d^{D}(\alpha \wedge \beta) = d^{D}\alpha \wedge \beta + (-1)^{k}
\alpha \wedge d^{D}\beta, \; \; \mbox{ for } \alpha \in
\Gamma(\Lambda^{k}\tau_{D^*}) \mbox{ and } \beta \in
\Gamma(\Lambda^{r}\tau_{D^*}).
\]
Finally, it is well-known that if $\lcf \cdot, \cdot \rcf_{D}$
satisfies the Jacobi identity then $(d^{D})^{2} = 0$ (see, for
instance, \cite{Mac}).
\end{proof}

Let $(\lcf \cdot, \cdot \rcf_{D}, \rho_{D})$ be a skew-symmetric
algebroid structure on the vector bundle $\tau_{D}: D \to Q$ and
$d^{D}$ be the corresponding almost differential. If $(q^i)$ are
local coordinates on an open subset $U$ of $Q$ and
$\{X_{\alpha}\}$ is a basis of sections of the vector bundle
$\tau_{D}^{-1}(U) \to U$ such that $C_{\alpha\beta}^{\gamma}$ and
$\rho_{\alpha}^{j}$ are the local structure functions of the
skew-symmetric algebroid structure, then
\[
d^{D}x^i = \rho_{\alpha}^{i}X^{\alpha}, \; \; d^{D}X^{\gamma} =
\displaystyle -\frac{1}{2} C_{\alpha\beta}^{\gamma}X^{\alpha}
\wedge X^{\beta},
\]
for all $i$ and $\gamma$.

{}From Theorems \ref{alp-ala} and \ref{ala-ad}, we conclude the
following result
\begin{theorem}
Let $\tau_{D}: D \to Q$ be a vector bundle over a manifold $Q$ and
$D^*$ be its dual vector bundle. Then, there exists a one-to-one
correspondence between the set ${\mathcal L}{\mathcal A}{\mathcal
P}(D^*)$ of linear almost Poisson structures on $D^*$, the set
${\mathcal S}{\mathcal S}{\mathcal A}(D)$ of skew-symmetric
algebroid structures on $\tau_{D}: D \to Q$ and the set ${\mathcal
A}{\mathcal D}(D)$ of almost differentials on this vector bundle.
\end{theorem}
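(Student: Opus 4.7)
The plan is to deduce this theorem directly by composing the two bijections already established in Theorems \ref{alp-ala} and \ref{ala-ad}. Concretely, I would define
\[
\Phi \circ \Psi : \mathcal{L}\mathcal{A}\mathcal{P}(D^*) \longrightarrow \mathcal{A}\mathcal{D}(D)
\]
by first applying the bijection $\Psi$ from Theorem \ref{alp-ala}, which sends a linear almost Poisson structure $\{\cdot,\cdot\}_{D^*}$ to the unique skew-symmetric algebroid structure $(\lcf\cdot,\cdot\rcf_D,\rho_D)$ characterized by equations \eqref{alafromalp}, and then applying the bijection $\Phi$ from Theorem \ref{ala-ad}, which sends that skew-symmetric algebroid structure to the almost differential $d^D$ defined by the Koszul-type formula \eqref{deD}. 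Since both $\Psi$ and $\Phi$ are bijections, so is their composition, giving the three-way correspondence between $\mathcal{L}\mathcal{A}\mathcal{P}(D^*)$, $\mathcal{S}\mathcal{S}\mathcal{A}(D)$ and $\mathcal{A}\mathcal{D}(D)$.

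Next, I would observe that the claim concerning the distinguished subsets is likewise automatic: Theorem \ref{alp-ala} identifies $\mathcal{L}\mathcal{P}(D^*)$ with $\mathcal{L}\mathcal{A}(D)$ under $\Psi$, and Theorem \ref{ala-ad} identifies $\mathcal{L}\mathcal{A}(D)$ with $\mathcal{D}(D)$ under $\Phi$. Composing these, the Poisson structures on $D^*$ correspond exactly with genuine differentials $d^D$ on $\tau_D : D \to Q$, recovering in particular that $(d^D)^2 = 0$ is equivalent to the Jacobi identity for the bracket on $\Gamma(\tau_D)$, which is in turn equivalent to the Jacobi identity for $\{\cdot,\cdot\}_{D^*}$.

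There is essentially no obstacle here beyond bookkeeping, since all the analytic content sits in the two preceding theorems; the only thing worth remarking explicitly is that the construction is compatible with restriction to open subsets of $Q$, which is what allows one to glue local linear almost Poisson brackets into a global one (as was already used in the proof of Theorem \ref{alp-ala}). I would state the theorem simply as a corollary of Theorems \ref{alp-ala} and \ref{ala-ad}, and include a one-line proof noting that the desired bijection is $\Phi \circ \Psi$ and that the subset correspondences follow from the respective subset correspondences in each of the two theorems.
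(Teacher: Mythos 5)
Your proposal is correct and coincides with how the paper obtains this result: the theorem is stated there as an immediate consequence of Theorems \ref{alp-ala} and \ref{ala-ad}, i.e.\ precisely the composition $\Phi\circ\Psi$ of the two bijections, with the subset correspondences (Poisson structures, Lie algebroid structures, genuine differentials) following from the corresponding statements in those theorems. No further argument is needed.
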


\section{Skew-symmetric algebroids and the orbit theorem}\label{orbit-theorem}

Let $(D, \lcf \cdot , \cdot \rcf_{D}, \rho_{D})$ be a
skew-symmetric algebroid over $Q$ and $d^D$ be the corresponding
almost differential.

We can consider the vector space over $\R$
\[
H^0(d^D) = \{f \in C^{\infty}(Q) / d^D f = 0\}.
\]
Note that if $D$ is a Lie algebroid we have that $H^{0}(d^D)$ is
the $0$ Lie algebroid cohomology group associated with $D$.

On the other hand, it is clear that if $Q$ is connected and $D$ is
a transitive skew-symmetric algebroid, that is,
\[
\rho_D(D_{q}) = T_{q}Q, \; \; \mbox{ for all } q\in Q
\]
then
\begin{equation}\label{Poincare}
H^0(d^D) \simeq \R.
\end{equation}
Condition (\ref{Poincare}) will play an important role in Section
\ref{subsection3.1}.

Next, we will see that (\ref{Poincare}) holds if the
skew-symmetric algebroid is completely nonholonomic with connected
base space.

Let $\tilde{D}$ be the generalized distribution on $Q$ whose
characteristic space at the point $q\in Q$ is
\[
\tilde{D}_{q} = \rho_{D}(D_{q}).
\]
It is clear that $\tilde{D}$ is finitely generated. Note that the
$C^{\infty}$-module $\Gamma(D)$ is finitely generated (see
\cite{GrHaVa}).

Now, denote by $\mathrm{Lie}^{\infty}(\tilde{D})$ the smallest Lie
subalgebra of ${\frak X}(Q)$ containing $\tilde{D}$. Then
$\mathrm{Lie}^{\infty}(\tilde{D})$ is comprised of finite
$\R$-linear combinations of vector fields of the form
\[
[\tilde{X}_{k},[\tilde{X}_{k-1}, \dots [\tilde{X}_{2},
\tilde{X}_{1}]\dots ]]
\]
with $k\in \mathbb{N}$, $k\neq 0$, and $\tilde{X}_{1}, \dots ,
\tilde{X}_{k} \in {\frak X}(Q)$ satisfying
\[
\tilde{X}_{l}(q) \in \tilde{D}_{q}, \; \; \mbox{ for all } q\in Q
\]
(see \cite{AgSa}).

For each $q\in Q$, we will consider the vector subspace $\mathrm{Lie}_q^{\infty}(\tilde{D})$ of $T_{q}Q$ given by
\[
\mathrm{Lie}_{q}^{\infty}(\tilde{D}) = \{\tilde{X}(q) \in T_{q}Q /
\tilde{X} \in \mathrm{Lie}^{\infty}(\tilde{D})\}.
\]
Then, the assignment
\[
q \in Q \to \mathrm{Lie}_{q}^{\infty}(\tilde{D}) \subseteq T_{q}Q
\]
defines a generalized foliation on $Q$. The leaf $L$ of this
foliation over the point $q_{0} \in Q$ is \emph{the orbit of
$\tilde{D}$} over the point $q_{0}$, that is,
\[
L = \{(\phi_{\tilde{t}_{k}}^{\tilde{X}_{k}} \circ \dots \circ
\phi_{\tilde{t}_{1}}^{\tilde{X}_{1}})(q_{0}) \in Q / \tilde{t}_{l}
\in \R, \tilde{X}_{l} \in {\frak X}(Q) \mbox{ and }
\tilde{X}_{l}(q) \in \tilde{D}_{q}, \mbox{ for all } q\in Q \}.
\]
Here, $\phi_{\tilde{t}_{l}}^{\tilde{X}_{l}}$ is the flow of the
vector field $\tilde{X}_{l}$ at the time $\tilde{t}_{l}$ (for more
details, see \cite{AgSa}).

\begin{definition}\label{complete-nonholo}
The skew-symmetric algebroid $(D, \lcf \cdot, \cdot \rcf_{D},
\rho_{D})$ over $Q$ is said to be \emph{completely nonholonomic}
if
\[
\mathrm{Lie}_{q}^{\infty}(\rho_{D}(D)) = \mathrm{Lie}_{q}^{\infty}(\tilde{D}) = T_{q}Q, \; \; \mbox{ for all } q\in
Q.
\]
\end{definition}
Thus, if $Q$ is a connected manifold, it follows that $D$ is
completely nonholonomic if and only if the orbit of $\tilde{D}$
over any point $q_{0} \in Q$ is $Q$.

\begin{remark}{\rm
\begin{enumerate}
\item
Definition \ref{complete-nonholo} may be extended for anchored
vector bundles. A vector bundle $\tau_{D}: D \to Q$ over $Q$ is
said to be \emph{anchored} if it admits an anchor map, that is, a
vector bundle morphism $\rho_{D}: D \to TQ$. In such a case, the
vector bundle is said to be completely nonholonomic if $\mathrm{Lie}_{q}^{\infty}(\rho_{D}(D)) = \mathrm{Lie}_{q}^{\infty}(\tilde{D}) = T_{q}Q$, for all $q\in Q$.
\item
If $D$ is a regular distribution on a manifold $Q$ then the
inclusion map $i_{D}: D \to TQ$ is an anchor map for the vector
bundle $\tau_{D}: D \to Q$. Moreover, the anchored vector bundle
$\tau_{D}: D \to Q$ is completely nonholonomic if and only if the
distribution $D$ is completely nonholonomic in the classical sense
of Vershik and Gershkovich \cite{VeGe}.  In this sense it is
formulated in  the literature the classical \emph{Rashevsky-Chow
theorem}: If $\mathrm{Lie}_q^{\infty}(D)=T_qQ$, for all $q\in Q$,
then each orbit is equal to the whole manifold $Q$.
\end{enumerate}
}
\end{remark}
Now, we deduce the following result
\begin{proposition}
If the skew-symmetric algebroid $(D, \lcf \cdot, \cdot \rcf_{D},
\rho_{D})$ over $Q$ is completely nonholonomic and $Q$ is
connected then $H^{0}(d^D) \simeq \R$.
\end{proposition}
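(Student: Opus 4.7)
The plan is to characterize $H^0(d^D)$ directly in terms of constancy along vector fields, then to invoke the orbit theorem together with the complete-nonholonomy assumption to force such functions to be globally constant. The main technical input is that the annihilator of a function is closed under Lie brackets; the rest is bookkeeping.

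First, I would unpack the condition $d^D f = 0$. By part~(ii) of Theorem~\ref{ala-ad}, for any $X \in \Gamma(\tau_D)$ we have $(d^D f)(X) = \rho_D(X)(f)$. Hence $f \in H^0(d^D)$ if and only if $\rho_D(X)(f) = 0$ for every $X \in \Gamma(\tau_D)$. Since $\tilde{D}_q = \rho_D(D_q)$, this is equivalent to saying that $\tilde{X}(f) = 0$ for every vector field $\tilde{X} \in \mathfrak{X}(Q)$ taking values in the characteristic distribution $\tilde{D}$.

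Next, I would observe that the set of vector fields annihilating a fixed $f \in C^\infty(Q)$ is a Lie subalgebra of $\mathfrak{X}(Q)$: if $\tilde{X}(f) = 0$ and $\tilde{Y}(f) = 0$, then
\[
[\tilde{X}, \tilde{Y}](f) = \tilde{X}(\tilde{Y}(f)) - \tilde{Y}(\tilde{X}(f)) = 0.
\]
Iterating, every iterated bracket of vector fields taking values in $\tilde{D}$ also annihilates $f$. Consequently, every vector field in $\mathrm{Lie}^\infty(\tilde{D})$ annihilates $f$, so in particular $\tilde{X}(q)(f) = 0$ for every $q \in Q$ and every $\tilde{X} \in \mathrm{Lie}^\infty(\tilde{D})$.

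Now I would apply the orbit theorem (as recalled from \cite{AgSa}). The orbit $L$ of $\tilde{D}$ through a point $q_0$ is an immersed submanifold of $Q$ whose tangent space at each $q \in L$ coincides with $\mathrm{Lie}_q^\infty(\tilde{D})$. By the complete-nonholonomy hypothesis, $\mathrm{Lie}_q^\infty(\tilde{D}) = T_q Q$ for every $q \in Q$, so each orbit is open in $Q$. Since the orbits partition $Q$ and $Q$ is connected, there is a single orbit, namely $Q$ itself. In particular, $Q$ is generated by flows of vector fields in $\mathrm{Lie}^\infty(\tilde{D})$, along all of which $f$ has been shown to be infinitesimally constant.

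Finally, I would conclude: for any $q \in Q$, write $q = (\phi_{\tilde{t}_k}^{\tilde{X}_k} \circ \cdots \circ \phi_{\tilde{t}_1}^{\tilde{X}_1})(q_0)$ with $\tilde{X}_l$ taking values in $\tilde{D}$. Since each $\tilde{X}_l$ annihilates $f$, the function $f$ is constant along each flow $\phi_{t}^{\tilde{X}_l}$, so $f(q) = f(q_0)$. Thus every $f \in H^0(d^D)$ is constant on $Q$, and the map $H^0(d^D) \to \R$, $f \mapsto f(q_0)$ is an isomorphism. The only delicate point is the appeal to the orbit theorem to turn the infinitesimal statement (annihilation by $\mathrm{Lie}^\infty(\tilde{D})$) into the integral conclusion that $f$ is constant on the orbit; everything else is routine.
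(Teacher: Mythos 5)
Your proof is correct, but it reaches the conclusion by a different mechanism than the paper. The paper's own argument is purely infinitesimal: from $d^Df=0$ one gets $\tilde{X}(f)=0$ for every vector field $\tilde{X}$ with values in $\tilde{D}$, hence (by the same bracket computation you give) every iterated bracket of such fields annihilates $f$; complete nonholonomy makes these iterated brackets span $T_{q_0}Q$ at every point, so $df$ vanishes identically, and connectedness of $Q$ then forces $f$ to be constant --- no integral or orbit structure is invoked at all. You instead pass through the integral form of the orbit theorem: complete nonholonomy makes each orbit open (only the inclusion $\mathrm{Lie}_{q}^{\infty}(\tilde{D})\subseteq T_q(\mbox{orbit})$ is needed here, which is the uncontroversial direction of that theorem), connectedness gives a single orbit equal to $Q$, and $f$ is constant along the flows of $\tilde{D}$-valued fields that generate the orbit. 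Note that in your version the Lie-subalgebra paragraph is actually superfluous: your final flow argument uses only the first-order fact $\tilde{X}(f)=0$ for $\tilde{D}$-valued $\tilde{X}$, brackets entering only through the hypothesis that makes orbits open. What each route buys: the paper's is shorter and self-contained, needing nothing beyond the definition of $\mathrm{Lie}^{\infty}(\tilde{D})$ and local constancy plus connectedness; your flow argument, read without the completeness hypothesis, directly shows that any $f$ with $d^Df=0$ is constant on every orbit of $\tilde{D}$, which is essentially part (ii) of Theorem \ref{orb-theo-Lieal} --- a statement the paper instead obtains later by restricting the skew-symmetric algebroid to a leaf and applying the present proposition there.
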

\begin{proof}
Suppose that $f \in C^{\infty}(Q)$ and that $d^{D}f = 0$. Let
$q_{0}$ be a point of $Q$. We must prove that
\[
\tilde{v}(f) = 0, \; \; \mbox{ for all } \tilde{v} \in T_{q_{0}}Q.
\]
The condition $(d^Df)(q_{0}) = 0$ implies that $\tilde{v}(f) = 0$,
for all $\tilde{v} \in \tilde{D}_{q_{0}}$.

Thus, first we have that
\[
0 = \tilde{X}_{1}(\tilde{X}_{2}(f)) -
\tilde{X}_{2}(\tilde{X}_{1}(f)) = [\tilde{X}_{1},
\tilde{X}_{2}](f),
\]
for $\tilde{X}_{1}, \tilde{X}_{2} \in {\frak X}(Q)$ and
$\tilde{X}_{1}(q), \tilde{X}_{2}(q) \in \tilde{D}_{q}$, for all
$q\in Q$.

Secondly, since $D$ is completely nonholonomic then
$\mathrm{Lie}^{\infty}_{q_{0}}(\tilde{D}) = T_{q_{0}}Q$.
Therefore, there exists a finite sequence of vector fields on $Q$,
$\tilde{X}_{1}, \dots , \tilde{X}_{k}$ such that $\tilde{X}_{i}(q)
\in \tilde{D}_{q}$, for all $i \in \{1, \dots , k\}$ and $q \in
Q$,
\[
\tilde{v} = [\tilde{X}_{k},[\tilde{X}_{k-1}, \dots [\tilde{X}_{2},
\tilde{X}_{1}] \dots ]](q_{0}).
\]
{}From both considerations, we deduce the result.
\end{proof}
However, the condition $H^{0}(d^{D}) \simeq \R$ does not imply, in
general, that the skew-symmetric algebroid $D$ is completely
nonholonomic.

In fact, let $D$ be the tangent bundle to $\R^2$
\[
\tau_{T\R^2}: T\R^2 \to \R^2.
\]
If $(x, y)$ are the standard coordinates on $\R^2$, it follows
that $\{X_1 = \displaystyle \frac{\partial}{\partial x}, X_{2}=
\displaystyle \frac{\partial}{\partial y}\}$ is a global basis of
$\Gamma(T\R^2) = {\frak X}(\R^2)$. So, we can consider the
skew-symmetric algebroid structure $(\lcf \cdot, \cdot
\rcf_{T\R^2}, \rho_{T\R^2})$ on $T\R^2$ which is characterized by
the following conditions
\[
\lcf X_{1}, X_{2}\rcf_{T\R^2} = 0, \; \; \rho_{T\R^2}(X_{1}) =
\displaystyle \frac{\partial}{\partial x}, \; \;
\rho_{T\R^2}(X_{2}) = xy\frac{\partial}{\partial y}.
\]
Then, the generalized distribution $\tilde{D} = \widetilde{T\R^2}$
on $\R^2$ is generated by the vector fields
\[
\tilde{X}_{1} = \displaystyle \frac{\partial}{\partial x}, \; \;
\tilde{X}_{2} = xy \displaystyle \frac{\partial}{\partial y}.
\]
Thus, the Lie subalgebra $\mathrm{Lie}^{\infty}(\tilde{D})$ of
$T\R^2$ is generated by the vector fields
\[
\tilde{X}_{1} = \displaystyle \frac{\partial}{\partial x}, \; \
\tilde{X}_{2} = \displaystyle xy \frac{\partial}{\partial y}, \;
\; \tilde{X}_{3} = \displaystyle y \frac{\partial}{\partial y}.
\]
This implies that
\[
\mathrm{Lie}_{(x_0, y_0)}^{\infty}(\tilde{D}) \neq T_{(x_{0},
y_{0})}\R^2, \; \; \mbox{ if } y_{0} = 0.
\]
However, if $f \in C^{\infty}(\R^2)$ and $d^Df = 0$, we deduce
that
\[
\displaystyle \frac{\partial f}{\partial x} = 0, \; \; \;
\displaystyle xy\frac{\partial f}{\partial y} = 0.
\]
Consequently, using that $f \in C^{\infty}(\R^2)$, we obtain that
$f$ is constant.

Next, we will discuss the case when the generalized foliation
$\mathrm{Lie}^{\infty}(\tilde{D}) \neq TQ$. In fact, we will prove
the following result.
\begin{theorem}\label{orb-theo-Lieal}
Let $(D, \lcf \cdot , \cdot \rcf_{D}, \rho_{D})$ be a
skew-symmetric algebroid over a manifold $Q$ and $f$ be a real
$C^{\infty}$-function on $Q$ such that $d^D f = 0$. Suppose that
$L$ is an orbit of $\tilde{D}$ and that $D_{L}$ is the vector
bundle over $L$ given by $D_{L} = \cup_{q\in L}D_{q} =
\tau_{D}^{-1}(L)$. Then:
\begin{enumerate}
\item
The couple $(\lcf \cdot, \cdot \rcf_{D}, \rho_{D})$ induces a
skew-symmetric algebroid structure $(\lcf \cdot , \cdot
\rcf_{D_{L}}, \rho_{D_{L}})$ on the vector bundle $\tau_{D_{L}}:
D_{L} \to L$ and the skew-symmetric algebroid $(D_{L}, \lcf \cdot
, \cdot \rcf_{D_{L}}, \rho_{D_{L}})$ is completely nonholonomic.
\item
The restriction of $f$ to $L$ is constant.
\end{enumerate}
\end{theorem}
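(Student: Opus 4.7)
The plan is to establish part (i) directly and then deduce part (ii) by combining it with the proposition on completely nonholonomic skew-symmetric algebroids proved earlier in this section.

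For part (i), the key geometric input is that every vector field $\tilde{X} \in {\frak X}(Q)$ with $\tilde{X}(q) \in \tilde{D}_q$ for all $q$ is tangent to the orbit $L$, regarded as an immersed submanifold of $Q$; this is built into the construction of $L$ as a $\tilde{D}$-orbit, since by definition $L$ is saturated by the flows of such vector fields. Consequently, for every section $X \in \Gamma(\tau_D)$ the vector field $\rho_D(X)$ restricts to a vector field on $L$, so we may set $\rho_{D_L}(X|_L) := \rho_D(X)|_L$. The bracket on $\Gamma(\tau_{D_L})$ is then defined by local extension: given $X, Y \in \Gamma(\tau_{D_L})$, we extend them locally to sections $\tilde{X}, \tilde{Y}$ of $\tau_D$ on a neighborhood in $Q$ of a point of $L$ and set $\lcf X, Y\rcf_{D_L} := \lcf \tilde{X}, \tilde{Y}\rcf_D|_L$.

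The technical heart of the construction is verifying that this is independent of the chosen extensions, which reduces to showing $\lcf \tilde{X}, \tilde{Y}\rcf_D|_L = 0$ whenever $\tilde{X}|_L = 0$. Writing $\tilde{X} = \sum_i f^i X_i$ locally with $f^i|_L = 0$ and $\{X_i\}$ a local basis of $D$, the Leibniz rule gives
\[
\lcf \tilde{X}, \tilde{Y}\rcf_D = \sum_i \left( f^i \lcf X_i, \tilde{Y}\rcf_D - \rho_D(\tilde{Y})(f^i) X_i \right),
\]
and both summands vanish on $L$: the first because $f^i|_L = 0$, the second because $\rho_D(\tilde{Y})|_L$ is tangent to $L$, so it differentiates $f^i$ only along directions tangent to $L$, where $f^i$ vanishes. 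Skew-symmetry and the Leibniz rule for $(\lcf \cdot, \cdot \rcf_{D_L}, \rho_{D_L})$ are then inherited directly from those of $(\lcf \cdot, \cdot\rcf_D, \rho_D)$. To see that $(D_L, \lcf \cdot, \cdot\rcf_{D_L}, \rho_{D_L})$ is completely nonholonomic, we note that the characteristic distribution $\widetilde{D_L}$ on $L$ coincides pointwise with $\tilde{D}|_L$; since vector fields in $\tilde{D}$ are tangent to $L$, so are all their iterated Lie brackets, hence $\mathrm{Lie}^{\infty}_q(\widetilde{D_L}) = \mathrm{Lie}^{\infty}_q(\tilde{D})$ for $q \in L$. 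On the other hand, the identification between $\tilde{D}$-orbits and leaves of the foliation defined by $\mathrm{Lie}^{\infty}(\tilde{D})$, which is used in this section to describe $L$, gives $T_qL = \mathrm{Lie}^{\infty}_q(\tilde{D})$. Combining these equalities yields $T_qL = \mathrm{Lie}^{\infty}_q(\widetilde{D_L})$, the required complete nonholonomicity.

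Finally, for part (ii), Theorem \ref{ala-ad} gives $\rho_D(X)(f) = (d^Df)(X) = 0$ for every $X \in \Gamma(\tau_D)$, which after restriction to $L$ reads $\rho_{D_L}(X|_L)(f|_L) = 0$; equivalently $d^{D_L}(f|_L) = 0$. Since $L$ is connected and $D_L$ is completely nonholonomic by (i), the proposition already proved in this section forces $f|_L$ to be constant. I expect the main obstacle to lie in the independence-of-extension step for the bracket restriction in (i), where the tangency of $\rho_D(\tilde{Y})$ to $L$ is essential for the second summand above to vanish on $L$.
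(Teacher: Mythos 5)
Your proposal is correct and follows essentially the same route as the paper: restriction of the anchor using $\tilde{D}_q \subseteq \mathrm{Lie}_q^{\infty}(\tilde{D}) = T_qL$, definition of the bracket by local extension with the same Leibniz-rule computation (using $f^\alpha|_L=0$ and tangency of $\rho_D(Y)$ to $L$) to prove independence of the extension, the same identification $\mathrm{Lie}_q^{\infty}(\widetilde{D_L}) = \mathrm{Lie}_q^{\infty}(\tilde{D}) = T_qL$ for complete nonholonomicity, and finally $d^{D_L}(f|_L)=0$ plus the earlier proposition (with $L$ connected) to conclude (ii). No gaps worth noting.
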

\begin{proof}
It is clear that
\[
\rho_{D}(D_{q}) = \tilde{D}_{q} \subseteq \mathrm{Lie}_{q}^{\infty}(\tilde{D}) = T_{q}L, \; \; \mbox{ for all } q\in
L.
\]
Thus, we have a vector bundle morphism
\[
\rho_{D_{L}}: D_{L} \to TL.
\]
More precisely, $\rho_{D_{L}} = (\rho_{D})_{|D_{L}}$.

On the other hand, we may define a $\R$-bilinear skew-symmetric
bracket
\[
\lcf \cdot , \cdot \rcf_{D_{L}}: \Gamma(D_{L}) \times
\Gamma(D_{L}) \to \Gamma(D_{L}).
\]
In fact, if $X_{L}, Y_{L} \in \Gamma(D_{L})$ then
\[
\lcf X_{L}, Y_{L} \rcf_{D_{L}}(q) = \lcf X, Y \rcf_{D}(q), \; \;
\mbox{ for all } q \in L
\]
where $X, Y$ are sections of $\tau_{D}: D \to Q$ such that
\[
X_{|U\cap L} = (X_{L})_{|U\cap L}, \; \; Y_{|U\cap L} =
(Y_{L})_{|U\cap L},
\]
with $U$ an open subset of $Q$ and $q\in U$.

Note that the condition
\[
\mathrm{Lie}_{q}^{\infty}(\tilde{D}) = T_{q}L, \; \; \mbox{ for
all } q\in L,
\]
implies that
\[
(\lcf X, Y \rcf_{D})_{|V\cap L} = 0
\]
for $X, Y \in \Gamma(D)$, with $V$ an open subset of $Q$ and
\[
X(q) = 0, \; \; \mbox{ for all } q \in V \cap L.
\]
Therefore, the map $\lcf \cdot, \cdot \rcf_{D_{L}}$ is
well-defined. We remark that if $q \in V\cap L$ and
$\{X_{\alpha}\}$ is a local basis of $\Gamma(\tau_{D})$ in an open
susbset $W$ of $Q$, with $q\in W$, such that $X=
f^{\alpha}X_{\alpha}$ in $W$ then, using that
$(f^{\alpha})_{|V\cap W \cap L} = 0$ and that $\rho_{D}(Y)(q) \in
T_{q}L$, we deduce that
\[
\lcf X, Y \rcf_{D}(q) = f^{\alpha}(q) \lcf X_{\alpha}, Y \rcf_{D}
(q) - \rho_{D}(Y)(q)(f^{\alpha}) X_{\alpha}(q) = 0.
\]

 Moreover, the couple $(\lcf \cdot , \cdot \rcf_{D_{L}},
\rho_{D_{L}})$ is a skew-symmetric algebroid structure on the
vector bundle $\tau_{D_{L}}: D_{L} \to L$.

In addition, it is clear that
\[
(\widetilde{D_{L}})_{q} = \rho_{D_{L}}((D_{L})_{q}) =
\tilde{D}_{q}, \; \; \;
\mathrm{Lie}_{q}^{\infty}(\widetilde{D_{L}}) =
\mathrm{Lie}_{q}^{\infty}(\tilde{D})
\]
for all $q\in L$. Thus, we have that the skew-symmetric algebroid
$(D_{L}, \lcf \cdot , \cdot \rcf_{D_{L}}, \rho_{D_{L}})$ is
completely nonholonomic.

This proves (i).

On the other hand, it follows that the condition $d^D f = 0$
implies that
\[
d^{D_{L}}(f_{|L}) = 0
\]
and, since $H^0(d^{D_{L}}) \simeq \R$ (as a consequence of the
first part of the theorem), we conclude that $f$ is constant on
$L$.
\end{proof}

It will be also interesting to characterize under what conditions
there exist functions $f\in C^{\infty}(Q)$ such that $(d^D)^2
f=0$. Using Equations (\ref{bracDrhoD}) we easily deduce that:
\[
\left((d^D)^2 f\right)(X,Y)=\left(
[\rho_D(X),\rho_D(Y)]-\rho_D\lcf X, Y\rcf_D\right) f
\]
for all $X, Y\in \Gamma(\tau_D)$. Now, consider the generalized
distribution $\bar{D}$ on $Q$ whose characteristic space
$\bar{D}_{q}$ at the point $q\in Q$ is:
\[
\bar{D}_{q}=\left\{ ([\rho_D(X),\rho_D(Y)]-\rho_D\lcf X,
Y\rcf_D)(q) \;/\; X, Y\in \Gamma(\tau_D) \right\}.
\]
It is also clear that $\bar{D}$ is finitely generated. Denote by
$\mathrm{Lie}^{\infty}(\bar{D})$ the smallest Lie subalgebra of
${\frak X}(Q)$ containing $\bar{D}$. Observe that
$\mathrm{Lie}^{\infty}(\bar{D}) \subseteq
\mathrm{Lie}^{\infty}(\tilde{D})$. We deduce that  $(d^D)^2 f=0$
if and only if $f$ is constant on any orbit $\bar{L}$ of
$\bar{D}$. Of course if $\bar{D}$ is completely nonholonomic then
the unique functions $f\in C^{\infty}(Q)$ satisfying  $(d^D)^2
f=0$ are $f=\hbox{constant}$, but it has not to be always the case
and it may useful to find this particular type of functions in
concrete examples.

\section{Linear almost Poisson structures and Hamilton-Jacobi
equation}\label{section3}

\subsection{Linear almost Poisson structures and Hamiltonian
systems}\label{subsection3.1}

In this section, we will consider Hamiltonian systems associated
with a linear almost Poisson structure on the dual bundle $D^*$ to
a vector bundle and with a Hamiltonian function on $D^*$. Thus,
the ingredients of our theory are:
\begin{enumerate}
\item
A vector bundle $\tau_{D}: D \to Q$ of rank $n$ over a manifold
$Q$ of dimension $m$;
\item
A linear almost Poisson structure $\{\cdot, \cdot\}_{D^*}$ on
$D^*$ and
\item
A Hamiltonian function $h: D^* \to \R$ on $D^*$.
\end{enumerate}
The triplet $(D, \{\cdot, \cdot\}_{D^*}, h)$ is said to be \emph{a
Hamiltonian system}.

We will denote by $\Lambda_{D^*}$ the linear almost Poisson
$2$-vector associated with $\{\cdot, \cdot\}_{D^*}$. Then, we may
introduce the vector field ${\mathcal H}_{h}^{\Lambda_{D^*}}$ on
$D^*$ given by
\[
{\mathcal H}_{h}^{\Lambda_{D^*}} = -i(dh)\Lambda_{D^*}.
\]

${\mathcal H}_{h}^{\Lambda_{D^*}}$ is called \emph{the Hamiltonian
vector field} of $h$ with respect to $\Lambda_{D^*}$. The integral
curves of ${\mathcal H}_{h}^{\Lambda_{D^*}}$ are the solutions of
\emph{the Hamilton equations} for $h$.

Now, suppose that $(q^{i})$ are local coordinates on an open
subset $U$ of $Q$ and that $\{X_{\alpha}\}$ is a basis of the
space of sections of the vector bundle $\tau_{D}^{-1}(U) \to U$.
Denote by $(q^i, p_{\alpha})$ the corresponding local coordinates
on $D^*$ and by $C_{\alpha \beta}^{\gamma}$ and $\rho^i_{\alpha}$
the local structure functions (with respect to the coordinates
$(q^j)$ and to the basis $\{X_{\alpha}\}$) of the corresponding
skew-symmetric algebroid structure on $D$. Then, using
(\ref{LambdaDstar}), it follows that
\begin{equation}\label{HhLambda}
{\mathcal H}^{\Lambda_{D^*}}_{h} = \rho^i_{\alpha} \displaystyle
\frac{\partial h}{\partial p_{\alpha}} \frac{\partial}{\partial
q^{i}} - (\rho^{i}_{\alpha} \frac{\partial h}{\partial q^i} +
C_{\alpha\beta}^{\gamma}p_{\gamma} \frac{\partial h}{\partial
p_{\beta}}) \frac{\partial}{\partial p_{\alpha}}.
\end{equation}
Therefore, the Hamilton equations are
\[
\displaystyle \frac{dq^i}{dt} = \rho^i_{\alpha} \frac{\partial
h}{\partial p_{\alpha}}, \; \; \; \frac{dp_{\alpha}}{dt} = -
(\rho^{i}_{\alpha} \frac{\partial h}{\partial q^i} +
C_{\alpha\beta}^{\gamma}p_{\gamma} \frac{\partial h}{\partial
p_{\beta}}).
\]

\subsection{Hamiltonian systems and Hamilton-Jacobi equation}

Let $(D, \{\cdot, \cdot\}_{D^*}, h)$ be a Hamiltonian system and
$\alpha: Q \to D^*$ be a section of the vector bundle $\tau_{D^*}:
D^* \to Q$.

If ${\mathcal H}_{h}^{\Lambda_{D^*}}$ is the Hamiltonian vector
field of $h$ with respect to $\{\cdot, \cdot\}_{D^*}$, we may
introduce the vector field ${\mathcal
H}_{h,\alpha}^{\Lambda_{D^*}}$ on $Q$ given by
\[
{\mathcal H}_{h,\alpha}^{\Lambda_{D^*}}(q) =
(T_{\alpha(q)}\tau_{D^ *})({\mathcal
H}_{h}^{\Lambda_{D^*}}(\alpha(q))), \; \; \mbox{ for } q\in Q.
\]
{} From (\ref{HhLambda}), it follows that
\begin{equation}\label{HhFolcharac}
{\mathcal H}_{h,\alpha}^{\Lambda_{D^*}}(q) \in \rho_{D}(D_{q}), \;
\; \mbox{ for all } q \in Q,
\end{equation}
where $(\lcf \cdot, \cdot \rcf_{D}, \rho_{D})$ is the induced
skew-symmetric algebroid structure on the vector bundle $\tau_{D}:
D \to Q$.

Then, the aim of this section is to prove the following result.

\begin{theorem}\label{maintheorem}
Let $(D, \{\cdot, \cdot\}_{D^*}, h)$ be a Hamiltonian system and
$\alpha: Q \to D^*$ be a section of the vector bundle $\tau_{D^*}:
D^* \to Q$ such that $d^D\alpha = 0$. Under these hypotheses, the
following conditions are equivalent:
\begin{enumerate}
\item
If $c: I \to Q$ is an integral curve of the vector field
${\mathcal H}_{h,\alpha}^{\Lambda_{D^*}}$, that is,
\[
\dot{c}(t) = (T_{\alpha(c(t))}\tau_{D^*})({\mathcal
H}_{h}^{\Lambda_{D^*}}(\alpha(c(t)))), \; \; \mbox{ for all } t\in
I,
\]
then $\alpha \circ c: I \to D^*$ is a solution of the Hamilton
equations for $h$.

\item
$\alpha$ satisfies \emph{the Hamilton-Jacobi equation}
\[
d^{D}(h \circ \alpha) = 0.
\]
\end{enumerate}
\end{theorem}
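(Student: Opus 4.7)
The plan is to show that conditions (i) and (ii) are both locally equivalent to the same family of scalar identities on $Q$, and to exploit the hypothesis $d^D\alpha=0$ as the bridge between them. First I would reformulate (i) intrinsically: since $\mathcal{H}_{h,\alpha}^{\Lambda_{D^*}}$ is a vector field on $Q$, an integral curve passes through every point, so (i) is equivalent to the pointwise equality
\[
\mathcal{H}_{h}^{\Lambda_{D^*}}(\alpha(q)) = T_q\alpha\bigl(\mathcal{H}_{h,\alpha}^{\Lambda_{D^*}}(q)\bigr), \qquad \text{for all } q\in Q.
\]
By the very definition of $\mathcal{H}_{h,\alpha}^{\Lambda_{D^*}}$, both sides project under $T\tau_{D^*}$ to the same tangent vector on $Q$, so their difference is $\tau_{D^*}$-vertical; the identity therefore reduces to matching the fibre components at each $q$.

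Next I would pass to local coordinates $(q^i,p_\alpha)$ on $D^*$ associated with a local basis $\{X_\alpha\}$ of $\Gamma(\tau_D)$ with structure functions $(C^\gamma_{\alpha\beta},\rho^i_\alpha)$. Writing $\alpha=\alpha_\beta(q)X^\beta$ and using formula (\ref{HhLambda}), the intrinsic equality above reduces (after checking that the $q^i$-components are automatic from the definition of $\mathcal{H}_{h,\alpha}^{\Lambda_{D^*}}$) to the system of $p_\beta$-equations
\[
\frac{\partial\alpha_\beta}{\partial q^j}\,\rho^j_\delta\,\frac{\partial h}{\partial p_\delta}\Big|_{\alpha(q)} = -\rho^i_\beta\,\frac{\partial h}{\partial q^i}\Big|_{\alpha(q)} - C^\gamma_{\beta\delta}\,\alpha_\gamma(q)\,\frac{\partial h}{\partial p_\delta}\Big|_{\alpha(q)},
\]
one for each index $\beta$.

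The crucial step will be to invoke the hypothesis. Applying formula (\ref{deD}) to the $1$-form $\alpha$, the condition $d^D\alpha=0$ reads locally
\[
\rho^i_\beta\,\frac{\partial\alpha_\delta}{\partial q^i} - \rho^i_\delta\,\frac{\partial\alpha_\beta}{\partial q^i} = C^\gamma_{\beta\delta}\,\alpha_\gamma.
\]
Substituting this identity into the previous system to eliminate $C^\gamma_{\beta\delta}\alpha_\gamma$, the two terms of the form $\rho^i_\delta\,\partial\alpha_\beta/\partial q^i$ cancel after relabelling, and the equation collapses to
\[
\rho^i_\beta\left(\frac{\partial h}{\partial q^i}\Big|_{\alpha(q)} + \frac{\partial h}{\partial p_\delta}\Big|_{\alpha(q)}\,\frac{\partial\alpha_\delta}{\partial q^i}\right) = 0.
\]
By the chain rule, the bracketed expression is exactly $\partial(h\circ\alpha)/\partial q^i$, so the full system is equivalent to $\rho^i_\beta\,\partial(h\circ\alpha)/\partial q^i=0$ for every $\beta$, which in turn is $d^D(h\circ\alpha)(X_\beta)=0$. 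Both implications of the theorem therefore follow simultaneously.

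The main obstacle, as I see it, is conceptual rather than computational: one must recognise that the structure-function term $C^\gamma_{\beta\delta}\alpha_\gamma$ appearing in Hamilton's equations encodes precisely the failure of symmetry of the anchor-derivatives of $\alpha$, and that this failure is governed by $d^D\alpha$. A fully coordinate-free argument using the derivation property (\ref{Derivation}) of $d^D$ together with the characterization (\ref{bracDrhoD}) is certainly possible, but looks less transparent than the index calculation sketched above.
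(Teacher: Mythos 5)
Your proof is correct, but it takes a different route from the paper's. You reduce (i) to the pointwise $\alpha$-relatedness condition (the paper does the same, calling it (i')), and then settle everything by a direct index computation: writing the fibre components of the relatedness condition in coordinates $(q^i,p_\gamma)$, substituting the local cocycle identity $C^\gamma_{\beta\delta}\alpha_\gamma=\rho^i_\beta\,\partial\alpha_\delta/\partial q^i-\rho^i_\delta\,\partial\alpha_\beta/\partial q^i$, and watching the system collapse, via the chain rule, to $\rho^i_\beta\,\partial(h\circ\alpha)/\partial q^i=0$, i.e.\ $d^D(h\circ\alpha)(X_\beta)=0$. Since the substitution is reversible under the standing hypothesis $d^D\alpha=0$ and all conditions involved are local and pointwise, this single computation indeed gives both implications at once; your cancellation and the final identification with the local form of the Hamilton--Jacobi equation (which the paper itself records right after its proof) check out. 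The paper instead argues coordinate-freely through two auxiliary propositions: the cocycle condition $d^D\alpha=0$ is shown to be equivalent to the subspaces ${\mathcal L}_{\alpha,D}(q)=(T_q\alpha)(\rho_D(D_q))$ being Lagrangian for $\Lambda_{D^*}$, together with the inclusion $\operatorname{Ker}\#_{\Lambda_{D^*}}(\alpha(q))\subseteq({\mathcal L}_{\alpha,D}(q))^0$; the theorem then follows by pushing $\mathcal{H}_h^{\Lambda_{D^*}}(\alpha(q))=-\#_{\Lambda_{D^*}}(dh(\alpha(q)))$ back and forth through these statements. What your approach buys is brevity and self-containedness: one substitution, no auxiliary machinery (the paper's propositions are themselves proved in the appendix by coordinate computations in an adapted basis, so your calculation is in effect a flattened version of the whole argument). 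What the paper's approach buys is the isolation of the geometric content — the generalization of the classical fact that $\alpha$ is closed iff its image is a Lagrangian submanifold of $T^*Q$ — as reusable structural results, which also make the mechanism behind the theorem more transparent.
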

\begin{remark}{\rm Let $\tilde{D}$ be the generalized distribution
on $Q$ given by $\tilde{D} = \rho_{D}(D)$ and $\mathrm{Lie}^{\infty}(\tilde{D})$ be the smallest Lie subalgebra of ${\frak
X}(Q)$ containing $\tilde{D}$. Then, using Theorem
\ref{orb-theo-Lieal}, we deduce that the Hamilton-Jacobi equation
holds for the section $\alpha$ if and only if the function $h\circ
\alpha: Q \to \R$ is constant on the leaves of the foliation
$\mathrm{Lie}^{\infty}(\tilde{D})$.}
\end{remark}

In order to prove Theorem \ref{maintheorem}, we will need some
previous results:

\begin{proposition}\label{primerresultado}
Let $\{\cdot, \cdot\}_{D^*}$ be a linear almost Poisson structure
on the dual bundle $D^*$ to a vector bundle $\tau_{D}: D \to Q$,
$\alpha: Q \to D^*$ be a section of $\tau_{D^*}: D^* \to Q$ and
$\#_{\Lambda_{D^*}}: T^*D^* \to TD^*$ be the vector bundle
morphism between $T^*D^*$ and $TD^*$ induced by the linear almost
Poisson $2$-vector $\Lambda_{D^*}$. Then, $\alpha$ is a
$1$-cocycle with respect to $d^D$ (i.e., $d^D\alpha = 0$) if and
only if for every point $q$ of $Q$ the subspace of
$T_{\alpha(q)}D^*$
\begin{equation}\label{DefLalD}
{\mathcal L}_{\alpha, D}(q) = (T_{q}\alpha)(\rho_{D}(D_{q}))
\end{equation}
is Lagrangian with respect to $\Lambda_{D^*}$, that is,
\[
\#_{\Lambda_{D^*}}(({\mathcal L}_{\alpha, D}(q))^0) = {\mathcal
L}_{\alpha, D}(q), \; \; \mbox{ for all } q \in Q.
\]
\end{proposition}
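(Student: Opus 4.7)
The plan is a direct coordinate verification around a point $q\in Q$. I choose local coordinates $(q^i)$ on an open $U\subseteq Q$ and a local basis $\{X_\alpha\}$ of $\Gamma(\tau_D|_U)$ with structure functions $\rho^j_\alpha,C^\gamma_{\alpha\beta}$, giving the induced coordinates $(q^i,p_\alpha)$ on $D^*$ in which $\Lambda_{D^*}$ has the form (\ref{LambdaDstar}). Writing the section as $\alpha(q)=(q^i,\alpha_\beta(q))$ with $\alpha_\beta=\alpha(X_\beta)$, Theorem~\ref{ala-ad}(ii) yields
\[
(d^D\alpha)(X_\gamma,X_\beta)=\rho^i_\gamma\frac{\partial\alpha_\beta}{\partial q^i}-\rho^i_\beta\frac{\partial\alpha_\gamma}{\partial q^i}-C^\mu_{\gamma\beta}\alpha_\mu,
\]
so $d^D\alpha=0$ is the vanishing of these coefficients. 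On the other side, $\mathcal{L}_{\alpha,D}(q)$ is the linear span of the vectors $v_\gamma:=\rho^j_\gamma\bigl(\partial/\partial q^j+(\partial\alpha_\beta/\partial q^j)\,\partial/\partial p_\beta\bigr)$, and a covector $\xi=\xi_i\,dq^i+\xi^\beta\,dp_\beta$ at $\alpha(q)$ belongs to $\mathcal{L}_{\alpha,D}(q)^{0}$ exactly when $\rho^j_\gamma\bigl(\xi_j+(\partial\alpha_\beta/\partial q^j)\xi^\beta\bigr)=0$ for every $\gamma$.

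A direct calculation from (\ref{LambdaDstar}) gives
\[
\#_{\Lambda_{D^*}}(\xi)=-\xi^\beta\rho^j_\beta\frac{\partial}{\partial q^j}+\Bigl(\xi_j\rho^j_\alpha-\xi^\beta C^\gamma_{\beta\alpha}\alpha_\gamma\Bigr)\frac{\partial}{\partial p_\alpha},
\]
so the horizontal part of $\#_{\Lambda_{D^*}}(\xi)$ already equals that of $-\xi^\beta v_\beta$, and the combination $\#_{\Lambda_{D^*}}(\xi)+\xi^\beta v_\beta$ is purely vertical with coefficient $\xi_j\rho^j_\alpha-\xi^\beta C^\gamma_{\beta\alpha}\alpha_\gamma+\xi^\beta\rho^j_\beta(\partial\alpha_\alpha/\partial q^j)$. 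For the forward implication ($d^D\alpha=0\Rightarrow$ Lagrangian), substituting the identity from the first step rewrites this coefficient as $\rho^j_\alpha\bigl(\xi_j+(\partial\alpha_\beta/\partial q^j)\xi^\beta\bigr)$, which vanishes by the annihilator condition; hence $\#_{\Lambda_{D^*}}(\xi)=-\xi^\beta v_\beta\in\mathcal{L}_{\alpha,D}(q)$. The reverse inclusion is obtained by exhibiting, for any $v=c^\gamma v_\gamma\in\mathcal{L}_{\alpha,D}(q)$, the explicit preimage $\xi^\beta=-c^\beta,\ \xi_j=(\partial\alpha_\beta/\partial q^j)c^\beta$, which is immediately seen to lie in $\mathcal{L}_{\alpha,D}(q)^{0}$ and satisfy $\#_{\Lambda_{D^*}}(\xi)=v$.

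For the converse, I assume $\#_{\Lambda_{D^*}}(\mathcal{L}_{\alpha,D}(q)^0)\subseteq \mathcal{L}_{\alpha,D}(q)$. Since $\#_{\Lambda_{D^*}}(\xi)+\xi^\beta v_\beta$ is purely vertical and still must belong to $\mathcal{L}_{\alpha,D}(q)$, and any element $k^\gamma v_\gamma$ with vanishing horizontal component automatically satisfies $k^\gamma\rho^j_\gamma=0$ and hence has vanishing vertical component as well, the vertical coefficient displayed above must vanish identically on $\mathcal{L}_{\alpha,D}(q)^0$. By biduality this means that, for every index $\alpha$, the vector $W_\alpha:=\rho^j_\alpha\,\partial/\partial q^j+\bigl(\rho^j_\beta(\partial\alpha_\alpha/\partial q^j)-C^\gamma_{\beta\alpha}\alpha_\gamma\bigr)\,\partial/\partial p_\beta$ lies in $\mathcal{L}_{\alpha,D}(q)$; expanding $W_\alpha=b^\gamma v_\gamma$ and matching horizontal and vertical components recovers precisely the equations $(d^D\alpha)(X_\beta,X_\alpha)=0$. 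The main subtlety, running through the whole argument, is the possibly nontrivial kernel of $\rho_D$, which makes the $v_\gamma$ linearly dependent; the key observation used repeatedly is that any element of the span of $\{v_\gamma\}$ with trivial horizontal part is itself the zero vector, so no information is lost when passing between horizontal and vertical components.
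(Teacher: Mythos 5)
Your proof is correct: the coordinate formula you use for $\#_{\Lambda_{D^*}}$ agrees with (\ref{LambdaDstar}) and the paper's conventions, the identification of $\mathcal{L}_{\alpha,D}(q)$ with the span of the vectors $v_\gamma$ and of its annihilator is right, and both implications close properly (in the forward direction the vertical defect of $\#_{\Lambda_{D^*}}(\xi)+\xi^\beta v_\beta$ collapses to $\rho^j_\alpha(\xi_j+(\partial\alpha_\beta/\partial q^j)\xi^\beta)$ exactly as you claim, and in the converse the substitution of the horizontal relation $b^\gamma\rho^j_\gamma=\rho^j_\alpha$ into the vertical one does recover the cocycle equations, independently of whether $b$ is unique). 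The route is the same in spirit as the paper's — a pointwise coordinate comparison of $\#_{\Lambda_{D^*}}((\mathcal{L}_{\alpha,D}(q))^0)$ with $\mathcal{L}_{\alpha,D}(q)$ — but the execution differs in one genuine respect. The Appendix proof first normalizes the frame at the point: it chooses coordinates $(q^i,q^a)$ and a basis $\{X_i,X_\gamma\}$ with $\rho_D(X_i)(q)=\partial/\partial q^i|_q$ and $\rho_D(X_\gamma)(q)=0$, so the generators of $\mathcal{L}_{\alpha,D}(q)$ are automatically independent, explicit bases of $\mathcal{L}_{\alpha,D}(q)$ and of its annihilator can be written down, and the comparison with the components of $d^D\alpha$ at $q$ is read off directly. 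You work in an arbitrary frame, where the generators $v_\gamma$ may be linearly dependent because of the kernel of the anchor, and you compensate with two small linear-algebra observations: a purely vertical element of $\mathcal{L}_{\alpha,D}(q)$ must vanish, and biduality $((\mathcal{L}_{\alpha,D}(q))^0)^0=\mathcal{L}_{\alpha,D}(q)$ converts the vanishing of the vertical defect on the annihilator into membership of your vectors $W_\alpha$ in $\mathcal{L}_{\alpha,D}(q)$. The paper's normalization buys shorter bookkeeping at the cost of constructing the adapted basis; your version avoids that construction, makes the treatment of the degenerate anchor explicit, and your converse in fact only uses the inclusion $\#_{\Lambda_{D^*}}((\mathcal{L}_{\alpha,D}(q))^0)\subseteq\mathcal{L}_{\alpha,D}(q)$, which is marginally weaker than the stated Lagrangian equality and hence slightly sharpens that direction.
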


\begin{remark}{\rm
If $D = TQ$ and $\{\cdot, \cdot\}_{T^*Q}$ is the canonical Poisson
(symplectic) structure on $T^*Q$ then $\alpha$ is a $1$-form on
$Q$, $d^D = d^{TQ}$ is the standard exterior differential on $Q$
and $\rho_{D} = \rho_{TQ}: TQ \to TQ$ is the identity map. Thus,
if we apply Proposition \ref{primerresultado} we obtain that
$\alpha$ is a closed $1$-form if and only if $\alpha(Q)$ is a
Lagrangian submanifold of $T^*Q$. This is a well-known result in
the literature (see, for instance, \cite{AbMa}). }
\end{remark}

\begin{proposition}\label{segundoresultado}
Under the same hypotheses as in Proposition \ref{primerresultado},
if the section $\alpha$ is a $1$-cocycle with respect to $d^D$,
i.e., $d^D\alpha = 0$ then we have that
\[
Ker (\#_{\Lambda_{D^*}}(\alpha(q))) \subseteq ({\mathcal
L}_{\alpha, D}(q))^{0}, \; \; \; \mbox{ for all } q \in Q.
\]
\end{proposition}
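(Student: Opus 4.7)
The plan is to show the inclusion by pairing an arbitrary element of the left-hand side against an arbitrary vector in $\mathcal{L}_{\alpha, D}(q)$ and using the skew-symmetry of $\#_{\Lambda_{D^*}}$ together with the Lagrangian characterization obtained in Proposition \ref{primerresultado}. Concretely, fix $q \in Q$ and a covector $\beta_q \in \ker(\#_{\Lambda_{D^*}}(\alpha(q)))$. I must verify that $\langle \beta_q, v \rangle = 0$ for every $v \in \mathcal{L}_{\alpha, D}(q) = (T_q\alpha)(\rho_D(D_q))$.

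First I would invoke Proposition \ref{primerresultado}: the hypothesis $d^D\alpha = 0$ implies the Lagrangian condition
\[
\#_{\Lambda_{D^*}}\bigl((\mathcal{L}_{\alpha,D}(q))^0\bigr) = \mathcal{L}_{\alpha,D}(q).
\]
In particular, every $v \in \mathcal{L}_{\alpha,D}(q)$ can be represented as $v = \#_{\Lambda_{D^*}}(\gamma_q)$ for some $\gamma_q \in (\mathcal{L}_{\alpha,D}(q))^0$.

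Next I would use the fundamental skew-symmetry property of the sharp map induced by a bivector: for all $\beta, \gamma \in T^*_{\alpha(q)}D^*$,
\[
\langle \beta, \#_{\Lambda_{D^*}}\gamma\rangle = \Lambda_{D^*}(\gamma, \beta) = -\Lambda_{D^*}(\beta, \gamma) = -\langle \gamma, \#_{\Lambda_{D^*}}\beta\rangle.
\]
Applying this to $\beta = \beta_q$ and $\gamma = \gamma_q$, and recalling that $\#_{\Lambda_{D^*}}\beta_q = 0$, I obtain $\langle \beta_q, v\rangle = \langle \beta_q, \#_{\Lambda_{D^*}}\gamma_q\rangle = -\langle \gamma_q, \#_{\Lambda_{D^*}}\beta_q\rangle = 0$. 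Since $v$ was arbitrary in $\mathcal{L}_{\alpha,D}(q)$, we conclude $\beta_q \in (\mathcal{L}_{\alpha,D}(q))^0$, completing the proof.

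There is really no serious obstacle here; the argument is essentially two lines once Proposition \ref{primerresultado} is at hand. The only thing worth being careful about is the convention for $\#_{\Lambda_{D^*}}$ (so that the skew-symmetry identity above is applied correctly) and the observation that no appeal to the Jacobi identity is needed — the result holds for a general linear almost Poisson structure, not merely a genuine Poisson one.
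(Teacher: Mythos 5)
Your argument is correct, and it is genuinely different from the proof given in the paper. You argue invariantly: from Proposition \ref{primerresultado} (used only in the direction $d^D\alpha=0 \Rightarrow$ Lagrangian, which is exactly the hypothesis, so there is no circularity) you get ${\mathcal L}_{\alpha,D}(q)=\#_{\Lambda_{D^*}}\bigl(({\mathcal L}_{\alpha,D}(q))^0\bigr)\subseteq \mathrm{Im}\,\#_{\Lambda_{D^*}}(\alpha(q))$, and then the skew-symmetry identity $\langle\beta,\#_{\Lambda_{D^*}}\gamma\rangle=-\langle\gamma,\#_{\Lambda_{D^*}}\beta\rangle$ shows that any $\beta_q$ in the kernel annihilates the whole image of $\#_{\Lambda_{D^*}}(\alpha(q))$, hence in particular annihilates ${\mathcal L}_{\alpha,D}(q)$; in effect you are using the general linear-algebra fact $\ker\#_{\Lambda}=(\mathrm{Im}\,\#_{\Lambda})^0$ for a skew bivector at a point. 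The paper instead proves the proposition in the Appendix by a direct computation in coordinates adapted at $q$ (a basis with $\rho_D(X_i)(q)=\partial/\partial q^i|_q$, $\rho_D(X_\gamma)(q)=0$): it writes a general covector $\beta_{\alpha(q)}=\lambda_U\,dq^U+\mu^A\,dp_A$, characterizes the kernel condition explicitly using the local expression of $\Lambda_{D^*}$ together with the cocycle relations coming from $d^D\alpha=0$, and checks membership in $({\mathcal L}_{\alpha,D}(q))^0$ against the explicit local basis of that annihilator. Your route is shorter and more conceptual, and makes transparent that only skew-symmetry (not the Jacobi identity) is needed once Proposition \ref{primerresultado} is available; the paper's route is heavier but produces the explicit local formulas for the kernel and the annihilator, which it obtains anyway in the same appendix as part of the proof of Proposition \ref{primerresultado}. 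As you note, the only point of care in your version is fixing a convention for $\#_{\Lambda_{D^*}}$, but the sign identity you use holds under either standard convention, so the argument is complete.
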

The proofs of Propositions \ref{primerresultado} and
\ref{segundoresultado} may be found in the Appendix of this paper.

\noindent {\it Proof of Theorem \ref{maintheorem}.} It is clear that condition
(i) in Theorem \ref{maintheorem} is equivalent to the following
fact:

(i') {\it The vector fields ${\mathcal
H}_{h,\alpha}^{\Lambda_{D^*}}$ and ${\mathcal
H}_{h}^{\Lambda_{D^*}}$ on $Q$ and $D^*$, respectively, are
$\alpha$-related, that is,}
\begin{equation}\label{alpha-related}
(T_{q}\alpha)({\mathcal H}_{h,\alpha}^{\Lambda_{D^*}}(q)) =
{\mathcal H}_{h}^{\Lambda_{D^*}}(\alpha(q)), \; \; \;  for \; all
\; \; q\in Q.
\end{equation}
Therefore, we must prove that
\begin{center}
(i') $\Longleftrightarrow$ (ii)
\end{center}

(i') $\Longrightarrow$ (ii) Let $q$ be a point of $Q$. Then, using
(\ref{HhFolcharac}), (\ref{DefLalD}) and (\ref{alpha-related}), we
deduce that
\[
{\mathcal H}_{h}^{\Lambda_{D^*}}(\alpha(q)) \in {\mathcal
L}_{\alpha, D}(q).
\]
Consequently, from Proposition \ref{primerresultado}, we obtain
that
\[
{\mathcal H}_{h}^{\Lambda_{D^*}}(\alpha(q)) =
\#_{\Lambda_{D^*}}(\eta_{\alpha(q)}), \; \; \mbox{ for some }
\eta_{\alpha(q)} \in ({\mathcal L}_{\alpha, D}(q))^{0}.
\]
Thus, since ${\mathcal H}_{h}^{\Lambda_{D^*}}(\alpha(q)) =
-\#_{\Lambda_{D^*}}(dh(\alpha(q)))$, it follows that
\[
\eta_{\alpha(q)} + dh(\alpha(q)) \in Ker
(\#_{\Lambda_{D^*}}(\alpha(q))).
\]
Now, using Proposition \ref{segundoresultado} and the fact that
$\eta_{\alpha(q)} \in ({\mathcal L}_{\alpha, D}(q))^0$, we
conclude that
\begin{equation}\label{keypoint}
dh(\alpha(q)) \in ({\mathcal L}_{\alpha, D}(q))^0.
\end{equation}
Finally, if $a_{q} \in D_{q}$, we have that
\[
d^D(h \circ \alpha)(q)(a_{q}) =
dh(\alpha(q))((T_{q}\alpha)(\rho_{D}(a_{q})))
\]
which implies that (see (\ref{DefLalD}) and (\ref{keypoint}))
\[
d^D(h \circ \alpha)(q)(a_{q}) = 0.
\]

(ii) $\Longrightarrow$ (i') Let $q$ be a point of $Q$. Then, using
that $d^D(h\circ \alpha)(q) = 0$, we deduce that
\[
dh(\alpha(q)) \in ({\mathcal L}_{\alpha, D}(q))^0.
\]
Therefore, it follows that
\[
{\mathcal H}_{h}^{\Lambda_{D^*}}(\alpha(q)) =
-\#_{\Lambda_{D^*}}(dh(\alpha(q))) \in
\#_{\Lambda_{D^*}}(({\mathcal L}_{\alpha, D}(q))^0)
\]
and, from Proposition \ref{primerresultado}, we obtain that there
exists $v_{q} \in \rho_{D}(D_{q}) \subseteq T_{q}Q$ such that
\[
{\mathcal H}_{h}^{\Lambda_{D^*}}(\alpha(q)) =
(T_{q}\alpha)(v_{q}).
\]
This implies that
\[
{\mathcal H}_{h,\alpha}^{\Lambda_{D^*}}(q) =
(T_{\alpha(q)}\tau_{D^*})({\mathcal
H}_{h}^{\Lambda_{D^*}}(\alpha(q)) = v_{q}
\]
and, thus,
\[
{\mathcal H}_{h}^{\Lambda_{D^*}}(\alpha(q)) =
(T_{q}\alpha)({\mathcal H}_{h,\alpha}^{\Lambda_{D^*}}(q)).
\]
\hfill$\Box$

Let $(D, \{\cdot, \cdot\}_{D^*}, h)$ be a Hamiltonian system and
$\alpha: Q \to D^*$ be a section of the vector bundle $\tau_{D^*}:
D^* \to Q$.

Suppose that $(q^i)$ are local coordinates on an open subset $U$
of $Q$ and that $\{X_{\gamma}\}$ is a basis of sections of the
vector bundle $\tau_{D}^{-1}(U) \to U$. Denote by $(q^i,
p_{\gamma})$ the corresponding local coordinates on $D^*$ and by
$\rho^i_{\gamma}$, $C^{\delta}_{\gamma\nu}$ the local structure
functions of the skew-symmetric algebroid structure $(\lcf \cdot,
\cdot\rcf_{D}, \rho_{D})$ with respect to the local coordinates
$(q^i)$ and to the basis $\{X_{\gamma}\}$. If the local expression
of $\alpha$ is
\[
\alpha(q^i) = (q^i, \alpha_{\gamma}(q^i))
\]
then
\[
d^D\alpha = 0 \Longleftrightarrow
C_{\gamma\nu}^{\delta}\alpha_{\delta} = \rho^{i}_{\gamma}
\displaystyle \frac{\partial \alpha_{\nu}}{\partial q^i} -
\rho^i_{\nu} \frac{\partial \alpha_{\gamma}}{\partial q^i}, \; \;
\forall \gamma, \nu,
\]
and
\[
d^D(h \circ \alpha) = 0 \Longleftrightarrow \rho^i_{\gamma}(q)
(\displaystyle \frac{\partial h}{\partial q^i}_{|\alpha(q)} +
\frac{\partial \alpha_{\nu}}{\partial q^i}_{|q} \frac{\partial
h}{\partial p_{\nu}}_{|\alpha(q)}) = 0, \; \; \forall \gamma, \;
\forall q \in U.
\]

\begin{corollary}\label{strong-HJ}
Under the same hypotheses as in Theorem \ref{maintheorem} if,
additionally, $H^0(d^D) \simeq \R$ or if the skew-symmetric
algebroid $(D, \lcf \cdot , \rcf_{D}, \rho_{D})$ is completely
nonholonomic and $Q$ is connected, then the following conditions
are equivalent:
\begin{enumerate}
\item
If $c: I \to Q$ is an integral curve of the vector field
${\mathcal H}_{h,\alpha}^{\Lambda_{D^*}}$, that is,
\[
\dot{c}(t) = (T_{\alpha(c(t))}\tau_{D^*})({\mathcal
H}_{h}^{\Lambda_{D^*}}(\alpha(c(t)))), \; \; \mbox{ for all } t\in
I,
\]
then $\alpha \circ c: I \to D^*$ is a solution of the Hamilton
equations for $h$.

\item
$\alpha$ satisfies the following relation
\[
h \circ \alpha = \mbox{ constant }.
\]
\end{enumerate}
\end{corollary}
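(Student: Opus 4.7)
The plan is to reduce Corollary \ref{strong-HJ} directly to Theorem \ref{maintheorem}. By that theorem, condition (i) is equivalent to $d^{D}(h\circ\alpha)=0$, so the only remaining task is to verify that, under either of the two additional hypotheses, a function $f\in C^{\infty}(Q)$ satisfies $d^{D}f=0$ if and only if $f$ is constant.

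First, I would observe that the implication ``$f$ constant $\Rightarrow d^{D}f=0$'' is immediate from the local formula $d^{D}f = \rho_{\alpha}^{i}\,(\partial f/\partial q^{i})\,X^{\alpha}$ (or, more invariantly, from $(d^{D}f)(X)=\rho_{D}(X)(f)$ given in \eqref{bracDrhoD}). So one direction requires no hypothesis at all and applies in particular to $f=h\circ\alpha$.

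For the converse, I would handle the two alternative hypotheses in turn. If $H^{0}(d^{D})\simeq\R$, then by the very definition
\[
H^{0}(d^{D})=\{f\in C^{\infty}(Q)\,:\,d^{D}f=0\},
\]
the isomorphism with $\R$ forces every $d^{D}$-closed function to be constant; applying this to $f=h\circ\alpha$ gives $h\circ\alpha=\text{constant}$. If instead the skew-symmetric algebroid $D$ is completely nonholonomic and $Q$ is connected, then the proposition proved just after Definition \ref{complete-nonholo} (the analogue of Rashevsky--Chow in this setting) yields $H^{0}(d^{D})\simeq\R$, and we are reduced to the previous case.

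There is no real obstacle here beyond assembling these pieces: the equivalence (i)$\Leftrightarrow d^{D}(h\circ\alpha)=0$ is already granted by Theorem \ref{maintheorem}, and the equivalence $d^{D}(h\circ\alpha)=0 \Leftrightarrow h\circ\alpha=\text{constant}$ is exactly what the hypothesis $H^{0}(d^{D})\simeq\R$ says. The only point worth being careful about is to note that the completely nonholonomic hypothesis together with connectedness of $Q$ does imply $H^{0}(d^{D})\simeq\R$, which is precisely the content of the earlier proposition in Section \ref{orbit-theorem}, so the two stated hypotheses are handled uniformly.
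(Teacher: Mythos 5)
Your proposal is correct and is essentially the argument the paper intends: combine Theorem \ref{maintheorem} (condition (i) $\Leftrightarrow d^{D}(h\circ\alpha)=0$) with the observation that constants are always $d^{D}$-closed while $H^{0}(d^{D})\simeq\R$ forces the converse, the completely nonholonomic connected case reducing to this via the proposition in Section \ref{orbit-theorem}. No gaps.
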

Note that if $Q$ is connected then
\[
h\circ \alpha = \mbox{ constant } \Longleftrightarrow
\displaystyle (\frac{\partial h}{\partial q^i}_{|\alpha(q)} +
\frac{\partial \alpha_{\nu}}{\partial q^i}_{|q} \frac{\partial
h}{\partial p_{\nu}}_{|\alpha(q)}) = 0, \; \; \forall i \mbox{ and
} \forall q \in U.
\]

\subsection{Linear almost Poisson morphisms and Hamilton-Jacobi
equation}

Suppose that $\tau_{D}: D \to Q$ and $\tau_{\bar{D}}: \bar{D} \to
\bar{Q}$ are vector bundles over $Q$ and $\bar{Q}$, respectively,
and that $\{\cdot, \cdot\}_{D^*}$ (respectively, $\{\cdot,
\cdot\}_{\bar{D}^*}$) is a linear almost Poisson structure on
$D^*$ (respectively, $\bar{D}^*$). Denote by $(\lcf\cdot,
\cdot\rcf_{D}, \rho_{D})$ and $d^D$ (respectively, $(\lcf\cdot,
\cdot\rcf_{\bar{D}}, \rho_{\bar{D}})$ and $d^{\bar{D}}$) the
corresponding skew-symmetric algebroid structure and the almost
differential on the vector bundle $\tau_{D}: D \to Q$
(respectively, $\tau_{\bar{D}}: \bar{D} \to \bar{Q}$).

\begin{definition}
A vector bundle morphism $(\tilde{F}, F)$ between the vector
bundles $\tau_{D^*}: D^* \to Q$ and $\tau_{\bar{D}^*}: \bar{D}^*
\to \bar{Q}$

\begin{picture}(375,90)(40,5)
\put(195,20){\makebox(0,0){$Q$}}
\put(250,25){$F$}\put(210,20){\vector(1,0){80}}
\put(305,20){\makebox(0,0){$\bar{Q}$}} \put(175,50){$\tau_{D^*}$}
\put(195,70){\vector(0,-1){40}} \put(310,50){$\tau_{\bar{D}^*}$}
\put(305,70){\vector(0,-1){40}} \put(195,80){\makebox(0,0){$D^*$}}
\put(250,85){$\tilde{F}$}\put(210,80){\vector(1,0){80}}
\put(305,80){\makebox(0,0){$\bar{D}^*$}}
\end{picture}

\noindent is said to be a \emph{linear almost Poisson morphism} if
\begin{equation}\label{lalmorph}
\{\bar{\varphi} \circ \tilde{F}, \bar{\psi}\circ \tilde{F}\}_{D^*}
= \{\bar{\varphi}, \bar{\psi}\}_{\bar{D}^*} \circ \tilde{F},
\end{equation}
for $\bar{\varphi}, \bar{\psi} \in C^{\infty}(\bar{D}^*)$.
\end{definition}
Let $(\tilde{F}, F)$ be a vector bundle morphism between the
vector bundles $\tau_{D^*}: D^* \to Q$ and $\tau_{\bar{D}^*}:
\bar{D}^* \to \bar{Q}$. If $\bar{X}$ is a section of
$\tau_{\bar{D}}: \bar{D} \to \bar{Q}$ then we may define the
section $(\tilde{F}, F)^*\bar{X}$ of $\tau_{D}: D \to Q$
characterized by the following condition
\begin{equation}\label{Pullback}
\alpha_{q}(((\tilde{F}, F)^*\bar{X})(q)) =
\tilde{F}(\alpha_{q})(\bar{X}(F(q))),
\end{equation}
for all $q\in Q$ and $\alpha_{q} \in D^*_{q}$.

\begin{theorem}\label{maintheorem2}
Let $(\tilde{F}, F)$ be a vector bundle morphism between the
vector bundles $\tau_{D^*}: D^* \to Q$ and $\tau_{\bar{D}^*}:
\bar{D}^* \to \bar{Q}$. Then, $(\tilde{F}, F)$ is a linear almost
Poisson morphism if and only if
\begin{equation}\label{Firstcond}
\lcf(\tilde{F}, F)^*\bar{X}, (\tilde{F}, F)^*\bar{Y}\rcf_{D} =
(\tilde{F}, F)^*\lcf\bar{X}, \bar{Y}\rcf_{\bar{D}},
\end{equation}
\vspace{-.5cm}
\begin{equation}\label{Secondcond}
(TF \circ \rho_{D})((\tilde{F}, F)^*\bar{X}) =
\rho_{\bar{D}}(\bar{X})\circ F,
\end{equation}
for $\bar{X}, \bar{Y} \in \Gamma(\tau_{\bar{D}})$.
\end{theorem}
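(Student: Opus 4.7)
The plan has three stages; the machinery of Theorem \ref{alp-ala}, specifically the characterising formulas (\ref{alafromalp}), will do most of the work.

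Stage 1 (preliminary identity). I would first unpack the definition (\ref{Pullback}) to see that
$$\widehat{(\tilde{F},F)^*\bar{X}}(\alpha_q) = \alpha_q\bigl(((\tilde{F},F)^*\bar{X})(q)\bigr) = \tilde{F}(\alpha_q)\bigl(\bar{X}(F(q))\bigr) = \widehat{\bar{X}}(\tilde{F}(\alpha_q)),$$
so $\widehat{(\tilde{F},F)^*\bar{X}} = \widehat{\bar{X}}\circ\tilde{F}$. Combined with the vector-bundle-morphism identity $\tau_{\bar{D}^*}\circ \tilde{F} = F\circ\tau_{D^*}$, this also gives $(\bar{f}\circ\tau_{\bar{D}^*})\circ\tilde{F} = (\bar{f}\circ F)\circ\tau_{D^*}$ for every $\bar{f}\in C^{\infty}(\bar{Q})$. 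These two identities will translate between the two sides of (\ref{lalmorph}) in a uniform way.

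Stage 2 (necessity, $\Longrightarrow$). Assuming (\ref{lalmorph}), I would test it on the two types of pairs that encode the skew-symmetric algebroid data. For $(\widehat{\bar{X}},\widehat{\bar{Y}})$, the left-hand side of (\ref{lalmorph}) rewrites via (\ref{alafromalp}) and Stage 1 as $-\widehat{\lcf(\tilde{F},F)^*\bar{X},(\tilde{F},F)^*\bar{Y}\rcf_D}$, the right-hand side as $-\widehat{(\tilde{F},F)^*\lcf\bar{X},\bar{Y}\rcf_{\bar{D}}}$; injectivity of the hat map yields (\ref{Firstcond}). For $(\bar{f}\circ\tau_{\bar{D}^*},\widehat{\bar{X}})$, the second formula of (\ref{alafromalp}) transforms both sides into the equality $\rho_D((\tilde{F},F)^*\bar{X})(\bar{f}\circ F)\circ\tau_{D^*} = \rho_{\bar{D}}(\bar{X})(\bar{f})\circ F\circ \tau_{D^*}$; evaluating at an arbitrary $q\in Q$ and letting $\bar{f}$ vary gives (\ref{Secondcond}).

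Stage 3 (sufficiency, $\Longleftarrow$). I would introduce the discrepancy
$$B(\bar{\varphi},\bar{\psi}) := \{\bar{\varphi}\circ\tilde{F},\bar{\psi}\circ\tilde{F}\}_{D^*} - \{\bar{\varphi},\bar{\psi}\}_{\bar{D}^*}\circ\tilde{F},$$
verify it is a skew-symmetric biderivation on $C^\infty(\bar{D}^*)$ with values in $C^\infty(D^*)$, and reduce to checking $B=0$ on linear and basic generators. The (basic, basic) case is immediate from Proposition \ref{properlin}(ii); the (linear, linear) and (basic, linear) cases reduce, by running the Stage 2 computations backwards, to hypotheses (\ref{Firstcond}) and (\ref{Secondcond}) respectively.

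The main obstacle is the final reduction: passing from "$B$ vanishes on generators" to "$B\equiv 0$". Since $B$ is a biderivation, its value at a point $\alpha_q\in D^*$ depends only on the pointwise covectors $d\bar{\varphi}|_{\tilde{F}(\alpha_q)}$ and $d\bar{\psi}|_{\tilde{F}(\alpha_q)}$, so it suffices that the chosen family of generators has differentials spanning $T^*_\beta\bar{D}^*$ at every $\beta\in\bar{D}^*$. I would verify this by a local coordinate calculation: in coordinates $(\bar{q}^{\,i},\bar{p}_\alpha)$ on $\bar{D}^*$ associated with a local basis $\{\bar{X}_\alpha\}$ of $\Gamma(\tau_{\bar{D}})$, one has $d(\bar{q}^{\,i}\circ\tau_{\bar{D}^*}) = d\bar{q}^{\,i}$ and $d\widehat{\bar{X}_\alpha} = d\bar{p}_\alpha$, which form a local cobasis. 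This local spanning plus the biderivation property then forces $B\equiv 0$ and completes the proof.
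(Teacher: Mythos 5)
Your proposal is correct and follows essentially the same route as the paper: the identity $\widehat{(\tilde{F},F)^*\bar{Z}}=\hat{\bar{Z}}\circ\tilde{F}$ together with the characterization (\ref{alafromalp}) gives necessity, and sufficiency is obtained by verifying (\ref{lalmorph}) on basic--basic, basic--linear and linear--linear pairs. The only difference is that you make explicit, via the biderivation/pointwise-spanning argument, the final passage from generators to arbitrary functions, a step the paper's proof leaves implicit.
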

\begin{proof}
Suppose that $(\tilde{F}, F)$ is a linear almost Poisson morphism
and that $\bar{Z}$ is a section of $\tau_{\bar{D}}: \bar{D} \to
\bar{Q}$. From (\ref{Pullback}), it follows that
\begin{equation}\label{Equtil}
\lhat{40}{(\tilde{F}, F)^*\bar{Z}} = \hat{\bar{Z}} \circ
\tilde{F}.
\end{equation}
Now, if $\bar{X}, \bar{Y} \in \Gamma(\tau_{D})$ then, using
(\ref{alafromalp}) and (\ref{Equtil}), we deduce that
\[
\lhat{95}{\lcf(\tilde{F}, F)^*\bar{X}, (\tilde{F},
F)^*\bar{Y}\rcf_{D}} = -\{\hat{\bar{X}} \circ \tilde{F},
\hat{\bar{Y}} \circ \tilde{F}\}_{D^*}.
\]
Thus, from (\ref{alafromalp}) and (\ref{lalmorph}), we obtain that
\[
\lhat{95}{\lcf(\tilde{F}, F)^*\bar{X}, (\tilde{F},
F)^*\bar{Y}\rcf_{D}} = \lhat{60}{(\tilde{F}, F)^*\lcf \bar{X},
\bar{Y}\rcf_{\bar{D}}}
\]
which implies that (\ref{Firstcond}) holds.

On the other hand, if $\bar{f} \in C^{\infty}(\bar{Q})$ then,
using again (\ref{alafromalp}) and (\ref{lalmorph}), it follows
that
\[
(\rho_{\bar{D}}(\bar{X})(\bar{f})\circ F)\circ \tau_{D^*} =
\{(\bar{f} \circ \tau_{\bar{D}^*})\circ \tilde{F}, \hat{\bar{X}}
\circ \tilde{F}\}_{D^*}.
\]
Therefore, from (\ref{alafromalp}) and (\ref{Equtil}), we have
that
\[
(\rho_{\bar{D}}(\bar{X})(\bar{f})\circ F)\circ \tau_{D^*} =
\rho_{D}((\tilde{F}, F)^*\bar{X})(\bar{f}\circ F)\circ \tau_{D^*}
\]
and, consequently,
\[
\rho_{\bar{D}}(\bar{X})(\bar{f})\circ F= \rho_{D}((\tilde{F},
F)^*\bar{X})(\bar{f}\circ F).
\]
This implies that (\ref{Secondcond}) holds.

Conversely, assume that (\ref{Firstcond}) and (\ref{Secondcond})
hold.

Then, if $\bar{f}, \bar{g} \in C^{\infty}(\bar{Q})$ it is clear
that the real functions
\[
\bar{f} \circ \tau_{\bar{D}^*} \circ \tilde{F} = \bar{f} \circ F
\circ \tau_{D^*}, \; \; \; \bar{g} \circ \tau_{\bar{D}^*} \circ
\tilde{F} = \bar{g} \circ F \circ \tau_{D^*}
\]
are basic functions with respect to the projection $\tau_{D^*}:
D^* \to Q$. Therefore, we deduce that
\begin{equation}\label{iii1}
0 = \{(\bar{f} \circ \tau_{\bar{D}^*})\circ \tilde{F}, (\bar{g}
\circ \tau_{\bar{D}^*})\circ \tilde{F}\}_{D^*} = \{\bar{f} \circ
\tau_{\bar{D}^*}, \bar{g} \circ \tau_{\bar{D}^*}\}_{\bar{D}^*}
\circ \tilde{F}.
\end{equation}
Now, if $\bar{X} \in \Gamma(\tau_{\bar{D}})$ then, using
(\ref{alafromalp}) and (\ref{Equtil}), we obtain that
\[
\{(\bar{f} \circ \tau_{\bar{D}^*})\circ \tilde{F},
\hat{\bar{X}}\circ \tilde{F}\}_{D^*} = ((TF \circ
\rho_{D})((\tilde{F}, F)^*\bar{X}))(\bar{f})\circ \tau_{D^*}.
\]
Consequently, from (\ref{alafromalp}) and (\ref{Secondcond}), it
follows that
\begin{equation}\label{iii2}
\{(\bar{f} \circ \tau_{\bar{D}^*})\circ \tilde{F},
\hat{\bar{X}}\circ \tilde{F}\}_{D^*} = \{\bar{f}\circ
\tau_{\bar{D}^*}, \hat{\bar{X}}\}_{\bar{D}^*} \circ \tilde{F}.
\end{equation}
On the other hand, if $\bar{Y} \in \Gamma(\tau_{\bar{D}})$ then,
using (\ref{alafromalp}), (\ref{Firstcond}) and (\ref{Equtil}), we
deduce that
\begin{equation}\label{iii3}
\{\hat{\bar{X}}\circ \tilde{F}, \hat{\bar{Y}}\circ
\tilde{F}\}_{D^*} = \{\hat{\bar{X}}, \hat{\bar{Y}}\}_{\bar{D}^*}
\circ \tilde{F}.
\end{equation}
Thus, (\ref{iii1}), (\ref{iii2}) and (\ref{iii3}) imply that
$(\tilde{F}, F)$ is a linear almost Poisson morphism.
\end{proof}

Let $(\tilde{F}, F)$ be a vector bundle morphism between the
vector bundles $\tau_{D^*}: D^* \to Q$ and $\tau_{\bar{D}^*}:
\bar{D}^* \to \bar{Q}$. Denote by $\Lambda^k\tilde{F}:
\Lambda^kD^*\to \Lambda^k\bar{D}^*$ the vector bundle morphism
(over $F$) between the vector bundles $\Lambda^k\tau_{D^*}:
\Lambda^kD^* \to Q$ and $\Lambda^k\tau_{\bar{D}^*}:
\Lambda^k\bar{D}^* \to \bar{Q}$ induced by $\tilde{F}$. Then, a
section $\alpha \in \Gamma(\Lambda^k\tau_{D^*})$ is said to be
$(\tilde{F}, F)$-related with a section $\bar{\alpha} \in
\Gamma(\Lambda^k\tau_{\bar{D}^*})$ if
\[
\Lambda^k \tilde{F} \circ \alpha = \bar{\alpha} \circ F.
\]
Now, assume that $F$ is a surjective map and that $(\tilde{F}, F)$
is a fiberwise injective vector bundle morphism, that is,
\[
\tilde{F}_{q} = \tilde{F}_{|D^*_q}: D^*_q \to \bar{D}^*_{F(q)}
\]
is a monomorphism of vector spaces, for all $q\in Q$, and
\[
F(q) = F(q') \Longrightarrow \tilde{F}_{q}(D^*_{q}) =
\tilde{F}_{q'}(D^*_{q'}).
\]
Then, we may consider the vector subbundle $\tilde{F}(D^*)$ (over
$\bar{Q}$) of $\tau_{\bar{D}^*}: \bar{D}^* \to \bar{Q}$. Moreover,
if $\bar{\alpha}$ is a section of this vector subbundle we have
that there exists a unique section $\alpha$ of $\tau_{D^*}: D^*
\to Q$ such that $\alpha$ is $(\tilde{F}, F)$-related with
$\bar{\alpha}$. In fact, if $\{\bar{\alpha}_{i}\}$ is a local
basis of sections of $\tau_{\tilde{F}(D^*)}: \tilde{F}(D^*) \to
\bar{Q}$, it follows that $\{\alpha_{i}\}$ is a local basis of
$\Gamma(\tau_{D^*})$.

\begin{theorem}\label{mordiff}
Let $(\tilde{F}, F)$ be a vector bundle morphism between the
vector bundles $\tau_{D^*}: D^* \to Q$ and $\tau_{\bar{D}^*}:
\bar{D}^* \to \bar{Q}$.
\begin{enumerate}
\item
If $(\tilde{F}, F)$ is a linear almost Poisson morphism then the
following condition ({\bf C}) holds:

({\bf C}) For each $\alpha \in \Gamma(\Lambda^k\tau_{D^*})$ which
is $(\tilde{F}, F)$-related with $\bar{\alpha} \in
\Gamma(\Lambda^k\tau_{\bar{D}^*})$ we have that $d^D\alpha \in
\Gamma(\Lambda^{k+1}\tau_{D^*})$ is also $(\tilde{F}, F)$-related
with $d^{\bar{D}}\bar{\alpha} \in
\Gamma(\Lambda^{k+1}\tau_{\bar{D}^*})$.
\item
Conversely, if condition ({\bf C}) holds, $F$ is a surjective map
and $(\tilde{F}, F)$ is a fiberwise injective vector bundle
morphism then $(\tilde{F}, F)$ is a linear almost Poisson
morphism.
\end{enumerate}
\end{theorem}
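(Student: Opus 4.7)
The proof splits into the two implications, both driven by the Koszul formula (\ref{deD}) for the almost differential together with the bracket/anchor characterization of linear almost Poisson morphisms given by Theorem~\ref{maintheorem2}. A useful preliminary observation is that $\alpha \in \Gamma(\Lambda^{k}\tau_{D^{*}})$ is $(\tilde F, F)$-related to $\bar\alpha$ if and only if
\[
\alpha((\tilde F, F)^{*}\bar X_{1}, \dots, (\tilde F, F)^{*}\bar X_{k}) = \bar\alpha(\bar X_{1}, \dots, \bar X_{k}) \circ F
\]
for all $\bar X_{i} \in \Gamma(\tau_{\bar D})$; this follows by unwinding the definition of $\Lambda^{k}\tilde F$ fibrewise and using (\ref{Pullback}).

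For part~(i), assume $(\tilde F, F)$ is a linear almost Poisson morphism, so that Theorem~\ref{maintheorem2} supplies the bracket identity (\ref{Firstcond}) and the anchor identity (\ref{Secondcond}). Given $\alpha$ $(\tilde F, F)$-related to $\bar\alpha$, to check that $d^{D}\alpha$ is $(\tilde F, F)$-related to $d^{\bar D}\bar\alpha$ I would fix $q \in Q$, extend arbitrary vectors of $\bar D_{F(q)}$ to local sections $\bar X_{0}, \dots, \bar X_{k}$ of $\tau_{\bar D}$, and expand $(d^{D}\alpha)((\tilde F, F)^{*}\bar X_{0}, \dots, (\tilde F, F)^{*}\bar X_{k})$ via (\ref{deD}). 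Each anchor term $\rho_{D}((\tilde F, F)^{*}\bar X_{i})(\alpha(\dots))$ is rewritten, using the preliminary observation and (\ref{Secondcond}), as $\rho_{\bar D}(\bar X_{i})(\bar\alpha(\dots)) \circ F$; each bracket term is rewritten using (\ref{Firstcond}) followed by the preliminary observation. Summing reproduces, term by term, the Koszul expansion of $(d^{\bar D}\bar\alpha)(\bar X_{0}, \dots, \bar X_{k}) \circ F$ evaluated at $q$, which is precisely the relatedness of $d^{D}\alpha$ and $d^{\bar D}\bar\alpha$.

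For part~(ii), the plan is to verify (\ref{Firstcond}) and (\ref{Secondcond}) and invoke Theorem~\ref{maintheorem2}. Applying condition~({\bf C}) in degree $k=0$ to the $(\tilde F, F)$-related pair $(\bar f \circ F, \bar f)$ for arbitrary $\bar f \in C^{\infty}(\bar Q)$, one obtains that $d^{D}(\bar f \circ F)$ is $(\tilde F, F)$-related to $d^{\bar D}\bar f$; pairing with an arbitrary $\bar X \in \Gamma(\tau_{\bar D})$ via the preliminary observation yields $\rho_{D}((\tilde F, F)^{*}\bar X)(\bar f \circ F) = \rho_{\bar D}(\bar X)(\bar f) \circ F$ for every $\bar f$, which is equivalent to (\ref{Secondcond}). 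To obtain (\ref{Firstcond}), I would use the assumption that $F$ is surjective and $(\tilde F, F)$ fibrewise injective: as explained in the paragraph preceding the theorem, each $\bar\alpha \in \Gamma(\tau_{\tilde F(D^{*})})$ lifts uniquely to an $\alpha \in \Gamma(\tau_{D^{*}})$ that is $(\tilde F, F)$-related to $\bar\alpha$. Applying ({\bf C}) to this pair in degree $k=1$, expanding the two sides via the Koszul formula on $((\tilde F, F)^{*}\bar X, (\tilde F, F)^{*}\bar Y)$ and $(\bar X, \bar Y)$ respectively, and cancelling the anchor contributions using (\ref{Secondcond}) already established, one is left with
\[
\alpha(\lcf (\tilde F, F)^{*}\bar X, (\tilde F, F)^{*}\bar Y\rcf_{D}) = \alpha((\tilde F, F)^{*}\lcf \bar X, \bar Y\rcf_{\bar D})
\]
for every lifted $\alpha$.

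The main obstacle and decisive step is to upgrade this identity, which a priori holds only for lifted $\alpha$, to a pointwise equality in $D$. For this, I would note that at each fixed $q \in Q$ fibrewise injectivity makes $\tilde F_{q}: D_{q}^{*} \to \tilde F(D^{*})_{F(q)}$ a linear isomorphism, so prescribing $\bar\alpha$ to take any desired value in $\tilde F(D^{*})_{F(q)}$ (extended smoothly by a bump function) makes $\alpha(q)$ attain any desired value in $D_{q}^{*}$. Evaluating the displayed identity at $q$ therefore forces $\lcf (\tilde F, F)^{*}\bar X, (\tilde F, F)^{*}\bar Y\rcf_{D}(q) = ((\tilde F, F)^{*}\lcf \bar X, \bar Y\rcf_{\bar D})(q)$; as $q$ is arbitrary, (\ref{Firstcond}) holds and Theorem~\ref{maintheorem2} completes the proof.
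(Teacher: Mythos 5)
Your proposal is correct and follows essentially the same route as the paper: both directions reduce to the characterization of linear almost Poisson morphisms via (\ref{Firstcond})--(\ref{Secondcond}) in Theorem~\ref{maintheorem2}, combined with the evaluation formulas for $d^{D}$ and the relatedness identity $\alpha((\tilde F,F)^{*}\bar X_{1},\dots,(\tilde F,F)^{*}\bar X_{k})=\bar\alpha(\bar X_{1},\dots,\bar X_{k})\circ F$. The only (harmless) differences are presentational: in (i) you run the Koszul formula (\ref{deD}) directly in every degree where the paper treats degrees $0$ and $1$ and then invokes the derivation property (\ref{Derivation}), and in (ii) you spell out the final pointwise ``enough lifted sections'' argument, using surjectivity of $F$ and fiberwise injectivity of $\tilde F$, which the paper leaves implicit in its concluding step.
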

\begin{proof}
(i) Suppose that $\bar{X}$ and $\bar{Y}$ are sections of
$\tau_{\bar{D}}: \bar{D} \to \bar{Q}$.

Then, if $\bar{f} \in C^{\infty}(\bar{Q})$, using
(\ref{Secondcond}), we deduce that
\[
(d^{\bar{D}}\bar{f})(\bar{X})\circ F = (\rho_{D}((\tilde{F},
F)^*\bar{X}))(\bar{f}\circ F).
\]
Thus, from (\ref{Pullback}), it follows that
\[
(d^{\bar{D}}\bar{f})(\bar{X})\circ F = <\tilde{F}(d^D(\bar{f}\circ
F)), \bar{X} \circ F>.
\]
Therefore, we have that
\begin{equation}\label{Tres1}
d^{\bar{D}}\bar{f} \circ F = \tilde{F} \circ d^D(\bar{f} \circ F).
\end{equation}
Now, let $\alpha$ be a section of $\tau_{D^*}: D^* \to Q$ which is
$(\tilde{F}, F)$-related with $\bar{\alpha} \in
\Gamma(\tau_{\bar{D}^*})$, that is,
\begin{equation}\label{Ftilderelated}
\tilde{F} \circ \alpha = \bar{\alpha}\circ F.
\end{equation}
Then, using (\ref{Firstcond}), (\ref{Secondcond}) and
(\ref{Ftilderelated}), we obtain that
\begin{eqnarray*}
(d^{\bar{D}}\bar{\alpha})(\bar{X}, \bar{Y})\circ F &=&
\rho_{D}((\tilde{F}, F)^*\bar{X})(\alpha((\tilde{F},
F)^*\bar{Y}))- \rho_{D}((\tilde{F},
F)^*\bar{Y})(\alpha((\tilde{F}, F)^*\bar{X}))\\ &&-
\alpha\lcf(\tilde{F},F)^*\bar{X}, (\tilde{F}, F)^*\bar{Y}\rcf_{D}
\end{eqnarray*}
which implies that
\[
(d^{\bar{D}}\bar{\alpha})(\bar{X}, \bar{Y})\circ F =
<\Lambda^2\tilde{F}\circ d^D\alpha, (\bar{X} \circ F, \bar{Y}
\circ F)>.
\]
This proves that
\begin{equation}\label{Tres2}
d^{\bar{D}}\bar{\alpha} \circ F = \Lambda^2\tilde{F} \circ
d^D\alpha.
\end{equation}
Consequently, from (\ref{Derivation}), (\ref{Tres1}) and
(\ref{Tres2}), we deduce the result.

(ii) If $\bar{X} \in \Gamma(\tau_{\bar{D}})$ and $\bar{f}\in
C^{\infty}(\bar{Q})$ then, using condition ({\bf C}), we have that
\[
(\rho_{\bar{D}}(\bar{X})\circ F)(\bar{f}) = d^D(\bar{f}\circ
F)((\tilde{F}, F)^*\bar{X}) = (\rho_{D}((\tilde{F},
F)^*\bar{X}))(\bar{f}\circ F).
\]
This proves that (\ref{Secondcond}) holds.

Next, suppose that $\bar{Y} \in \Gamma(\tau_{\bar{D}})$ and that
$\alpha$ is a section of $\tau_{D^*}: D^* \to Q$ which is
$(\tilde{F}, F)$-related with $\bar{\alpha} \in
\Gamma(\tau_{\bar{D}^*})$.

Then, from (\ref{Secondcond}), it follows that
\[
\alpha\lcf(\tilde{F}, F)^*\bar{X}, (\tilde{F}, F)^*\bar{Y}\rcf_{D}
= -(d^D\alpha)((\tilde{F}, F)^*\bar{X}, (\tilde{F}, F)^*\bar{Y}) +
\rho_{\bar{D}}(\bar{X})(\bar{\alpha}(\bar{Y}))\circ F -
\rho_{\bar{D}}(\bar{Y})(\bar{\alpha}(\bar{X}))\circ F.
\]
Thus, using condition ({\bf C}), we deduce that
\[
\alpha\lcf(\tilde{F}, F)^*\bar{X}, (\tilde{F}, F)^*\bar{Y}\rcf_{D}
= -(d^{\bar D}\bar{\alpha})(\bar{X},\bar{Y}) \circ F +
\rho_{\bar{D}}(\bar{X})(\bar{\alpha}(\bar{Y}))\circ F -
\rho_{\bar{D}}(\bar{Y})(\bar{\alpha}(\bar{X}))\circ F.
\]
This implies that (\ref{Firstcond}) holds.
\end{proof}
\begin{remark}{\rm Let $(\tilde{F}, F)$ be a linear almost Poisson
morphism between the vector bundles $\tau_{D^*}: D^* \to Q$ and
$\tau_{\bar{D}^*}: \bar{D}^* \to \bar{Q}$. Moreover, suppose that
$F$ is surjective and that $(\tilde{F}, F)$ is a fiberwise
injective vector bundle morphism.
\begin{enumerate}
\item
{}From Theorem \ref{mordiff}, we deduce that the condition
$H^{0}(d^D) \simeq \R$ implies that $H^{0}(d^{\bar{D}}) \simeq
\R$. In general, the converse does not hold. However, if
$H^0(d^{\bar{D}}) \simeq \R$ and $f \in C^{\infty}(Q)$ is a
$F$-basic function such that $d^D f = 0$ then $f$ is constant.
\item
If $F$ is a surjective submersion with connected fibers, $V_{q}F
\subseteq \tilde{D}_{q} = \rho_{D}(D_{q})$, for all $q \in Q$, and
$d^D f = 0$ then $f$ is a $F$-basic function. Here, $VF$ is the
vertical bundle to $F$.
\end{enumerate} }
\end{remark}

Now, we will introduce the following definition.
\begin{definition}
Let $(D, \{\cdot, \cdot\}_{D^*}, h)$ (respectively, $(\bar{D},
\{\cdot, \cdot\}_{\bar{D}^*}, \bar{h})$) be a Hamiltonian system
and $(\tilde{F}, F)$ be a linear almost Poisson morphism between
the vector bundles $\tau_{D}: D^* \to Q$ and $\tau_{\bar{D}}:
\bar{D}^* \to \bar{Q}$. Then, $(\tilde{F}, F)$ is said to be
\emph{Hamiltonian} if
\[
\bar{h} \circ \tilde{F} = h.
\]
\end{definition}
It is clear that if $(\tilde{F}, F)$ is a Hamiltonian morphism
then the Hamiltonian vector fields of $h$ and $\bar{h}$,
${\mathcal H}^{\Lambda_{D^*}}_{h}$ and ${\mathcal
H}_{\bar{h}}^{\Lambda_{\bar{D}^*}}$, are $\tilde{F}$-related, that
is,
\[
(T_{\beta}\tilde{F})({\mathcal H}_{h}^{\Lambda_{D^*}}(\beta)) =
{\mathcal H}_{\bar{h}}^{\Lambda_{\bar{D}^*}}(\tilde{F}(\beta)), \;
\; \; \mbox{ for all } \beta \in D^*.
\]
This implies that if $\mu: I \to D^*$ is a solution of the
Hamilton equations for $h$ then $\tilde{F} \circ \mu: I \to
\bar{D}^*$ is a solution of the Hamilton equations for $\bar{h}$.

In addition, from Theorem \ref{mordiff}, we deduce the following
result
\begin{theorem}\label{morHa-Ja}
Let $(D, \{\cdot, \cdot\}_{D^*}, h)$ (respectively, $(\bar{D},
\{\cdot, \cdot\}_{\bar{D}^*}, \bar{h})$) be a Hamiltonian system
and $(\tilde{F}, F)$ be a Hamiltonian morphism between the vector
bundles $\tau_{D^*}: D^{*} \to Q$ and $\tau_{\bar{D}^*}: \bar{D}^*
\to \bar{Q}$. Assume that the map $F$ is surjective and that
$(\tilde{F}, F)$ is a fiberwise injective vector bundle morphism.
\begin{enumerate}
\item
If $\alpha$ is a section of the vector bundle $\tau_{D^*}: D^* \to
Q$ such that $d^D\alpha = 0$, it satisfies the Hamilton-Jacobi
equation for $h$ (respectively, the strongest condition $h\circ
\alpha = \mbox{ constant }$) and it is $(\tilde{F}, F)$-related
with $\bar{\alpha} \in \Gamma(\tau_{\bar{D}^*})$ then
$d^{\bar{D}}\bar{\alpha} = 0$ and $\bar{\alpha}$ satisfies the
Hamilton-Jacobi equation for $\bar{h}$ (respectively, the
strongest condition $\bar{h} \circ \bar{\alpha} = \mbox{ constant
}$).
\item
If $\bar{\alpha}$ is a section of the vector subbundle
$\tilde{F}(D^*)$ of $\tau_{\bar{D}^*}: \bar{D}^* \to \bar{Q}$ such
that $d^{\bar{D}}\bar{\alpha} = 0$ and $\bar{\alpha}$ satisfies
the Hamilton-Jacobi equation for $\bar{h}$ (respectively, the
strongest condition $\bar{h} \circ \bar{\alpha} = \mbox{ constant
}$) then $d^D\alpha = 0$ and $\alpha$ satisfies the
Hamilton-Jacobi equation for $h$ (respectively, the strongest
condition $h \circ \alpha = \mbox{ constant }$), where $\alpha$ is
the section of $\tau_{D^*}: D^* \to Q$ characterized by the
condition $\tilde{F} \circ \alpha = \bar{\alpha} \circ F$.

\end{enumerate}

\end{theorem}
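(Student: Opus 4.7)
The plan is to deduce Theorem \ref{morHa-Ja} directly from Theorem \ref{mordiff}, using the compatibility $\bar{h} \circ \tilde{F} = h$ to transfer the Hamilton-Jacobi condition between the two levels. The key observation is that under the hypothesis $\tilde{F} \circ \alpha = \bar{\alpha} \circ F$, composition with $\bar{h}$ gives
\[
h \circ \alpha = \bar{h} \circ \tilde{F} \circ \alpha = (\bar{h} \circ \bar{\alpha}) \circ F,
\]
so that $h \circ \alpha$ is an $F$-basic function whose underlying function on $\bar{Q}$ is precisely $\bar{h} \circ \bar{\alpha}$. Since $(\tilde{F}, F)$ is a linear almost Poisson morphism, Theorem \ref{mordiff}(i) applies, giving condition ({\bf C}), which for $0$-forms is precisely the identity $d^{\bar{D}}\bar{f} \circ F = \tilde{F} \circ d^{D}(\bar{f}\circ F)$.

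For part (i), I would argue as follows. Applying condition ({\bf C}) to $\alpha$ itself, which is $(\tilde{F}, F)$-related with $\bar{\alpha}$, the hypothesis $d^D\alpha = 0$ yields $d^{\bar{D}}\bar{\alpha} \circ F = \Lambda^{2}\tilde{F} \circ d^{D}\alpha = 0$; surjectivity of $F$ then forces $d^{\bar{D}}\bar{\alpha} = 0$. Next, applying condition ({\bf C}) to the $F$-basic function $h \circ \alpha = (\bar{h}\circ\bar{\alpha})\circ F$, I obtain
\[
d^{\bar{D}}(\bar{h}\circ\bar{\alpha}) \circ F = \tilde{F} \circ d^{D}(h\circ\alpha) = 0,
\]
and surjectivity of $F$ gives the Hamilton-Jacobi equation $d^{\bar{D}}(\bar{h}\circ\bar{\alpha}) = 0$. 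The strongest version is even easier: if $h\circ\alpha$ is constant then so is $\bar{h}\circ\bar{\alpha}\circ F$, hence by surjectivity $\bar{h}\circ\bar{\alpha}$ is constant.

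For part (ii), the hypothesis that $(\tilde{F}, F)$ is fiberwise injective on a surjective base guarantees, as explained just before Theorem \ref{mordiff}, a unique section $\alpha$ of $\tau_{D^*}$ with $\tilde{F}\circ\alpha = \bar{\alpha}\circ F$, so the setup is symmetric to (i). Now applying condition ({\bf C}) to $\alpha$, the vanishing of $d^{\bar{D}}\bar{\alpha}$ gives $\Lambda^{2}\tilde{F} \circ d^{D}\alpha = d^{\bar{D}}\bar{\alpha} \circ F = 0$; since $\tilde{F}$ is fiberwise injective, so is $\Lambda^{2}\tilde{F}$, and thus $d^{D}\alpha = 0$. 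Similarly, applying ({\bf C}) to the function $h\circ\alpha = (\bar{h}\circ\bar{\alpha})\circ F$ gives $\tilde{F}\circ d^{D}(h\circ\alpha) = d^{\bar{D}}(\bar{h}\circ\bar{\alpha})\circ F = 0$, and fiberwise injectivity of $\tilde{F}$ yields $d^{D}(h\circ\alpha) = 0$, i.e., the Hamilton-Jacobi equation for $h$. For the strongest version, $h\circ\alpha = \bar{h}\circ\bar{\alpha}\circ F$ is manifestly constant on $Q$ whenever $\bar{h}\circ\bar{\alpha}$ is constant on $\bar{Q}$.

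There is essentially no hard step here: once one observes that $h\circ\alpha$ is $F$-basic with associated function $\bar{h}\circ\bar{\alpha}$, Theorem \ref{mordiff} does all the work. The only subtle point to keep track of is the direction in which one uses the surjectivity of $F$ (part (i)) versus the fiberwise injectivity of $\tilde{F}$ (part (ii)) to cancel the pullback and recover the bare differential identities on the appropriate base; this is what makes the two parts genuinely dual rather than formal consequences of one another.
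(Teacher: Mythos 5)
Your proposal is correct and follows essentially the same route as the paper, which states Theorem \ref{morHa-Ja} as a direct consequence of Theorem \ref{mordiff}: you transfer the cocycle and Hamilton--Jacobi conditions through condition ({\bf C}), using $h=\bar{h}\circ\tilde{F}$ to identify $h\circ\alpha$ with the $F$-basic function $(\bar{h}\circ\bar{\alpha})\circ F$, and then cancel the pullback via surjectivity of $F$ in one direction and fiberwise injectivity of $\tilde{F}$ (hence of $\Lambda^{2}\tilde{F}$) in the other. Nothing is missing.
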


\section{Applications to nonholonomic Mechanics}\label{section4}

\subsection{Unconstrained mechanical systems on a Lie
algebroid}\label{unconmesys}

Let $\tau_{A}: A \to Q$ be a Lie algebroid over a manifold $Q$ and
denote by $(\lcf\cdot, \cdot\rcf_{A}, \rho_{A})$ the Lie algebroid
structure of $A$.

If ${\mathcal G}: A \times_{Q} A \to \R$ is a bundle metric on $A$
then the \emph{Levi-Civita connection}
\[
\nabla^{\mathcal G}: \Gamma(\tau_{A}) \times \Gamma(\tau_{A}) \to
\Gamma(\tau_{A})
\]
is determined by the formula
\[
\begin{array}{rcl}
2 {\mathcal G}(\nabla_{X}^{\mathcal G}Y, Z) & = &
\rho_{A}(X)({\mathcal G}(Y, Z)) + \rho_{A}(Y)({\mathcal G}(X, Z))
- \rho_{A}(Z)({\mathcal G}(X, Y)) \\
&& +  {\mathcal G}(X, \lcf Z, Y\rcf_{A}) + {\mathcal G}(Y, \lcf Z,
X\rcf_{A}) - {\mathcal G}(Z, \lcf Y, X\rcf_{A})
\end{array}
\]
for $X, Y, Z \in \Gamma(A)$. Using the covariant derivative
induced by $\nabla^{\mathcal G}$, one may introduce the notion of
a geodesic of $\nabla^{\mathcal G}$ as follows. A curve $\sigma: I
\to A$ is \emph{admissible} if
\[
\displaystyle \frac{d}{dt}(\tau_{A} \circ \sigma) = \rho_{A} \circ
\sigma.
\]
An admissible curve $\sigma: I \to A$ is said to be a
\emph{geodesic} if $\nabla_{\sigma(t)}^{\mathcal G}\sigma(t) = 0$,
for all $t \in I$.

The geodesics are the integral curves of a vector field
$\xi_{\mathcal G}$ on $A$, the \emph{geodesic flow} of $A$, which
is locally given by
\[
\xi_{\mathcal G} = \displaystyle \rho_{B}^{i} v^{B}
\frac{\partial}{\partial q^i}- C_{EB}^{C}v^{B}v^{C}
\frac{\partial}{\partial v^{E}}.
\]
Here, $(q^i)$ are local coordinates on an open subset $U$ of $Q$,
$\{X_{B}\}$ is an orthonormal basis of sections of the vector
bundle $\tau_{A}^{-1}(U) \to U$, $(q^i, v^{B})$ are the
corresponding local coordinates on $A$ and $\rho^{i}_{B}$,
$C^{E}_{CB}$ are the local structure functions of $A$. Note that
the coefficients $\Gamma_{BC}^{E}$ of the connection
$\nabla^{\mathcal G}$ are
\[
\Gamma_{BC}^{E} = \displaystyle \frac{1}{2} (C_{EB}^{C} +
C_{EC}^{B} + C_{BC}^{E})
\]
(for more details, see \cite{CoLeMaMa,CoMa}).

The \emph{Lagrangian function} $L: A \to \R$ of an (unconstrained)
mechanical system on $A$ is defined by
\[
L(a) = \displaystyle \frac{1}{2} {\mathcal G}(a, a) -
V(\tau_{A}(a)) = \frac{1}{2}\|a\|_{\mathcal G}^2 - V(\tau_{A}(a)),
\; \; \; \mbox{ for } a\in A,
\]
$V: Q \to \R$ being a real $C^{\infty}$-function on $Q$. In other
words, $L$ is the kinetic energy induced by ${\mathcal G}$ minus
the potential energy induced by $V$.

Note that if $\Delta$ is the Liouville vector field of $A$ then
\emph{the Lagrangian energy} $E_{L} = \Delta(L) - L$ is the real
$C^{\infty}$-function on $A$ given by
\[
E_{L}(a) = \displaystyle \frac{1}{2} {\mathcal G}(a, a) +
V(\tau_{A}(a)) = \frac{1}{2} \|a\|_{\mathcal G}^2 +
V(\tau_{A}(a)), \; \; \; \mbox{ for } a\in A.
\]
On the other hand, we may consider the section $grad_{\mathcal
G}V$ of $\tau_{A}: A \to Q$ characterized by the following
condition
\[
{\mathcal G}(grad_{\mathcal G}V, X) = (d^A V)(X) = \rho_{A}(X)(V),
\; \; \forall X \in \Gamma(\tau_{A}).
\]
Then, the solutions of \emph{the Euler-Lagrange equations} for $L$
are the integral curves of the vector field $\xi_{L}$ on $A$
defined by
\[
\xi_{L} = \xi_{\mathcal G} - (grad_{\mathcal G}V)^{\bf v},
\]
where $(grad_{\mathcal G}V)^{\bf v} \in {\frak X}(A)$ is the
standard vertical lift of the section $grad_{\mathcal G}V$. The
local expression of the Euler-Lagrange equations is
\[
\dot{q}^i = \rho_{B}^i v^{B}, \makebox[.4cm]{} \dot{v}^{E} =
-C_{EB}^{C}v^{B}v^{C} - \rho_{E}^{j} \frac{\partial V}{\partial
q^{j}},
\]
for all $i$ and $E$ (see \cite{CoLeMaMa,CoMa}).

Now, we will denote by $\flat_{\mathcal G}: A \to A^*$ the vector
bundle isomorphism induced by ${\mathcal G}$ and by $\#_{\mathcal
G}: A^* \to A$ the inverse morphism. If $\alpha: Q \to A^*$ is a
section of the vector bundle $\tau_{A^*}: A^* \to Q$ we also
consider the vector field $\xi_{L,\alpha}$ on $Q$ defined by
\[
\xi_{L,\alpha}(q) = (T_{\#_{\mathcal
G}(\alpha(q))}\tau_{A})(\xi_{L}(\#_{\mathcal G}(\alpha (q)))), \;
\; \mbox{ for } q\in Q.
\]

\begin{corollary}\label{unconsHam-Ja}
Let $\alpha: Q \to A^*$ be a $1$-cocycle of the Lie algebroid $A$,
that is, $d^A\alpha = 0$. Then, the following conditions are
equivalent:
\begin{enumerate}
\item
If $c: I \to Q$ is an integral curve of the vector field
$\xi_{L,\alpha}$ on $Q$ we have that $\#_{\mathcal G} \circ \alpha
\circ c: I \to A$ is a solution of the Euler-Lagrange equations
for $L$.

\item
$\alpha$ satisfies the \emph{Hamilton-Jacobi equation}
\[
d^A(E_{L} \circ \#_{\mathcal G} \circ \alpha) = 0,
\]
that is, the function $\displaystyle \frac{1}{2} \|\#_{\mathcal
G}\circ \alpha\|_{\mathcal G}^2 + V$ on $Q$ is constant on the
leaves of the Lie algebroid foliation associated with $A$.
\end{enumerate}

\end{corollary}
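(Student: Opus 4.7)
The plan is to derive this corollary directly from Theorem \ref{maintheorem} by choosing the vector bundle to be $D = A$, the linear Poisson structure on $A^*$ to be the one coming from the Lie algebroid structure via Theorem \ref{alp-ala}, and the Hamiltonian to be $h := E_L \circ \#_{\mathcal{G}} : A^* \to \R$. Explicitly, if $(q^i, v^B)$ are fibered coordinates on $A$ associated with a local orthonormal basis $\{X_B\}$ and $(q^i, p_B)$ are the dual coordinates on $A^*$, then $\#_{\mathcal{G}}(q^i, p_B) = (q^i, p_B)$ (since the basis is orthonormal) and $h(q^i, p_B) = \tfrac{1}{2}\sum_B p_B^2 + V(q)$.

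The key computational step is to check that the Hamiltonian vector field $\mathcal{H}_h^{\Lambda_{A^*}}$ on $A^*$ is $\#_{\mathcal{G}}$-related to the Euler-Lagrange vector field $\xi_L = \xi_{\mathcal{G}} - (\operatorname{grad}_{\mathcal{G}}V)^{\mathbf{v}}$ on $A$. Using the local expression (\ref{HhLambda}) for $\mathcal{H}_h^{\Lambda_{A^*}}$ with the structure functions $\rho^i_B, C^E_{BC}$ of $A$, a direct comparison with the formula for $\xi_L$ recalled just before the corollary (together with the fact that $\Gamma^E_{BC} = \tfrac{1}{2}(C^C_{EB}+C^B_{EC}+C^E_{BC})$ and that $\operatorname{grad}_{\mathcal{G}}V$ has components $\rho^j_B \partial V/\partial q^j$ in the orthonormal frame) yields $T\#_{\mathcal{G}} \circ \mathcal{H}_h^{\Lambda_{A^*}} = \xi_L \circ \#_{\mathcal{G}}$. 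This is nothing but the classical Legendre transformation in the algebroid setting.

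Once this $\#_{\mathcal{G}}$-relatedness is established, the equivalence of the two conditions in the corollary follows formally. Since $\tau_A \circ \#_{\mathcal{G}} = \tau_{A^*}$, applying $T\tau_A$ to both sides of the relation gives $\xi_{L,\alpha}(q) = \mathcal{H}_{h,\alpha}^{\Lambda_{A^*}}(q)$ for every $q \in Q$; hence a curve $c$ is an integral curve of $\xi_{L,\alpha}$ iff it is an integral curve of $\mathcal{H}_{h,\alpha}^{\Lambda_{A^*}}$. Moreover $\#_{\mathcal{G}} \circ \alpha \circ c$ solves Euler-Lagrange iff $\alpha \circ c$ solves Hamilton's equations for $h$ (again by the Legendre transformation). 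Thus condition (i) of the corollary is equivalent to condition (i) of Theorem \ref{maintheorem}, which by that theorem is equivalent to $d^A(h \circ \alpha) = d^A(E_L \circ \#_{\mathcal{G}} \circ \alpha) = 0$. This is exactly the Hamilton-Jacobi equation in (ii).

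Finally, the reinterpretation ``$\tfrac{1}{2}\|\#_{\mathcal{G}} \circ \alpha\|_{\mathcal{G}}^2 + V$ is constant on the leaves of the Lie algebroid foliation'' is an immediate consequence of Theorem \ref{orb-theo-Lieal} applied to the function $f = E_L \circ \#_{\mathcal{G}} \circ \alpha \in C^\infty(Q)$: since $A$ is a genuine Lie algebroid, the generalized distribution $\widetilde{A} = \rho_A(A)$ is already involutive and $\operatorname{Lie}^\infty(\widetilde{A}) = \widetilde{A}$, so its orbits are precisely the leaves of the Lie algebroid foliation described in Remark \ref{foliation-remark}, and $d^A f = 0$ is equivalent to $f$ being constant on each such leaf. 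The only nontrivial step in the whole argument is the Legendre-transformation computation showing $\#_{\mathcal{G}}$-relatedness of $\mathcal{H}_h^{\Lambda_{A^*}}$ and $\xi_L$; everything else is a direct specialization of the general framework.
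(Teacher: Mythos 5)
Your proposal is correct and follows essentially the same route as the paper: specialize Theorem \ref{maintheorem} to $D=A$ with $h=E_{L}\circ\#_{\mathcal G}$, use the Legendre transformation $\flat_{\mathcal G}$ to relate $\xi_{L}$ and ${\mathcal H}_{h}^{\Lambda_{A^*}}$ (hence $\xi_{L,\alpha}={\mathcal H}_{h,\alpha}^{\Lambda_{A^*}}$ via $\tau_{A^*}\circ\flat_{\mathcal G}=\tau_{A}$), and read off the leaf statement from Theorem \ref{orb-theo-Lieal}. The only cosmetic difference is that you verify the $\#_{\mathcal G}$-relatedness by a direct coordinate computation, whereas the paper cites \cite{LeMaMa} for this fact.
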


\begin{proof}
The Legendre transformation associated with the Lagrangian
function $L$ is the vector bundle isomorphism $\flat_{\mathcal G}:
A \to A^*$ between $A$ and $A^*$ induced by the bundle metric
${\mathcal G}$ (for the definition of the Legendre transformation
associated with a Lagrangian function on a Lie algebroid, see
\cite{LeMaMa}). Thus, if we denote by ${\mathcal G}^*$ the bundle
metric on $A^*$ then, the Hamiltonian function $H_{L} = E_{L}
\circ \#_{\mathcal G}$ induced by the hyperregular Lagrangian
function $L$ is given by
\[
H_{L}(\gamma) = \displaystyle \frac{1}{2} {\mathcal G}^*(\gamma,
\gamma) + V(\tau_{A^*}(\gamma)), \; \; \; \mbox{ for } \gamma \in
A^*.
\]
Therefore, if $\Lambda_{A^*}$ is the corresponding linear Poisson
$2$-vector on $A^*$ and ${\mathcal H}_{H_L}^{\Lambda_{A^*}}$ is
the Hamiltonian vector field of $H_L$ with respect to
$\Lambda_{A^*}$, we have that the solutions of the Hamilton
equations are the integral curves of the vector field ${\mathcal
H}_{H_L}^{\Lambda_{A^*}}$. In fact, the vector fields $\xi_{L}$
and ${\mathcal H}_{H_L}^{\Lambda_{A^*}}$ are $\flat_{\mathcal
G}$-related, that is,
\[
T\flat_{\mathcal G} \circ \xi_{L} = {\mathcal
H}_{H_L}^{\Lambda_{A^*}} \circ \flat_{\mathcal G}.
\]
Consequently, if $\sigma: I \to A$ is a solution of the
Euler-Lagrange equations for $L$ then $\flat_{\mathcal G} \circ
\sigma: I \to A^*$ is a solution of the Hamilton equations for
$H_L$ and, conversely, if $\gamma: I \to A^*$ is a solution of the
Hamilton equations for $H_{L}$ then $\#_{\mathcal G} \circ \gamma:
I \to A$ is a solution of the Euler-Lagrange equations for $L$
(for more details, see \cite{LeMaMa}).

In addition, since $\tau_{A^*} \circ \flat_{\mathcal G} =
\tau_{A}$, it follows that
\[
\xi_{L,\alpha}(q) = (T_{\alpha(q)}\tau_{A^*})({\mathcal
H}_{H_L}^{\Lambda_{A^*}}(\alpha(q))) = {\mathcal
H}_{H_L,\alpha}^{\Lambda_{A^*}}(q), \; \; \mbox{ for } q \in Q,
\]
i.e., $\xi_{L,\alpha} = {\mathcal H}_{H_L,\alpha}^{\Lambda_{A^*}}$.

Thus, using Theorem \ref{maintheorem} (or, alternatively, using
Theorem 3.16 in \cite{LeMaMa}), we deduce the result.
\end{proof}
Next, we will apply Corollary \ref{unconsHam-Ja} to the particular
case when $A$ is the standard Lie algebroid $TQ$ and $\alpha$ is a
$1$-coboundary, that is, $\alpha = dS$ with $S: Q \to \R$ a real
$C^{\infty}$-function on $Q$. Note that, in this case, the bundle
metric ${\mathcal G}$ on $TQ$ is a Riemannian metric $g$ on $Q$
and that $\#_{\mathcal G} \circ \alpha = \#_{g} \circ dS$ is just
the gradient vector field of $S$, $grad_{g}S$, with respect to
$g$.
\begin{corollary}\label{Ham-Jastan}
Let $S: Q \to \R$ be a real $C^{\infty}$-function on $Q$. Then,
the following conditions are equivalent:
\begin{enumerate}
\item
If $c: I \to Q$ is an integral curve of the vector field
$\xi_{L,dS}$ on $Q$ we have that $grad_{g}S \circ c: I \to A$ is a
solution of the Euler-Lagrange equations for $L$.

\item
$S$ satisfies the \emph{Hamilton-Jacobi equation}
\[
d(E_{L} \circ grad_{g}S) = 0,
\]
that is, the function $\displaystyle \frac{1}{2} \| grad_{g}S
\|_{g}^2 + V$ on $Q$ is constant.
\end{enumerate}

\end{corollary}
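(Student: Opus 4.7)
The plan is to derive Corollary \ref{Ham-Jastan} as a direct specialization of Corollary \ref{unconsHam-Ja}, applied to the standard Lie algebroid $\tau_{TQ}: TQ \to Q$. Under this identification, the bundle metric $\mathcal{G}$ becomes a Riemannian metric $g$ on $Q$, the almost differential $d^A$ is the usual exterior differential $d$ on $Q$, the anchor $\rho_A$ is the identity on $TQ$, and the Legendre transformation $\flat_\mathcal{G}: TQ \to T^*Q$ together with its inverse $\#_g: T^*Q \to TQ$ satisfies $\#_g \circ dS = \grad_g S$ for any $S \in C^\infty(Q)$.

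First, I would observe that taking $\alpha = dS$ automatically furnishes a $1$-cocycle: since $A = TQ$ carries the genuine Lie algebroid structure, $d^A$ is the standard exterior differential, and hence $d^A \alpha = d(dS) = 0$ by $d^2 = 0$. This puts $\alpha$ within the hypotheses of Corollary \ref{unconsHam-Ja}, so the equivalence of (i) and (ii) there transfers immediately, provided we correctly interpret both conditions in the present setting. For (i), the vector field $\xi_{L,\alpha}$ on $Q$ specializes to $\xi_{L,dS}$, and $\#_\mathcal{G} \circ \alpha \circ c = \grad_g S \circ c$, which is exactly the statement of (i) in the corollary.

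Next, for (ii), I would unravel the Hamilton-Jacobi equation $d^A(E_L \circ \#_\mathcal{G} \circ \alpha) = 0$. With $d^A = d$ and $\#_\mathcal{G} \circ \alpha = \grad_g S$, this becomes $d(E_L \circ \grad_g S) = 0$, and because $E_L \circ \grad_g S = \tfrac{1}{2}\|\grad_g S\|_g^2 + V$, the equation precisely asserts that this scalar function on $Q$ has vanishing differential. The remark in Corollary \ref{unconsHam-Ja} reformulates this as constancy on the leaves of the Lie algebroid foliation of $A = TQ$; since the anchor of $TQ$ is the identity, $\tilde{D} = TQ$ itself, the leaves are simply the connected components of $Q$, and so the condition collapses to ordinary constancy of $\tfrac{1}{2}\|\grad_g S\|_g^2 + V$ (on each connected component of $Q$).

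No genuine obstacle arises: the entire content is a specialization, and the only point requiring any care is the identification of the Lie algebroid foliation of $TQ$ with the partition of $Q$ into connected components, which legitimizes replacing ``constant on the leaves'' by ``constant.'' With these identifications in place, the statement is a direct consequence of Corollary \ref{unconsHam-Ja}.
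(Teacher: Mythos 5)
Your proof is correct and follows exactly the paper's route: the paper obtains Corollary \ref{Ham-Jastan} precisely by specializing Corollary \ref{unconsHam-Ja} to $A = TQ$ with $\alpha = dS$ (automatically a $1$-cocycle since $d^2=0$), identifying $\#_{\mathcal G}\circ\alpha$ with $grad_g S$ and noting that the Lie algebroid foliation of $TQ$ has the (connected components of) $Q$ as leaves, so "constant on the leaves" becomes "constant."
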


\begin{remark}{\rm
Corollary \ref{Ham-Jastan} is a consequence of a well-known result
(see Theorem 5.2.4 in \cite{AbMa}).}
\end{remark}

Now, let $L: A \to \R$ (respectively, $\bar{L}: \bar{A}\to \R$) be
the Lagrangian function of an unconstrained mechanical system on a
Lie algebroid $\tau_{A}: A \to Q$ (respectively, $\tau_{\bar{A}}:
\bar{A} \to \bar{Q}$) and $(\tilde{F}, F)$ be a linear Poisson
morphism between the Poisson manifolds $(A^*, \{\cdot,
\cdot\}_{A^*})$ and $(\bar{A}^*, \{\cdot, \cdot\}_{\bar{A}^*})$
such that:
\begin{enumerate}
\item
$F: Q \to \bar{Q}$ is a surjective map.
\item
For each $q\in Q$, the linear map $\tilde{F}_{q} =
\tilde{F}_{|A^*_q}: A^*_q \to \bar{A}^*_{F(q)}$ satisfies the
following conditions
\[
\bar{\mathcal G}^*(\tilde{F}_{q}(\beta), \tilde{F}_{q}(\beta')) =
{\mathcal G}^*(\beta, \beta'), \; \; \; \mbox{ for } \beta, \beta'
\in A^*_q,
\]
\vspace{-.5cm}
\[
F(q) = F(q') \Longrightarrow \tilde{F}_q(A^*_q) =
\tilde{F}_{q'}(A^*_{q'}),
\]
where ${\mathcal G}^*$ (respectively, $\bar{\mathcal G}^*$) is the
bundle metric on $A^*$ (respectively, $\bar{A}^*$). Note that the
first condition implies that $\tilde{F}_{q}$ is injective and an isometry.
\item
If $V: Q \to \R$ (respectively, $\bar{V}: \bar{Q} \to \R$) is the
potential energy of the mechanical system on $A$ (respectively,
$\bar{A}$) we have that $\bar{V} \circ F = V$.
\end{enumerate}
Then, we deduce that $(\tilde{F}, F)$ is a Hamiltonian morphism
between the Hamiltonian systems $(A, \{\cdot, \cdot\}_{A^*},
\linebreak H_{L})$ and $(\bar{A}, \{\cdot, \cdot\}_{\bar{A}^*},
H_{\bar{L}})$, where $H_{L}$ (respectively, $H_{\bar{L}}$) is the
Hamiltonian function on $A^*$ (respectively, $\bar{A}^*$)
associated with the Lagrangian function $L$ (respectively,
$\bar{L}$).

Moreover, using Theorem \ref{morHa-Ja}, we conclude that
\begin{corollary}\label{relHam-Jaun}
\begin{enumerate}
\item
If $\alpha: Q \to A^*$ is a $1$-cocycle for the Lie algebroid $A$
($d^A\alpha = 0$), it satisfies the Hamilton-Jacobi equation
\begin{equation}\label{EqHa-Ja1}
d^A(E_{L} \circ \#_{\mathcal G} \circ \alpha) = 0
\end{equation}
and it is $(\tilde{F}, F)$-related with $\bar{\alpha} \in
\Gamma(\tau_{\bar{A}^*})$ then $d^{\bar{A}}\bar{\alpha} = 0$ and
$\bar{\alpha}$ is a solution of the Hamilton-Jacobi equation
\begin{equation}\label{EqHa-Ja2}
d^{\bar{A}}(E_{\bar{L}} \circ \#_{\bar{\mathcal G}} \circ
\bar{\alpha}) = 0.
\end{equation}
\item
If $\bar{\alpha}: \bar{Q} \to \tilde{F}(A^*) \subseteq \bar{A}^*$
is a $1$-cocycle for the Lie algebroid $\bar{A}$
($d^{\bar{A}}\bar{\alpha} = 0$) and it satisfies the
Hamilton-Jacobi equation (\ref{EqHa-Ja2}) then $d^{A}\alpha = 0$
and $\alpha$ is a solution of the Hamilton Jacobi equation
(\ref{EqHa-Ja1}). Here, $\alpha: Q \to A^*$ is the section of
$\tau_{A^*}: A^* \to Q$ characterized by the condition $\tilde{F}
\circ \alpha = \bar{\alpha} \circ F$.

\end{enumerate}
\end{corollary}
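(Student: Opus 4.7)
The plan is to deduce both parts of the corollary as an immediate application of Theorem~\ref{morHa-Ja}, once the Lagrangian picture is translated to the Hamiltonian one via the Legendre transform. First, I would recall from the proof of Corollary~\ref{unconsHam-Ja} that the Legendre transformation $\flat_{\mathcal G}: A \to A^*$ (and likewise $\flat_{\bar{\mathcal G}}$) identifies the Euler-Lagrange dynamics for $L$ with the Hamiltonian dynamics for $H_L = E_L \circ \#_{\mathcal G}$ on $A^*$, and analogously for $\bar L$ on $\bar A^*$. Under this identification, the Hamilton-Jacobi equations \eqref{EqHa-Ja1} and \eqref{EqHa-Ja2} coincide with the equations $d^A(H_L \circ \alpha) = 0$ and $d^{\bar A}(H_{\bar L} \circ \bar\alpha) = 0$ that appear in Theorem~\ref{maintheorem}.

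The key verification needed is that $(\tilde F, F)$ is a Hamiltonian morphism between the Hamiltonian systems $(A, \{\cdot,\cdot\}_{A^*}, H_L)$ and $(\bar A, \{\cdot,\cdot\}_{\bar A^*}, H_{\bar L})$. The linear Poisson morphism condition holds by hypothesis. It only remains to check that $H_{\bar L} \circ \tilde F = H_L$: using the explicit formula $H_L(\gamma) = \frac{1}{2} \mathcal G^*(\gamma,\gamma) + V(\tau_{A^*}(\gamma))$ recalled in the proof of Corollary~\ref{unconsHam-Ja} (and its analogue for $H_{\bar L}$), together with the identity $\tau_{\bar A^*} \circ \tilde F = F \circ \tau_{A^*}$, this is an immediate consequence of the fibrewise isometry condition $\bar{\mathcal G}^*(\tilde F_q(\beta), \tilde F_q(\beta')) = \mathcal G^*(\beta,\beta')$ and the compatibility $\bar V \circ F = V$ of the potentials.

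With this verified, I would then confirm that the remaining hypotheses of Theorem~\ref{morHa-Ja} are in place: $F$ is surjective by assumption, and $(\tilde F, F)$ is fibrewise injective because the isometry forces each $\tilde F_q$ to be injective, while the second bulleted condition in hypothesis (ii) of the corollary is precisely the requirement $\tilde F_q(A^*_q) = \tilde F_{q'}(A^*_{q'})$ whenever $F(q) = F(q')$. Part (i) of the corollary is then part (i) of Theorem~\ref{morHa-Ja}, applied to the $(\tilde F, F)$-related pair $(\alpha, \bar\alpha)$; and part (ii) is part (ii) of that theorem, where the hypothesis that $\bar\alpha$ takes values in the vector subbundle $\tilde F(A^*)$ is what is needed to produce the section $\alpha$ through the condition $\tilde F \circ \alpha = \bar\alpha \circ F$.

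Since every step reduces to a direct check against the stated hypotheses, I do not anticipate any serious obstacle; the substance of the corollary is simply that the isometry and potential-compatibility conditions precisely encode what it means for $(\tilde F, F)$ to preserve the Hamiltonian function $H_L$, after which the general transport of Hamilton-Jacobi solutions furnished by Theorem~\ref{morHa-Ja} specialises to the Lagrangian setting.
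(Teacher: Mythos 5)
Your proposal is correct and follows exactly the paper's route: the paper's own argument consists precisely of noting that the isometry condition together with $\bar V\circ F=V$ makes $(\tilde F,F)$ a Hamiltonian morphism between $(A,\{\cdot,\cdot\}_{A^*},H_L)$ and $(\bar A,\{\cdot,\cdot\}_{\bar A^*},H_{\bar L})$, identifying the Lagrangian Hamilton-Jacobi equations with $d^A(H_L\circ\alpha)=0$ and $d^{\bar A}(H_{\bar L}\circ\bar\alpha)=0$ via the Legendre transform as in Corollary \ref{unconsHam-Ja}, and then invoking Theorem \ref{morHa-Ja}. No discrepancies to report.
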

A particular example of the above general construction is the
following one.

Let $F: Q \to \bar{Q} = Q/G$ be a principal $G$-bundle. Denote by
$\phi: G \times Q \to Q$ the free action of $G$ on $Q$ and by
$T\phi: G \times TQ \to TQ$ the tangent lift of $\phi$. $T\phi$ is
a free action of $G$ on $TQ$. Then, we may consider the quotient
vector bundle $\tau_{\bar{A}}= \tau_{TQ/G}: \bar{A} = TQ/G \to
\bar{Q} = Q/G$. The sections of this vector bundle may be
identified with the vector fields on $Q$ which are $G$-invariant.
Thus, using that a $G$-invariant vector field is $F$-projectable
and that the standard Lie bracket of two $G$-invariant vector
fields is also a $G$-invariant vector field, we can define a Lie
algebroid structure $(\lcf\cdot, \cdot\rcf_{\bar{A}},
\rho_{\bar{A}})$ on the quotient vector bundle $\tau_{\bar{A}}=
\tau_{TQ/G}: \bar{A} = TQ/G \to \bar{Q} = Q/G$. The resultant Lie
algebroid is called the \emph{Atiyah (gauge) algebroid associated
with the principal bundle $F: Q \to \bar{Q} = Q/G$} (see
\cite{LeMaMa,Mac}).

On the other hand, denote by $T^*\phi: G \times T^*Q \to T^*Q$ the
cotangent lift of the action $\phi$. Then, the space of orbits of
$T^*\phi$, $T^*Q/G$, may be identified with the dual bundle
$\bar{A}^*$ to $\bar{A}$. Under this identification, the linear
Poisson structure on $\bar{A}^*$ is characterized by the following
condition: the canonical projection $\tilde{F}: A^* = T^*Q \to
T^*Q/G \simeq \bar{A}^*$ is a Poisson morphism, when on $A^* =
T^*Q$ we consider the linear Poisson structure induced by the
standard Lie algebroid $\tau_{A}= \tau_{TQ}: A = TQ \to Q$, that
is, the Poisson structure induced by the canonical symplectic
structure of $T^*Q$ (an explicit description of the linear Poisson
structure on $\bar{A}^* \simeq T^*Q/G$ may be found in
\cite{OrRa}).

Thus, $(\tilde{F}, F)$ is a linear Poisson morphism between $A^*=
T^*Q$ and $\bar{A}^* \simeq T^*Q/G$ and, in addition, $\tilde{F}$
is a fiberwise bijective vector bundle morphism.

Now, suppose that ${\mathcal G} = g$ is a $G$-invariant Riemannian
metric on $Q$ and that $V: Q \to \R$ is a $G$-invariant function
on $Q$. Then, we may consider the corresponding mechanical
Lagrangian function $L: A= TQ \to \R$. Moreover, it is clear that
$g$ and $V$ induce a bundle metric $\bar{\mathcal G}$ on $\bar{A}
= TQ/G$ and a real function $\bar{V}: \bar{Q} \to \R$ and,
therefore, a mechanical Lagrangian function $\bar{L}: \bar{A}=
TQ/G \to \R$.

On the other hand, we have that for each $q \in Q$ the map
$\tilde{F}_{q}: A^*_q = T^*_qQ \to \bar{A}^*_{F(q)} \simeq
(T^*Q/G)_{F(q)}$ is a linear isometry. Consequently, using
Corollary \ref{relHam-Jaun}, we deduce the following result
\begin{corollary}\label{Ham-JaAtun}
There exists a one-to-one correspondence between the $1$-cocycles
of the Atiyah algebroid $\tau_{\bar{A}}=\tau_{TQ/G}: \bar{A} =
TQ/G \to \bar{Q} = Q/G$ which are solutions of the Hamilton-Jacobi
equation for the mechanical Lagrangian function $\bar{L}: \bar{A}
= TQ/G \to \R$ and the $G$-invariant closed $1$-forms $\alpha$ on
$Q$ such that the function $\displaystyle \frac{1}{2}\|\#_{g}\circ
\alpha\|_{g}^2 + V$ is constant.
\end{corollary}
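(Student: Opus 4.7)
The plan is to obtain the corollary as an immediate specialization of Corollary \ref{relHam-Jaun} to the Hamiltonian morphism $(\tilde{F}, F)$ constructed in the discussion preceding the statement, where $F: Q \to \bar{Q} = Q/G$ is the principal bundle projection and $\tilde{F}: T^*Q \to T^*Q/G \simeq \bar{A}^*$ is the orbit map of the cotangent lift of $\phi$. All hypotheses needed to invoke Corollary \ref{relHam-Jaun} have already been verified in the paragraphs leading up to the statement: $(\tilde{F}, F)$ is a linear Poisson morphism, $F$ is surjective, each $\tilde{F}_q$ is a linear isometry (in particular injective), and $\bar{V} \circ F = V$, so that $(\tilde{F}, F)$ is Hamiltonian from $(A^* = T^*Q, \{\cdot,\cdot\}_{T^*Q}, H_L)$ to $(\bar{A}^*, \{\cdot,\cdot\}_{\bar{A}^*}, H_{\bar{L}})$. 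Moreover, since $\tilde{F}$ is fiberwise bijective, $\tilde{F}(A^*) = \bar{A}^*$, so the range constraint in Corollary \ref{relHam-Jaun}(ii) is automatic.

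The next step is to translate the two sides of the asserted bijection. A section $\alpha$ of $\tau_{T^*Q}: T^*Q \to Q$ is a $1$-form on $Q$, and since $d^A$ is the usual exterior differential for the standard Lie algebroid $A = TQ$, the cocycle condition $d^A \alpha = 0$ is just $d\alpha = 0$. Because $\tilde{F}$ is the orbit map of $T^*\phi$, the identity $\tilde{F} \circ \alpha = \bar{\alpha} \circ F$ characterizing $(\tilde{F}, F)$-related sections is equivalent to $G$-equivariance of $\alpha$, i.e.\ its $G$-invariance as a $1$-form; fiberwise bijectivity of $\tilde{F}$ then promotes this to a one-to-one correspondence between $G$-invariant $1$-forms $\alpha$ on $Q$ and arbitrary sections $\bar{\alpha}$ of $\bar{A}^*$. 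Since each $\tilde{F}_q$ is an isometry and $\bar{V} \circ F = V$, one has $H_{\bar{L}} \circ \tilde{F} = H_L$, and hence
\begin{equation*}
(E_{\bar{L}} \circ \#_{\bar{\mathcal{G}}} \circ \bar{\alpha}) \circ F \;=\; H_L \circ \alpha \;=\; \tfrac{1}{2}\|\#_g \circ \alpha\|_g^2 + V.
\end{equation*}
By surjectivity of $F$, the right-hand side is constant on $Q$ if and only if the left-hand side is constant on $\bar{Q}$.

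Combining these identifications with parts (i) and (ii) of Corollary \ref{relHam-Jaun} produces the claimed bijection. The one subtle point is that, on the Atiyah side, the statement asks for solutions of the Hamilton-Jacobi equation $d^{\bar{A}}(H_{\bar{L}} \circ \bar{\alpha}) = 0$, while on the $Q$ side it requires global constancy of $\tfrac{1}{2}\|\#_g \circ \alpha\|_g^2 + V$. This is handled by noting that the Atiyah algebroid is transitive -- its anchor coincides with the fiberwise surjective tangent map $TF$ -- so, assuming $\bar{Q}$ connected, $H^0(d^{\bar{A}}) \simeq \R$ and the Hamilton-Jacobi equation is equivalent to the stronger constancy condition. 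No genuine obstacle is expected: the argument is essentially bookkeeping on top of Corollary \ref{relHam-Jaun} together with the standard identification of sections of $T^*Q/G$ with $G$-invariant $1$-forms on $Q$.
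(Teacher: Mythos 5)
Your proposal is correct and follows essentially the same route as the paper, which obtains this corollary directly from Corollary \ref{relHam-Jaun} applied to the Hamiltonian morphism $(\tilde{F},F)$ constructed just before the statement, together with the identification (via fiberwise bijectivity of $\tilde{F}$) of sections of $\bar{A}^*\simeq T^*Q/G$ with $G$-invariant $1$-forms on $Q$. The only harmless variation is that you match the Hamilton--Jacobi equation on the Atiyah side with the constancy condition through transitivity of $\bar{A}$ and connectedness of $\bar{Q}$, whereas the paper implicitly uses the equivalent fact that on the (connected) base $Q$ the equation $d\bigl(E_{L}\circ\#_{g}\circ\alpha\bigr)=0$ already forces $\tfrac{1}{2}\|\#_{g}\circ\alpha\|_{g}^{2}+V$ to be constant.
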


{\sl An explicit example}: {\sl \bf The Elroy's Beanie}. This
system is probably the most simple example of a dynamical system
with a non-Abelian Lie group of symmetries. It consists in two
planar rigid bodies attached at their centers of mass, moving
freely in the plane (see \cite{MaMoRa}). So, the configuration
space is $Q=SE(2)\times S^1$ with coordinates $q=(x, y, \theta,
\psi)$, where the three first coordinates  describe the position
and orientation of the center of mass of the first body and the
last one the relative orientation between both bodies. The
Lagrangian $L: TQ\to \R$ is
\[
L= \frac{1}{2}m
(\dot{x}^2+\dot{y}^2)+\frac{1}{2}
I_1\dot{\theta}^2+\frac{1}{2}I_2
(\dot{\theta}+\dot{\psi})^2-V(\psi)
\]
where $m$ denotes the mass of the system and
$I_1$ and $I_2$ are the inertias of the first and
the second body, respectively; additionally, we
also consider a potential function of the form
$V(\psi)$.
The kinetic energy is associated with the Riemannian metric ${\mathcal G}$ on $Q$ given by
\[
{\mathcal G}=m(dx^2+dy^2)+(I_1+I_2)d\theta^2+I_2d\theta\otimes
d\psi +I_2d\psi\otimes d\theta+I_2 d\psi^2.
\]

The system is $SE(2)$-invariant for the action  \[
\Phi_g(q)=\left( z_1+x\cos\alpha-y\sin \alpha,
z_2+x\sin\alpha+y\cos \alpha, \alpha+\theta, \psi \right)
\]
where $g=(z_1, z_2, \alpha)$.

Let $\{\xi_1, \xi_2, \xi_3\}$ be the standard basis of
$\frak{se}(2)$,
\[
{}[\xi_1,\xi_2]=0,\qquad [\xi_1,\xi_3]=-\xi_2\qquand [\xi_2,
\xi_3]=\xi_1\; .
\]
The quotient space $\bar{Q}=Q/SE(2)=(SE(2)\times S^1)/SE(2)\simeq
S^1$ is naturally parameterized by the coordinate $\psi$. The
Atiyah algebroid $TQ/SE(2)\to \bar{Q}$ is identified with the
vector bundle: $ \tau_{\bar{A}}: \bar{A}=TS^1\times {\mathfrak
se}(2)\to S^1. $ The canonical basis of sections of
$\tau_{\bar{A}}$ is: ${\displaystyle \left\{
\frac{\partial}{\partial \psi}, \xi_1, \xi_2, \xi_3\right\}}. $
Since the metric ${\mathcal G}$ is also $SE(2)$-invariant we
obtain a bundle metric $\bar{\mathcal G}$ and a $\bar{\mathcal
G}$-orthonormal basis of sections:
\[
\left\{
\displaystyle X_1=\sqrt{\frac{I_1+I_2}{I_1I_2}}\left(
\frac{\partial}{\partial \psi}-\frac{I_2}{I_1+I_2} \xi_3\right),
X_2=\frac{1}{\sqrt{m}}\xi_1, X_3=\frac{1}{\sqrt{m}}\xi_2, X_4=\frac{1}{\sqrt{I_1+I_2}}\xi_3\right\}
\]
In the coordinates $(\psi, v^1, v^2, v^3, v^4)$ induced by the orthonormal basis of sections, the reduced Lagrangian is $$\bar{L}=\frac{1}{2}\left( (v^1)^2+(v^2)^2 +(v^3)^2+(v^4)^2\right)-V(\psi)\; .$$

Additionally, we deduce that
\[
\begin{array}{ll}
\displaystyle \lcf X_1, X_2 \rcf_{\bar{A}}= -\sqrt{\frac{I_2}{I_1(I_1+I_2)}}X_3, & \displaystyle\lcf X_1, X_3 \rcf_{\bar{A}}=\sqrt{\frac{I_2}{I_1(I_1+I_2)}}X_2,\\[10pt]
\displaystyle \lcf X_1, X_4 \rcf_{\bar{A}}=0, &\displaystyle \lcf X_2, X_3 \rcf_{\bar{A}}=0,\\[10pt]
\displaystyle \lcf X_2, X_4 \rcf_{\bar{A}}=-\frac{1}{\sqrt{I_1+I_2}}X_3,& \displaystyle\lcf X_3, X_4 \rcf_{\bar{A}}=\frac{1}{\sqrt{I_1+I_2}}X_2.
\end{array}
\]
Therefore, the non-vanishing structure functions are
\[
\displaystyle C_{12}^3=-\sqrt{\frac{I_2}{I_1(I_1+I_2)}}, \quad
C_{13}^2=\sqrt{\frac{I_2}{I_1(I_1+I_2)}},\quad \displaystyle
C_{24}^3=-\frac{1}{\sqrt{I_1+I_2}}, \quad
C_{34}^2=\frac{1}{\sqrt{I_1+I_2}}.
\]
Moreover,
\[
\rho_{\bar{A}}(X_1)=\sqrt{\frac{I_1+I_2}{I_1I_2}}
\frac{\partial}{\partial \psi}, \quad \rho_{\bar{A}}(X_2)=0, \quad\rho_{\bar{A}}(X_3)=0,\quad\rho_{\bar{A}}(X_4)=0.
\]
The local expression of the Euler-Lagrange equations for the reduced Lagrangian system $\bar{L}:\bar{A}\to \R$ is:
\begin{eqnarray*}
\dot{\psi}&=&\sqrt{\frac{I_1+I_2}{I_1I_2}}v^1,\\
\dot{v}^1&=&-\sqrt{\frac{I_1+I_2}{I_1I_2}}
\frac{\partial V}{\partial \psi},\\
\dot{v}^2&=&-\sqrt{\frac{I_2}{I_1(I_1+I_2)}}v^1v^3+\frac{1}{\sqrt{I_1+I_2}}v^3v^4,\\
\dot{v}^3&=&\sqrt{\frac{I_2}{I_1(I_1+I_2)}}v^1v^2-\frac{1}{\sqrt{I_1+I_2}}v^2v^4,\\
\dot{v}^4&=&0.
\end{eqnarray*}
{}From the two first equations we obtain the equation:
\[
\ddot{\psi}=-\frac{I_1+I_2}{I_1I_2}
\frac{\partial V}{\partial \psi}.
\]
 A section  $\alpha: S^1\to \bar{A}^*$, $\alpha(\psi)=
 (\psi, \alpha_1(\psi), \alpha_2(\psi), \alpha_3(\psi), \alpha_4(\psi))$, is
 a 1-cocycle, i.e. $d^{\bar{A}}\alpha = 0$,
if and only  if
 $\alpha_2(\psi)=0$, $\alpha_3(\psi)=0$ and $\displaystyle\frac{\partial \alpha_4}{\partial \psi}=0$.
Therefore, the  Hamilton-Jacobi  equation $d^{\bar{A}}(E_{\bar{L}}
\circ \#_{\bar{\mathcal G}} \circ \alpha)=0$ is
\[
\frac{\partial V}{\partial \psi}+\frac{\partial \alpha_1}{\partial
\psi}\alpha_1=0.
\]
Thus,  integrating we obtain
\[
2V(\psi)+(\alpha_1(\psi))^2=k_1
\]
with $k_1$ constant. Therefore,
\[
\alpha_1(\psi)=\sqrt{k_1-2V(\psi)}
\]
and all the solutions of the Hamilton-Jacobi equation are of the form
\[
\alpha(\psi)=(\psi; \sqrt{k_1-2V(\psi)}, 0, 0, k_2).
\]
with $k_2$ constant.

\subsection{Mechanical systems subjected to linear nonholonomic
constraints on a Lie algebroid}

Let $\tau_{A}: A \to Q$ be a Lie algebroid over a manifold $Q$ and
denote by $(\lcf\cdot, \cdot \rcf_{A}, \rho_{A})$ the Lie
algebroid structure on $A$.

A \emph{mechanical system subjected to linear nonholonomic
constraints on $A$} is a pair $(L, D)$, where:
\begin{enumerate}
\item
$L: A \to \R$ is a \emph{Lagrangian function of mechanical type},
that is,
\[
L(a) = \displaystyle \frac{1}{2} {\mathcal G}(a, a) -
V(\tau_{A}(a)), \; \; \; \mbox{ for } a\in A,
\]
and
\item
$D$ is the total space of a vector subbundle $\tau_{D}: D \to Q$
of $A$. The vector subbundle $D$ is said to be the \emph{constraint subbundle}.

\end{enumerate}

This kind of systems were considered in \cite{CoLeMaMa,CoMa,GLMM}.

We will denote by $i_{D}: D \to A$ the canonical inclusion. We
also consider the orthogonal decomposition $A = D \oplus
D^{\perp}$ and the associated orthogonal projectors $P: A \to D$
and $Q: A \to D^{\perp}$. Then, the solutions of the dynamical
equations for the nonholonomic (constrained) system $(L, D)$ are
just the integral curves of the vector field $\xi_{(L, D)}$ on $D$
defined by
\[
\xi_{(L, D)} = TP \circ \xi_{L} \circ i_{D},
\]
where $\xi_{L}$ is the solution of the free dynamics (see Section
\ref{unconmesys}) and $TP: TA \to TD$ is the tangent map to the
projector $P$.

In fact, suppose that $(q^i)$ are local coordinates on an open
subset $U$ of $Q$ and that $\{X_{B}\} = \{X_{\gamma}, X_{b}\}$ is
a basis of sections of the vector bundle $\tau_{A}^{-1}(U) \to U$
such that $\{X_{\gamma}\}$ (respectively, $\{X_{b}\}$) is an
orthonormal basis of sections of the vector subbundle
$\tau_{D}^{-1}(U) \to U$ (respectively, $\tau_{D^{\perp}}^{-1}(U)
\to U$). We will denote by $(q^i, v^B) = (q^i, v^{\gamma}, v^b)$
the corresponding local coordinates on $A$. Then, the local
equations defining the vector subbundle $D$ are
\[
v^b = 0, \; \; \; \mbox{ for all } b.
\]
Moreover, if $\rho^i_{B}$ and $C^{E}_{BC}$ are the local structure
functions of $A$, we have that the local expression of the vector
field $\xi_{(L, D)}$ is
\begin{equation}\label{xiLD}
 \xi_{(L, D)} = \displaystyle
\rho^i_{\gamma}v^{\gamma} \frac{\partial}{\partial q^i} -
(C^{\nu}_{\delta\gamma} v^{\gamma}v^{\nu} + \rho^{i}_{\delta}
\frac{\partial V}{\partial q^i})\frac{\partial}{\partial
v^{\delta}}.
\end{equation}
Thus, the dynamical equations for the
constrained system $(L, D)$ are
\begin{equation}\label{E-Lnonh}
\dot{q}^i = \rho^{i}_{\gamma}v^{\gamma}, \; \; \; \dot{v}^{\delta}
= \displaystyle -C^{\nu}_{\delta\gamma} v^{\gamma}v^{\nu} -
\rho^{i}_{\delta} \frac{\partial V}{\partial q^i}, \; \; \; v^b =
0.
\end{equation}
On the other hand, the \emph{constrained connection} $\cnabla:
\Gamma(\tau_{A}) \times \Gamma(\tau_{A}) \to \Gamma(\tau_{A})$
associated with the system $(L, D)$ is given by
\[
\cnabla_{X}Y = P(\nabla^{\mathcal G}_{X}Y) + \nabla_{X}^{\mathcal
G}QY, \; \; \; \mbox{ for } X, Y \in \Gamma(\tau_{A}).
\]
Therefore, if $\check{\Gamma}_{BC}^{E}$ are the coefficients of
$\cnabla$, we have that
\[
\check{\Gamma}_{\gamma\nu}^{\delta} = \Gamma_{\gamma\nu}^{\delta}
= \displaystyle \frac{1}{2} (C_{\delta\gamma}^{\nu} +
C_{\delta\nu}^{\gamma} + C^{\delta}_{\gamma\nu}), \; \; \;
\check{\Gamma}_{\gamma\nu}^{a} = 0.
\]
Consequently, Eqs. (\ref{E-Lnonh}) are just the
\emph{Lagrange-D'Alembert equations} for the system $(L, D)$
considered in \cite{CoLeMaMa} (see also \cite{CoMa,GLMM}).

Next, we will introduce a linear almost Poisson structure
$\{\cdot, \cdot\}_{D^*}$ on $D^*$.

Denote by $\{\cdot, \cdot\}_{A^*}$ the linear Poisson bracket on
$A^*$ induced by the Lie algebroid structure on $A$. Then,
\begin{equation}\label{laPbDstar}
\{\varphi, \psi\}_{D^*} = \{\varphi \circ i_{D}^*,\psi \circ
i_{D}^*\}_{A^*} \circ P^*,
\end{equation}
for $\varphi, \psi \in C^{\infty}(D^*)$, where $i_{D}^*: A^* \to
D^*$ and $P^*: D^* \to A^*$ are the dual maps of the monomorphism
$i_{D}: D \to A$ and the projector $P: D \to A$, respectively.

It is easy to prove that $\{\cdot, \cdot\}_{D^*}$ is a linear
almost Poisson bracket on $D^*$. Moreover, if $(q^i, p_{B}) =
(q^i, p_{\gamma}, p_{b})$ are the dual coordinates of $(q^i, v^B)
= (q^i, v^{\gamma}, v^{b})$ on $A^*$ then it is clear that $(q^i,
p_{\gamma})$ are local coordinates on $D^*$ and, in addition, the
local expressions of $i_{D}^*$ and $P^*$ are
\begin{equation}\label{Localexp}
i_{D}^*(q^i, p_{\gamma}, p_{b}) = (q^i, p_{\gamma}), \; \; \;
P^*(q^i, p_{\gamma}) = (q^i, p_{\gamma}, 0).
\end{equation}
Thus, from (\ref{LambdaDstar}), (\ref{laPbDstar}) and
(\ref{Localexp}), we have that
\begin{equation}\label{laPbDstar1}
\{\varphi, \psi\}_{D^*} = \displaystyle
\rho^i_{\gamma}(\frac{\partial \varphi}{\partial q^i}
\frac{\partial \psi}{\partial p_{\gamma}} - \frac{\partial
\varphi}{\partial p_{\gamma}}\frac{\partial \psi}{\partial q^i}) -
C_{\beta\delta}^{\gamma}p_{\gamma} \frac{\partial
\varphi}{\partial p_{\beta}} \frac{\partial \psi}{\partial
p_{\delta}},
\end{equation}
for $\varphi, \psi \in C^{\infty}(D^*)$.

On the other hand, one may introduce a linear Poisson bracket
$\{\cdot, \cdot\}_{A}$ on $A$ in such a way that the vector bundle
map $\flat_{\mathcal G}: A \to A^*$ is a Poisson isomorphism, when
on $A^*$ we consider the linear Poisson structure $\{\cdot,
\cdot\}_{A^*}$. Since the local expression of $\flat_{\mathcal G}$
is
\[
\flat_{\mathcal G}(q^i, v^B) = (q^i, v^B)
\]
we deduce that the local expression of the linear Poisson bracket
$\{\cdot, \cdot\}_{A}$ is
\[
\{\bar{\varphi}, \bar{\psi}\}_{A} = \displaystyle
\rho^i_{B}(\frac{\partial \bar{\varphi}}{\partial q^i}
\frac{\partial \bar{\psi}}{\partial v^B} - \frac{\partial
\bar{\varphi}}{\partial v^{B}}\frac{\partial \bar{\psi}}{\partial
q^i}) - C_{BC}^{E}v^E \frac{\partial \bar{\varphi}}{\partial v^B}
\frac{\partial \bar{\psi}}{\partial v^C},
\]
for $\bar{\varphi}, \bar{\psi} \in C^{\infty}(A)$.

Using the bracket $\{\cdot, \cdot\}_{A}$, one may define a linear
almost Poisson bracket $\{\cdot, \cdot\}_{nh}$ on $D$ as follows.
If $\tilde{\varphi}$ and $\tilde{\psi}$ are real
$C^{\infty}$-functions on $D$ then
\[
\{\tilde{\varphi}, \tilde{\psi}\}_{nh} = \{\tilde{\varphi}\circ P,
\tilde{\psi}\circ P\}_{A} \circ i_{D}.
\]
We have that
\begin{equation}\label{nonholbrac}
\{\tilde{\varphi}, \tilde{\psi}\}_{nh} = \displaystyle
\rho^i_{\gamma}(\frac{\partial \tilde{\varphi}}{\partial q^i}
\frac{\partial \tilde{\psi}}{\partial v^\gamma} - \frac{\partial
\tilde{\varphi}}{\partial v^{\gamma}}\frac{\partial
\tilde{\psi}}{\partial q^i}) - C_{\beta \delta}^{\gamma}v^\gamma
\frac{\partial \tilde{\varphi}}{\partial v^\beta} \frac{\partial
\tilde{\psi}}{\partial v^\delta}.
\end{equation}
Thus, a direct computation proves that $\{\cdot, \cdot\}_{nh}$ is
just the \emph{nonholonomic bracket} introduced in
\cite{CoLeMaMa}. Note that, using (\ref{xiLD}) and
(\ref{nonholbrac}), we obtain that $\xi_{(L, D)}$ is the
Hamiltonian vector field of the function $(E_{L})_{|D}$ with
respect to the nonholonomic bracket $\{\cdot, \cdot\}_{nh}$, i.e.,
\[
\dot{\tilde{\varphi}} = \xi_{(L, D)}(\tilde{\varphi}) =
\{\tilde{\varphi}, (E_{L})_{|D}\}_{nh},
\]
for $\tilde{\varphi} \in C^{\infty}(D)$ (see also
\cite{CoLeMaMa}).

Moreover, if ${\mathcal G}_{D}$ is the restriction of the bundle
metric ${\mathcal G}$ to $D$ and $\flat_{{\mathcal G}_{D}}: D \to
D^*$ is the corresponding vector bundle isomorphism then, from
(\ref{laPbDstar1}) and (\ref{nonholbrac}), we deduce that
\[
\{\varphi \circ \flat_{{\mathcal G}_{D}}, \psi \circ
\flat_{{\mathcal G}_{D}}\}_{nh} = \{\varphi, \psi\}_{D^*} \circ
\flat_{{\mathcal G}_{D}}, \; \; \; \mbox{ for } \varphi, \psi \in
C^{\infty}(D^*).
\]
For this reason, $\{\cdot, \cdot\}_{D^*}$ will also be called
the \emph{nonholonomic bracket associated with the constrained
system} $(L, D)$.

We will denote by $(\lcf\cdot, \cdot\rcf_{D}, \rho_{D})$
(respectively, $d^D$) the corresponding skew-symmetric algebroid
structure (respectively, almost differential) on the vector bundle
$\tau_{D}: D\to Q$ and by $\#_{{\mathcal G}_{D}}: D^* \to D$ the
inverse morphism of $\flat_{{\mathcal G}_{D}}: D \to D^*$.

Then, from (\ref{alafromalp}) and (\ref{laPbDstar}), it follows
that
\begin{equation}\label{alLieD}
\lcf X, Y\rcf_{D} = P\lcf i_{D}\circ X, i_{D}\circ Y\rcf_{A}, \;
\; \; \rho_{D}(X) = \rho_{A}(i_{D}\circ X),
\end{equation}
for $X, Y \in \Gamma(\tau_{D})$. Therefore, using (\ref{deD}), we
have that
\begin{equation}\label{difD}
d^D\alpha = \Lambda^k i_{D}^*(d^A(P^* \circ \alpha)), \; \; \;
\mbox{ for } \alpha \in \Gamma(\Lambda^k\tau_{D^*}).
\end{equation}
On the other hand, if $\alpha: Q \to D^*$ is a section of the
vector bundle $\tau_{D^*}: D^* \to Q$ one may consider the vector
field $\xi_{(L, D),\alpha}$ on $Q$ given by
\begin{equation}\label{xiLDalpha}
\xi_{(L, D)\alpha}(q) = (T_{\#_{{\mathcal
G}_{D}}(\alpha(q))}\tau_{D})(\xi_{(L, D)}(\#_{{\mathcal
G}_{D}}(\alpha(q)))), \; \; \; \mbox{ for } q\in Q.
\end{equation}

\begin{corollary}\label{nonhoHam-Jaeq}
Let $\alpha: Q \to D^*$ be a $1$-cocycle of the skew-symmetric
algebroid $(D, \lcf\cdot, \cdot\rcf_{D}, \rho_{D})$, that is,
$d^D\alpha = 0$. Then, the following conditions are equivalent:
\begin{enumerate}
\item
If $c: I \to Q$ is an integral curve of the vector field $\xi_{(L,
D)\alpha}$ on $Q$ we have that $\#_{{\mathcal G}_{D}}\circ \alpha
\circ c: I \to D$ is a solution of the Lagrange-D'Alembert
equations for the constrained system $(L, D)$.
\item
$\alpha$ satisfies \emph{the nonholonomic Hamilton-Jacobi
equation}
\[
d^D((E_{L})_{|D} \circ \#_{{\mathcal G}_{D}} \circ \alpha) = 0.
\]
\noindent If, additionally, $H^0(d^D) \simeq \R$ (or the
skew-symmetric algebroid $(D, \lcf \cdot , \cdot \rcf_{D},
\rho_{D})$ is completely nonholonomic and $Q$ is connected) then
conditions (i) and (ii) are equivalent to
\item
$(E_{L})_{|D} \circ \#_{{\mathcal G}_{D}} \circ \alpha = \mbox{
constant }.$
\end{enumerate}
\end{corollary}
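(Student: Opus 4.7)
The plan is to reduce Corollary \ref{nonhoHam-Jaeq} to the general results Theorem \ref{maintheorem} and Corollary \ref{strong-HJ} by viewing the nonholonomic system as the Hamiltonian system $(D, \{\cdot,\cdot\}_{D^*}, h)$, where $h = (E_L)_{|D}\circ \#_{\mathcal{G}_D}: D^*\to \R$ is the Hamiltonian obtained from the constrained energy via the Legendre transformation $\flat_{\mathcal{G}_D}: D\to D^*$ associated with the restricted metric $\mathcal{G}_D$.

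The main technical step is to show that $\flat_{\mathcal{G}_D}$ intertwines the nonholonomic dynamics with the Hamiltonian dynamics of $h$ on $D^*$, i.e.
\[
T\flat_{\mathcal{G}_D}\circ \xi_{(L,D)} \;=\; \mathcal{H}_h^{\Lambda_{D^*}}\circ \flat_{\mathcal{G}_D}.
\]
To verify this I would work in the local coordinates $(q^i, v^\gamma)$ on $D$ associated with an orthonormal local basis $\{X_\gamma\}$ of $\Gamma(\tau_D)$, so that $\flat_{\mathcal{G}_D}(q^i, v^\gamma) = (q^i, p_\gamma)$ with $p_\gamma = v^\gamma$ and $h(q^i, p_\gamma) = \tfrac{1}{2}\sum_\gamma p_\gamma^2 + V(q)$. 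By (\ref{alLieD}), the structure functions $C^\gamma_{\alpha\beta}$ and $\rho^i_\alpha$ of the skew-symmetric algebroid $(\lcf\cdot,\cdot\rcf_D,\rho_D)$ associated with $\{\cdot,\cdot\}_{D^*}$ coincide exactly with the coefficients appearing in the local form (\ref{xiLD}) of $\xi_{(L,D)}$; then substituting $h$ into (\ref{HhLambda}) and comparing term by term yields the desired identity (the quadratic term $C^\gamma_{\alpha\beta}p_\gamma p_\beta$ matches $C^\nu_{\delta\gamma}v^\gamma v^\nu$ after a relabelling of the summation indices).

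Once this compatibility is in place, the equality $\tau_{D^*}\circ \flat_{\mathcal{G}_D} = \tau_D$ yields
\[
\xi_{(L,D),\alpha}(q) \;=\; \mathcal{H}_{h,\alpha}^{\Lambda_{D^*}}(q)\qquad\text{for all } q\in Q,
\]
using the definitions (\ref{xiLDalpha}) and the one from Section \ref{subsection3.1}. Thus condition (i) of the corollary translates verbatim into the condition that, for every integral curve $c$ of $\mathcal{H}_{h,\alpha}^{\Lambda_{D^*}}$, the curve $\alpha\circ c$ is a solution of the Hamilton equations of $h$ with respect to $\{\cdot,\cdot\}_{D^*}$. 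Theorem \ref{maintheorem} then gives directly the equivalence between (i) and the vanishing of $d^D(h\circ\alpha) = d^D((E_L)_{|D}\circ \#_{\mathcal{G}_D}\circ\alpha)$, which is condition (ii).

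For the last part, under the extra hypothesis that either $H^0(d^D)\simeq \R$ or that $(D,\lcf\cdot,\cdot\rcf_D,\rho_D)$ is completely nonholonomic with $Q$ connected, Corollary \ref{strong-HJ} upgrades the Hamilton-Jacobi equation to the stronger statement that $h\circ\alpha$ is constant, giving (iii). The principal obstacle is carrying out carefully the local verification of the intertwining relation $T\flat_{\mathcal{G}_D}\circ \xi_{(L,D)} = \mathcal{H}_h^{\Lambda_{D^*}}\circ \flat_{\mathcal{G}_D}$; once this bridge between the nonholonomic and almost-Poisson-Hamiltonian pictures is established, the corollary follows from the general machinery of Section \ref{section3} without further work.
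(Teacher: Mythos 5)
Your proposal is correct and follows essentially the same route as the paper: introduce $h_{(L,D)}=(E_L)_{|D}\circ\#_{\mathcal{G}_D}$, show that $\xi_{(L,D)}$ and $\mathcal{H}_{h_{(L,D)}}^{\Lambda_{D^*}}$ are $\flat_{\mathcal{G}_D}$-related (the paper gets this from the bracket isomorphism $\flat_{\mathcal{G}_D}$ and the local expressions, which is exactly your coordinate check), deduce $\xi_{(L,D)\alpha}=\mathcal{H}_{h_{(L,D)},\alpha}^{\Lambda_{D^*}}$ from $\tau_{D^*}\circ\flat_{\mathcal{G}_D}=\tau_D$, and then invoke Theorem \ref{maintheorem} and Corollary \ref{strong-HJ}. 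No gaps worth noting.
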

\begin{proof}
Denote by $h_{(L, D)}: D^* \to \R$ the Hamiltonian function on
$D^*$ given by $h_{(L, D)} = (E_{L})_{|D} \circ \#_{{\mathcal
G}_{D}}$, by $\Lambda_{D^*}$ the linear almost Poisson $2$-vector
on $D^*$ and by ${\mathcal H}_{h_{(L, D)}}^{\Lambda_{D^*}}$ the
Hamiltonian vector field of $h_{(L, D)}$ with respect to
$\Lambda_{D^*}$. Then, the vector fields $\xi_{(L, D)}$ and
${\mathcal H}_{h_{(L, D)}}^{\Lambda_{D^*}}$ on $D$ and $D^*$,
respectively, are $\flat_{{\mathcal G}_{D}}$-related. Thus, from
(\ref{xiLDalpha}) and since $\tau_{D^*} \circ \flat_{{\mathcal
G}_{D}} = \tau_{D}$, it follows that
\[
{\mathcal H}_{h_{(L, D)}\alpha}^{\Lambda_{D^*}}(q) =
(T_{\alpha(q)}\tau_{D^*})({\mathcal H}_{h_{(L,
D)}}^{\Lambda_{D^*}}(\alpha(q))) = \xi_{(L, D)\alpha}(q),
\]
that is, the vector fields ${\mathcal H}_{h_{(L,
D)}\alpha}^{\Lambda_{D^*}}$ and $\xi_{(L, D)\alpha}$ are equal.

Moreover, if $\sigma: I \to D$ is a curve on $D$, we have that
$\sigma$ is a solution of the Lagrange-D'Alembert equations for
the constrained system $(L, D)$ if and only if $\flat_{{\mathcal
G}_{D}} \circ \sigma: I \to D^*$ is a solution of the Hamilton
equations for $h_{(L, D)}$.

Therefore, using Theorem \ref{maintheorem}, we deduce that
conditions (i) and (ii) are equivalent.

In addition, if $H^0(d^D) \simeq \R$ (or if $(D, \lcf \cdot ,
\cdot \rcf_{D}, \rho_{D})$ is completely nonholonomic and $Q$ is
connected) then, from Corollary \ref{strong-HJ}, it follows that
conditions (i), (ii) and (iii) are equivalent.
\end{proof}

\begin{remark}\label{relIgLeMa}{\rm
Let $D^0$ be the annihilator of $D$ and ${\mathcal I}(D^0)$ be the
algebraic ideal generated by $D^0$. Thus, a section $\nu$ of the
vector bundle $\Lambda^kA^* \to Q$ belongs to ${\mathcal I}(D^0)$
if
\[
\nu(q)(v_{1}, \dots , v_{k}) = 0, \; \; \mbox{ for all } q\in Q
\mbox{ and } v_{1}, \dots , v_{k} \in D_{q}.
\]
Now, let ${\mathcal Z}(\tau_{D^*})$ be the set defined by
\[
{\mathcal Z}(\tau_{D^*}) = \{ \alpha \in \Gamma(\tau_{D^*}) /
d^D\alpha = 0\}
\]
and $\tilde{\mathcal Z}(\tau_{(D^{\perp})^0})$ be the set given by
\[
\tilde{\mathcal Z}(\tau_{(D^{\perp})^0}) = \{ \tilde{\alpha} \in
\Gamma(\tau_{(D^{\perp})^0}) / d^A\tilde{\alpha} \in {\mathcal
I}(D^0)\}
\]
where $(D^{\perp})^0$ is the annihilator of the orthogonal
complement $D^{\perp}$ of $D$ and $\tau_{(D^{\perp})^0}:
(D^{\perp})^0 \to Q$ is the corresponding vector bundle
projection. Then, using (\ref{difD}), we deduce that the map
\[
{\mathcal Z}(\tau_{D^*}) \to \tilde{\mathcal
Z}(\tau_{(D^{\perp})^0}), \; \; \; \alpha \to P^* \circ \alpha
\]
defines a bijection from ${\mathcal Z}(\tau_{D^*})$ on
$\tilde{\mathcal Z}(\tau_{(D^{\perp})^0})$. In fact, the inverse
map is given by
\[
\tilde{\mathcal Z}(\tau_{(D^{\perp})^0}) \to {\mathcal
Z}(\tau_{D^*}), \; \; \; \tilde{\alpha} \to i_{D}^* \circ
\tilde{\alpha}.
\]
On the other hand, if $f$ is a real $C^{\infty}$-function on $Q$
then
\[
d^Df = 0 \Longleftrightarrow (d^Af)(Q) \subseteq D^0.
\]
}
\end{remark}

Let $(L, D)$ (respectively, $(\bar{L}, \bar{D})$) be a
nonholonomic system on a Lie algebroid $\tau_{A}: A \to Q$
(respectively, $\tau_{\bar{A}}: \bar{A} \to \bar{Q}$) and
$(\tilde{F}, F)$ be a linear almost Poisson morphism between the
almost Poisson manifolds $(D^*, \{\cdot, \cdot\}_{D^*})$ and
$(\bar{D}^*, \{\cdot, \cdot\}_{\bar{D}^*})$ such that:
\begin{enumerate}
\item
$F: Q \to \bar{Q}$ is a surjective map.
\item
For each $q\in Q$, the linear map $\tilde{F}_{q} =
\tilde{F}_{|D^*_q}: D^*_q \to \bar{D}^*_{F(q)}$ satisfies the
following conditions
\[
{\mathcal G}_{\bar{D}^*}(\tilde{F}_{q}(\beta),
\tilde{F}_{q}(\beta')) = {\mathcal G}_{D^*}(\beta, \beta'), \; \;
\; \mbox{ for } \beta, \beta' \in D^*_q,
\]
\vspace{-.5cm}
\[
F(q) = F(q') \Longrightarrow \tilde{F}_q(D^*_q) =
\tilde{F}_{q'}(D^*_{q'}),
\]
where ${\mathcal G}_{D^*}$ (respectively, ${\mathcal
G}_{\bar{D}^*}$) is the bundle metric on $D^*$ (respectively,
$\bar{D}^*$).
\item
If $V: Q \to \R$ (respectively, $\bar{V}: \bar{Q} \to \R$) is the
potential energy for the nonholonomic system on $A$ (respectively,
$\bar{A}$) we have that $\bar{V} \circ F = V$.
\end{enumerate}
Then, we deduce that $(\tilde{F}, F)$ is a Hamiltonian morphism
between the Hamiltonian systems $(D, \{\cdot, \cdot\}_{D^*},
\linebreak h_{(L, D)})$ and $(\bar{D}, \{\cdot,
\cdot\}_{\bar{D}^*}, h_{(\bar{L}, \bar{D})})$, where $h_{(L, D)}$
(respectively, $h_{(\bar{L}, \bar{D})}$) is the constrained
Hamiltonian function on $D^*$ (respectively, $\bar{D}^*$)
associated with the nonholonomic system $(L, D)$ (respectively,
$(\bar{L}, \bar{D})$).

Moreover, using Theorem \ref{morHa-Ja}, we conclude that
\begin{corollary}\label{relHam-Jaun-1}
\begin{enumerate}
\item
If $\alpha: Q \to D^*$ is a $1$-cocycle for the skew-symmetric
algebroid $D$ ($d^D\alpha = 0$), it satisfies the Hamilton-Jacobi
equation
\begin{equation}\label{EqHa-Janon1}
d^D((E_{L})_{|D} \circ \#_{{\mathcal G}_{D}} \circ \alpha) = 0
\end{equation}
(respectively, the strongest condition $(E_{L})_{|D} \circ
\#_{{\mathcal G}_{D}} \circ \alpha = \mbox{ constant }$) and it is
$(\tilde{F}, F)$-related with $\bar{\alpha} \in
\Gamma(\tau_{\bar{D}^*})$ then $d^{\bar{D}}\bar{\alpha} = 0$ and
$\bar{\alpha}$ is a solution of the Hamilton-Jacobi equation
\begin{equation}\label{EqHa-Janon2}
d^{\bar{D}}((E_{\bar{L}})_{|\bar{D}} \circ \#_{{\mathcal
G}_{\bar{D}}} \circ \bar{\alpha}) = 0
\end{equation}
(respectively, $\bar{\alpha}$ satisfies the strongest condition
$(E_{\bar{L}})_{|\bar{D}} \circ \#_{{\mathcal G}_{\bar{D}}} \circ
\bar{\alpha} = \mbox{ constant }$).
\item
If $\bar{\alpha}: \bar{Q} \to \tilde{F}(D^*) \subseteq \bar{D}^*$
is a $1$-cocycle for the skew-symmetric algebroid $\bar{D}$
($d^{\bar{D}}\bar{\alpha} = 0$) and it satisfies the
Hamilton-Jacobi equation (\ref{EqHa-Janon2}) (respectively, the
strongest condition $(E_{\bar{L}})_{|\bar{D}} \circ \#_{{\mathcal
G}_{\bar{D}}} \circ \bar{\alpha} = \mbox{ constant }$) then
$d^{D}\alpha = 0$ and $\alpha$ is a solution of the Hamilton
Jacobi equation (\ref{EqHa-Janon1}) (respectively, $\alpha$
satisfies the strongest condition $(E_{L})_{|D} \circ
\#_{{\mathcal G}_{D}} \circ \alpha = \mbox{ constant }$). Here,
$\alpha: Q \to D^*$ is the section of $\tau_{D^*}: D^* \to Q$
characterized by the condition $\tilde{F} \circ \alpha =
\bar{\alpha} \circ F$.
\end{enumerate}
\end{corollary}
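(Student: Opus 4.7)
The plan is to reduce the corollary directly to Theorem \ref{morHa-Ja} by verifying that the given data satisfy all its hypotheses, in particular that $(\tilde{F}, F)$ is a Hamiltonian morphism between the two Hamiltonian systems on $D^*$ and $\bar{D}^*$, and that it is fiberwise injective.

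First I would unpack the constrained Hamiltonians. From the definition $h_{(L,D)} = (E_{L})_{|D} \circ \#_{\mathcal{G}_{D}}$ together with $E_{L}(a) = \tfrac{1}{2}\mathcal{G}(a,a) + V(\tau_{A}(a))$, one gets for $\beta \in D^{*}_{q}$
\[
h_{(L,D)}(\beta) = \tfrac{1}{2}\mathcal{G}_{D^{*}}(\beta,\beta) + V(q),
\]
and analogously $h_{(\bar L,\bar D)}(\bar\beta) = \tfrac{1}{2}\mathcal{G}_{\bar D^{*}}(\bar\beta,\bar\beta) + \bar V(\bar q)$. Using hypothesis (ii), that $\tilde{F}_{q}$ is an isometry from $(D^{*}_{q}, \mathcal{G}_{D^{*}})$ into $(\bar{D}^{*}_{F(q)}, \mathcal{G}_{\bar{D}^{*}})$, together with hypothesis (iii), $\bar{V}\circ F = V$, I conclude
\[
h_{(\bar L,\bar D)} \circ \tilde{F} = h_{(L,D)},
\]
so $(\tilde{F}, F)$ is Hamiltonian.

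Next I would check the remaining hypotheses of Theorem \ref{morHa-Ja}. Surjectivity of $F$ is hypothesis (i). Fiberwise injectivity of $(\tilde{F}, F)$ follows automatically from (ii): an isometry between inner product spaces is necessarily injective, and the second part of (ii) is exactly the fiberwise injectivity compatibility condition ``$F(q)=F(q') \Rightarrow \tilde{F}_{q}(D^{*}_{q}) = \tilde{F}_{q'}(D^{*}_{q'})$'' from the discussion preceding Theorem \ref{mordiff}.

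With all hypotheses in place, I would invoke Theorem \ref{morHa-Ja} directly. Part (i) of that theorem applied to the sections $\alpha$, $\bar\alpha$ and the Hamiltonians $h_{(L,D)}$, $h_{(\bar L, \bar D)}$ yields part (i) of the corollary, with the strongest condition $h_{(L,D)}\circ\alpha = \mathrm{constant}$ translating, via the above formulas, to $(E_{L})_{|D}\circ \#_{\mathcal{G}_{D}}\circ\alpha = \mathrm{constant}$. Similarly, part (ii) of Theorem \ref{morHa-Ja} gives part (ii) of the corollary. There is no substantial obstacle here; the only point requiring care is the verification that the isometry hypothesis is exactly what intertwines the two constrained Hamiltonians, so that the Hamiltonian-morphism hypothesis of Theorem \ref{morHa-Ja} is met. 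The rest is a direct citation.
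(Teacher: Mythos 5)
Your proposal is correct and follows essentially the same route as the paper: the paper likewise observes that hypotheses (i)--(iii) make $(\tilde{F},F)$ a Hamiltonian morphism between $(D,\{\cdot,\cdot\}_{D^*},h_{(L,D)})$ and $(\bar{D},\{\cdot,\cdot\}_{\bar{D}^*},h_{(\bar{L},\bar{D})})$ and then concludes by citing Theorem \ref{morHa-Ja}. Your explicit verification that the fiberwise isometry condition together with $\bar{V}\circ F=V$ yields $h_{(\bar{L},\bar{D})}\circ\tilde{F}=h_{(L,D)}$, and that isometry forces fiberwise injectivity, is exactly the (left implicit) content of the paper's deduction.
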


\subsubsection{The particular case $A = TQ$} Let $L: TQ \to \R$ be
a Lagrangian function of mechanical type on the standard Lie
algebroid $\tau_{TQ}: TQ \to Q$, that is,
\[
L(v) = \displaystyle \frac{1}{2} g(v, v) - V(\tau_{Q}(v)), \; \;
\; \mbox{ for } v \in TQ,
\]
where $g$ is a Riemannian metric on $Q$ and $V: Q \to \R$ is a
real $C^{\infty}$-function on $Q$. Suppose also that $D$ is a
distribution on $Q$. Then, the pair $(L, D)$ is a mechanical
system subjected to linear nonholonomic constraints on the
standard Lie algebroid $\tau_{TQ}: TQ \to Q$.

Note that, in this case, the linear Poisson structure on $A^* =
T^*Q$ is induced by the canonical symplectic structure on $T^*Q$.
Moreover, the corresponding nonholonomic bracket $\{\cdot,
\cdot\}_{D^*}$ on $D^*$ was considered by several authors or,
alternatively, other almost Poisson structures (on $D$ or on
$\flat_{g}(D) \subseteq A^* = T^*Q$) which are isomorphic to
$\{\cdot, \cdot\}_{D^*}$ also were obtained by several authors
(see \cite{CaLeMa,IbLeMaMa,KooMa,VaMa}).

Now, denote by $\#_{g}: T^*Q \to TQ$ (respectively, $\#_{g_{D}}:
D^* \to D$) the inverse morphism of the musical isomorphism
$\flat_{g}: TQ \to T^*Q$ (respectively, $\flat_{g_{D}}: D \to
D^*$) induced by the Riemannian metric $g$ (respectively, by the
restriction $g_{D}$ of $g$ to $D$), by $d$ the standard exterior
differential on $Q$ (that is, $d = d^{TQ}$ is the differential of
the Lie algebroid $\tau_{TQ}: TQ \to Q$), by $\xi_{(L, D)}\in
{\frak X}(D)$ the solution of the nonholonomic dynamics and by
$\xi_{(L, D)\alpha} \in {\frak X}(Q)$ its projection on $Q$,
$\alpha$ being a section of the vector bundle $\tau_{D^*}: D^* \to
Q$ (see (\ref{xiLDalpha})). Using this notation, Corollary
\ref{nonhoHam-Jaeq} and Remark \ref{relIgLeMa}, we deduce the
following result
\begin{corollary}\label{nonHa-Jastan}
Let $\alpha: Q \to D^*$ be a section of the vector bundle
$\tau_{D^*}: D^* \to Q$ such that $d(P^* \circ \alpha) \in
{\mathcal I}(D^0)$. Then, the following conditions are equivalent:
\begin{enumerate}
\item
If $c: I \to Q$ is an integral curve of the vector field $\xi_{(L,
D)\alpha}$ on $Q$ we have that $\#_{g_{D}} \circ \alpha \circ c: I
\to D$ is a solution of the Lagrange-D'Alembert equations for the
constrained system $(L, D)$.
\item
$d((E_{L})_{|D} \circ \#_{g_{D}} \circ \alpha)(Q) \subseteq D^0$.
\end{enumerate}
\end{corollary}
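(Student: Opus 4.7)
The plan is to reduce the statement directly to the equivalence (i) $\Leftrightarrow$ (ii) of Corollary \ref{nonhoHam-Jaeq} specialized to the standard Lie algebroid $\tau_{TQ}: TQ \to Q$, and then to translate both the hypothesis and condition (ii) using the dictionary provided in Remark \ref{relIgLeMa}.

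Setting $A = TQ$, the Lie algebroid differential $d^A$ coincides with the standard exterior differential $d$ on $Q$, the bundle metric $\mathcal{G}$ is the Riemannian metric $g$, and $\mathcal{G}_D = g_D$, so $\flat_{\mathcal{G}_D} = \flat_{g_D}$ and $\#_{\mathcal{G}_D} = \#_{g_D}$. The vector field $\xi_{(L,D)\alpha}$ defined by (\ref{xiLDalpha}) is precisely the one appearing in the statement, and the Lagrange-D'Alembert equations for the constrained system $(L,D)$ are just (\ref{E-Lnonh}). Consequently, condition (i) of the present corollary and condition (i) of Corollary \ref{nonhoHam-Jaeq} have the same content.

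For the hypothesis, note first that $P^*\circ\alpha$ automatically takes values in $(D^\perp)^0$: indeed, if $a\in D^\perp$ then $P(a)=0$, so $(P^*\alpha)(a)=\alpha(P(a))=0$. Therefore the first part of Remark \ref{relIgLeMa} applies and yields a bijection $\alpha\mapsto P^*\circ\alpha$ between the $d^D$-cocycles in $\Gamma(\tau_{D^*})$ and the sections $\tilde\alpha\in\Gamma(\tau_{(D^\perp)^0})$ satisfying $d\tilde\alpha\in\mathcal{I}(D^0)$. Hence the hypothesis $d(P^*\circ\alpha)\in\mathcal{I}(D^0)$ of the present corollary is equivalent to $d^D\alpha=0$, which is the hypothesis of Corollary \ref{nonhoHam-Jaeq}.

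Finally, applying the second part of Remark \ref{relIgLeMa} to the function $f=(E_L)_{|D}\circ\#_{g_D}\circ\alpha\in C^\infty(Q)$ gives $d^D f=0$ if and only if $(df)(Q)\subseteq D^0$, which identifies condition (ii) of Corollary \ref{nonhoHam-Jaeq} with condition (ii) of the present statement. The equivalence (i)$\Leftrightarrow$(ii) of Corollary \ref{nonhoHam-Jaeq} then yields the claim. The only care required is to check that every ingredient translates correctly through Remark \ref{relIgLeMa}; beyond this bookkeeping, no substantial mathematical obstacle arises, since the dynamical core of the statement is already contained in Corollary \ref{nonhoHam-Jaeq}.
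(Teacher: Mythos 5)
Your proposal is correct and follows exactly the route the paper takes: the corollary is obtained by specializing Corollary \ref{nonhoHam-Jaeq} to $A=TQ$ and translating both the cocycle hypothesis and the Hamilton-Jacobi condition through Remark \ref{relIgLeMa} (via (\ref{difD}) and the equivalence $d^Df=0\Leftrightarrow (df)(Q)\subseteq D^0$). Your explicit check that $P^*\circ\alpha$ lands in $(D^\perp)^0$ is a small detail the paper leaves implicit, but otherwise the arguments coincide.
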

\begin{remark}{\rm
As we know, the Legendre transformation associated with the
Lagrangian function $L: TQ \to \R$ is the musical isomorphism
$\flat_{g}: TQ \to T^*Q$. Moreover, it is clear that $X(Q)
\subseteq D$, where $X$ is the vector field on $Q$ given by $X=
\#_{g_{D}} \circ \alpha$. Thus, Corollary \ref{nonHa-Jastan} is a
consequence of some results which were proved in \cite{IgLeMa}
(see Theorem 4.3 in \cite{IgLeMa}). On the other hand, if
$H^0(d^D) \simeq \R$ (or if $Q$ is connected and the distribution
$D$ is completely nonholonomic in the sense of Vershik and
Gershkovich \cite{VeGe}) then (i) and (ii) in Corollary
\ref{nonHa-Jastan} are equivalent to the condition
\[
(E_{L})_{|D} \circ \#_{g_{D}} \circ \alpha = \mbox{ constant }.
\]
A Hamiltonian version of this last result was proved by Ohsawa and
Bloch \cite{OhBl} (see Theorem 3.1 in \cite{OhBl}).
 }
\end{remark}

\begin{remark}\label{history}{\bf Previous approaches.}
{\rm
There exists some different attempts in the literature of extending
the classical Hamilton-Jacobi equation for the case of nonholonomic constraints
\cite{eden,Pa,Rum,Doo1,Doo2,Doo3,Doo4}).
These attempts  were non-effective or very restrictive (and even erroneous),
because, in many of them, they try to adapt the standard proof of
the Hamilton-Jacobi equations for systems without constraints,
using Hamilton's principle. See \cite{Sum} for a detailed discussion on the topic.

To fix ideas, consider a lagrangian system $L: TQ\longrightarrow
\R$ of mechanical type, that is, $L(v_q)=\frac{1}{2}{\mathcal
G}(v_q, v_q)-V(q)$, for $v_q\in T_qQ$,  and nonholonomic
constraints  determined by a distribution $D$ of $Q$, whose
annihilator is $ D^0=\hbox{span}\{\mu^b_i\, dq^i\}$.

 The  idea of many of these previous approaches consist in
 looking for a function $S: Q\longrightarrow \R$ called
 the characteristic function which permits characterize
 the solutions of the nonholonomic problem.  For it,
 define first the  generalized  momenta
   \[
 p_i=\frac{\partial S}{\partial q^i}+\lambda_{b}\mu^b_i\; ,
 \]
 which satisfy the constraint equations ${\mathcal G}^{ij}p_i\mu^a_j=0$.
 These last conditions univocally determine $\lambda_b$ as
 functions of $q$ and $\partial S/\partial q$ and therefore we find the momenta as
 functions
\begin{equation}\label{puo}
 p_i=p_i(q^i, \frac{\partial S}{\partial q^i})\; .
 \end{equation}
 By inserting these expressions for the generalized momenta in the Hamiltonian of the system,
 we obtain a version of the Hamilton-Jacobi equation (in its time-independent version):
 \begin{equation}\label{415}
   H(q^i, p_i)=\tilde{H}(q^i, \frac{\partial S}{\partial q^i})=\hbox{constant.}
 \end{equation}
  However, if we start with a curve $c: I\to Q$ satisfying the differential equations
 \begin{equation}\label{pua1}
 \dot{c}^i(t)=\frac{\partial H}{\partial p_i} (c^j(t),
 \frac{\partial S}{\partial q^j}(c(t))+\lambda_b\mu^b_j)
\end{equation}
in general, it is not true that the curve $\gamma(t)=(c^i(t), p_i(t))$
is a solution of the nonholonomic equations.
This is trivially checked since from Equation (\ref{415}) we deduce that:
\begin{equation}\label{416}
0=\frac{\partial H}{\partial q^i}+\frac{\partial H}{\partial p_j}
\left[\frac{\partial^2 S}{\partial q^i\partial q^j}+
\frac{\partial \lambda_b}{\partial q^i}\mu^b_j+\lambda_b \frac{\partial \mu^b_j}{\partial q^i}\right]
\end{equation}
but, on the other hand,
\begin{eqnarray}
\dot{p}_i&=&\frac{d}{dt}\left[\frac{\partial S}{\partial q^i}+\lambda_{b}\mu^b_i\right]\; ,\nonumber \\
&=& \dot{q}^j\left[\frac{\partial^2 S}{\partial q^i\partial q^j}+\frac{\partial \lambda_b}{\partial q^j}\mu^b_i+\lambda_b \frac{\partial \mu^b_i}{\partial q^j}\right]\label{417}.
\end{eqnarray}

Substituting Equation (\ref{416}) in Equation (\ref{417}), a curve
$\gamma(t)=(c^i(t), p_i(t))$ satisfying (\ref{pua1}) is solution
of the nonholonomic equations (that is, $\dot{p}_i=-\frac{\partial
H}{\partial q^i}+\Lambda_b\mu^b_i$) if it verifies the following
condition:
\begin{equation}\label{pua2}
\lambda_b\left(\frac{\partial \mu^b_j}{\partial
q^i}\dot{q}^j-\frac{\partial \mu^b_i}{\partial
q^j}\dot{q}^j\right)\delta q^i=0, \quad  \delta q\in D_q.
\end{equation}
It is well-known (see \cite{Rum, Sum}) that condition (\ref{pua2})
takes place when the solutions of the nonholonomic problem are
also of variational type.
  However,
nonholonomic dynamics is not, in general,  of variational kind (see
\cite{CoLeMaSo,LeMaDa,LeMu}). Indeed, a relevant difference with
the unconstrained mechanical systems is that a nonholonomic system
is not Hamiltonian in the standard sense since the dynamics is
obtained from  an almost Poisson bracket, that is, a bracket not
satisfying the Jacobi identity (see
\cite{CaLeMa,IbLeMaMa,KooMa,VaMa}).
}
\end{remark}

\noindent {\it An explicit example:} \emph{The two-wheeled
carriage} (see \cite{NeFu}). The system has configuration space
$Q=SE(2)\times \T^2$ , where $SE(2)$ represents the rigid motions
in the plane and $\T^2$ the angles of rotation of the left and
right wheels. We use standard coordinates $(x,y, \theta, \psi_1,
\psi_2)\in SE(2)\times \T^2$. Imposing the constraints of no
lateral sliding and no sliding on both wheels, one gets the
following nonholonomic constraints:
\begin{eqnarray*}
\dot{x}\sin \theta - \dot{y} \cos\theta &=& 0, \\
\dot{x}\cos\theta + \dot{y} \sin\theta+r\dot\theta+a\dot{\psi}_1 &=& 0, \\
\dot{x}\cos\theta + \dot{y} \sin\theta-r\dot\theta+a\dot{\psi}_2
&=& 0,
\end{eqnarray*}
where $a$ is the radius of the wheels and $r$ is the half the
length of the axle.

Assuming, for simplicity, that the center of mass of the carriage
is situated on the center of the axle the Lagrangian is given by:
\[
L=\displaystyle \frac{1}{2}m\dot{x}^2 + \frac{1}{2} m\dot{y}^2 +
\frac{1}{2}J\dot{\theta}^2 +
\frac{1}{2}C\dot{\psi}_1^2+\frac{1}{2}C\dot{\psi}_2^2,
\]
where $m$ is the mass of the system, $J$ the moment of inertia
when it rotates as a whole about the vertical axis passing through
the point $(x,y)$ and $C$ the axial moment of inertia.
 Note
that $L$ is the kinetic energy associated with the Riemannian
metric $g$ on $Q$ given by
\[
g = m(dx^2 + dy^2) + Jd\theta^2 + Cd\psi_1^2+Cd\psi_2^2\; .
\]
The constraints induce the distribution $D$ locally spanned by the
following $g$-orthonormal vector fields
\begin{eqnarray*}
 X_1&=&\frac{1}{\Lambda_1}\left(2r\frac{\partial}{\partial \psi_1}-a\frac{\partial}{\partial \theta}-ar \cos\theta\frac{\partial}{\partial x}
-ar \sin\theta\frac{\partial}{\partial y}\right),\\
 X_2&=&\frac{1}{\Lambda_2}\left(
 a^2(J-m_1r^2)\frac{\partial}{\partial \psi_1}+(a^2J+4Cr^2+a^2m_1r^2)\frac{\partial}{\partial \psi_2}+ar(2C+m_1a^2)\frac{\partial}{\partial \theta}\right.\\
 &&\left.-a(a^2J+2Cr^2)\cos\theta\frac{\partial}{\partial x}
-a(a^2J+2Cr^2)\sin\theta\frac{\partial}{\partial y} \right)\; .
\end{eqnarray*}
where
\begin{eqnarray*}
\Lambda_1&=& \sqrt{4Cr^2+a^2J+a m_1r^2}\\
\Lambda_2&=& \sqrt{(a^2J+2Cr^2)(2C+m_1a^2)(a^2J+4Cr^2+a^2r^2m_1)}
\end{eqnarray*}
 We will denote
by $(x, y, \theta, \psi_1, \psi_2, v^1, v^2)$ the local
coordinates on $D$ induced by the basis $\{X_{1}, X_{2}\}$.

In these coordinates, the restriction, $L_{|D}: D\longmapsto \R$,
of $L$ to $D$ is:
\[
L_{|D}=\frac{1}{2}((v^1)^2+(v^2)^2)\; .
\]

The distribution $D^\perp$ orthogonal to $D$ is generated by
\[
D^\perp=\{ X_3=\tan\theta \frac{\partial}{\partial
x}-\frac{\partial}{\partial y},
X_4=\frac{J\sec\theta}{rm_1}\frac{\partial}{\partial
x}+\frac{\partial}{\partial \theta}+
\frac{aJ}{Cr}\frac{\partial}{\partial \psi_1},
X_5=\frac{2C\sec\theta}{am_1}\frac{\partial}{\partial
x}+\frac{\partial}{\partial \psi_1}+\frac{\partial}{\partial
\psi_2}\}
\]

Moreover, since the standard Lie bracket $[X_{1}, X_{2}]$ of the
vector fields $X_{1}$ and $X_{2}$ is orthogonal to $D$, it follows
that (see (\ref{alLieD}))
\begin{equation}\label{Eqsestejem}
\lcf X_{1}, X_{2}\rcf_{D} = 0, \; \; \rho_{D}(X_{1}) = X_{1}, \;
\; \rho_{D}(X_{2}) = X_{2},
\end{equation}
where $(\lcf\cdot, \cdot\rcf_{D}, \rho_{D})$ is the skew-symmetric
algebroid structure on the vector bundle $\tau_{D}: D \to Q$.

The local expression of the vector field $\xi_{(L,D)}$ is:
\begin{eqnarray*}
\xi_{(L,D)}&=&\left(\frac{2rv^1}{\Lambda_1}+\frac{a^2(J-m_1r^2)v^2}{\Lambda_2}\right)\frac{\partial}{\partial
\psi_1}
+\frac{(a^2J+4Cr^2+a^2m_1r^2)v^2}{\Lambda_2}\frac{\partial}{\partial
\psi_2}\\
&&+\left(\frac{ar(2C+m_1a^2)v^2}{\Lambda_2}-\frac{av_1}{\Lambda_1}\right)\frac{\partial}{\partial \theta} - \left(\frac{ar
v^1\cos\theta}{\Lambda_1}+\frac{a(a^2J+2Cr^2)v^2\cos\theta
}{\Lambda_2}\right)\frac{\partial}{\partial x}\\&& -\left(\frac{ar
v^1\sin\theta}{\Lambda_1}+\frac{a(a^2J+2Cr^2)v^2\sin\theta
}{\Lambda_2}\right)\frac{\partial}{\partial y}
\end{eqnarray*}

Furthermore, if $\{X^{1}, X^{2}\}$ is the dual basis of $\{X_{1},
X_{2}\}$ and $\alpha: Q \to D^*$ is a section of the vector bundle
$\tau_{D^*}: D^* \to Q$
\[
\alpha = \alpha_{1}X^1 + \alpha_{2}X^2, \; \; \mbox{ with }
\alpha_{1}, \alpha_{2} \in C^{\infty}(Q)
\]
then
\[
\alpha \mbox{ is a 1-cocycle } \Longleftrightarrow
X_{1}(\alpha_{2}) - X_{2}(\alpha_{1}) = 0.
\]

In particular, taking
\[
\alpha = K_{1}X^1 + K_{2}X^2, \; \; \mbox{ with } K_{1}, K_{2} \in
\R
\]
trivially is satisfied the  $1$-cocycle condition.

In addition, since $E_{L} = L$, we deduce that
\[
(E_{L})_{|D} = \displaystyle \frac{1}{2} ((v^1)^2 + (v^2)^2)
\]
which implies that
\[
(E_{L})_{|D} \circ \#_{g_{D}} \circ \alpha =
K_1^2+K_2^2=\hbox{constant}.
\]
Thus, using Corollary \ref{nonhoHam-Jaeq}, we conclude that to
integrate the nonholonomic mechanical system $(L, D)$ is
equivalent to find the integral curves of the vector field on $Q =
S^1 \times S^1 \times \R^2$ given by
\[
\xi_{(L, D),\alpha} = K_1 X_1+K_2 X_2\; .
\]
which are easily obtained.

It is also interesting to observe that, in this particular
example,
\[
\mathrm{Lie}^{\infty}(D)=\{ad\psi^1-ad\psi^2+2rd\theta\}^0
\]
and, thus, $D$ is not completely nonholonomic. From Theorem 3.4 it
is necessary to restrict the initial nonholonomic system to the
orbits of $D$, that in this case are
\[
L_k=\{ (x, y,\theta, \psi_1, \psi_2)\in SE(2)\times \T^2\; |\;
a(\psi_1-\psi_2)+2r\theta=k, \hbox{  with  } k\in\R\}
\]
to obtain a completely nonholonomic skew-symmetric algebroid
structure on the vector bundle $\tau_{D_{L_{k}}}: D_{L_{k}} \to
L_{k}$. Note that on $L_{k}$ we can use, for instance, coordinates
$(x,y, \psi_1, \psi_2)$.

\subsubsection{The particular case $\bar{A} = TQ/G$}
Let $F: Q \to \bar{Q} = Q/G$ be a principal $G$-bundle and
$\tau_{\bar{A}} = \tau_{TQ/G}: \bar{A} = TQ/G \to \bar{Q} = Q/G$
be the Atiyah algebroid associated with the principal bundle (see
Section \ref{unconmesys}).

Suppose that $g$ is a $G$-invariant Riemannian metric on $Q$, that
$V: Q \to \R$ is a $G$-invariant real $C^{\infty}$-function and
that $D$ is a $G$-invariant distribution on $Q$. Then, we may
consider the corresponding nonholonomic mechanical system $(L, D)$
on the standard Lie algebroid $\tau_{A} = \tau_{TQ}: A= TQ \to Q$.

Denote by $\xi_{(L, D)} \in {\frak X}(D)$ the nonholonomic
dynamics for the system $(L, D)$ and by $\{\cdot, \cdot\}_{D^*}$
the nonholonomic bracket on $D^*$.

The Riemannian metric $g$ and the function $V: Q \to \R$ induce a
bundle metric $\bar{\mathcal G}$ on the Atiyah algebroid
$\tau_{\bar{A}} = \tau_{TQ/G}: \bar{A} = TQ/G \to \bar{Q} = Q/G$
and a real $C^{\infty}$-function $\bar{V}: \bar{Q} \to \R$ on
$\bar{Q}$ such that $\bar{V}\circ F = V$, where $F: Q \to \bar{Q}
= Q/G$ is the canonical projection. Moreover, the space of orbits
$\bar{D}$ of the action of $G$ on $D$ is a vector subbundle of the
Atiyah algebroid $\tau_{\bar{A}} = \tau_{TQ/G}: \bar{A} = TQ/G \to
\bar{Q} = Q/G$. Thus, we may consider the corresponding
nonholonomic mechanical system $(\bar{L}, \bar{D})$ on $\bar{A} =
TQ/G$.

Let $\bar{F}: A=TQ \to \bar{A} = TQ/G$ be the canonical
projection. Then, $(\bar{F}, F)$ is a fiberwise bijective morphism
of Lie algebroids and $\bar{F}(D) = \bar{D}$. Therefore, using
some results in \cite{CoLeMaMa} (see Theorem 4.6 in
\cite{CoLeMaMa}) we deduce that the vector field $\xi_{(L, D)}$ is
$\bar{F}_{D}$-projectable on the nonholonomic dynamics
$\xi_{(\bar{L}, \bar{D})} \in {\frak X}(\bar{D})$ of the system
$(\bar{L}, \bar{D})$. Here, $\bar{F}_{D}: D \to \bar{D} = D/G$ is
the canonical projection.

On the other hand, if $P: A = TQ \to D$ and $\bar{P}: \bar{A} =
TQ/G \to \bar{D} = D/G$ are the orthogonal projectors then it is
clear that
\[
\bar{F}_{D} \circ P = \bar{P} \circ \bar{F}
\]
which implies that
\begin{equation}\label{Importante}
\tilde{F} \circ P^* = \tilde{F}_{D} \circ \bar{P}^*,
\end{equation}
where $\tilde{F}: A^* = T^*Q \to \bar{A}^*\simeq T^*Q/G$ and
$\tilde{F}_{D}: D^* \to \bar{D}^*\simeq D^*/G$ are the canonical
projections. Moreover, if on $\bar{A}^*$ we consider the linear
Poisson structure induced by the Atiyah algebroid $\tau_{\bar{A}}
= \tau_{TQ/G}: \bar{A} = TQ/G \to \bar{Q} = Q/G$ then, as we know,
$\tilde{F}: A^* = T^*Q \to \bar{A}^*\simeq T^*Q/G$ is a Poisson
morphism. Thus, using this fact, (\ref{laPbDstar}) and
(\ref{Importante}), we deduce the following result
\begin{proposition}\label{nonhobraAtiyah}
The pair $(\tilde{F}_{D}, F)$ is a linear almost Poisson morphism,
when on $D^*$ and $\bar{D}^*$ we consider the almost Poisson
structures induced by the nonholonomic brackets $\{\cdot,
\cdot\}_{D^*}$ and $\{\cdot, \cdot\}_{\bar{D}^*}$, respectively.
\end{proposition}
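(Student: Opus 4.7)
The strategy is to verify the Poisson morphism identity (\ref{lalmorph}) by unwinding the definition (\ref{laPbDstar}) of both nonholonomic brackets and transporting the computation up to the ambient Lie algebroids $A=TQ$ and $\bar{A}=TQ/G$, where the canonical projection $\tilde{F}: A^*\to \bar{A}^*$ is already known to be a Poisson morphism. Thus, for arbitrary $\bar{\varphi},\bar{\psi}\in C^{\infty}(\bar{D}^*)$, I would expand
$\{\bar{\varphi}\circ\tilde{F}_{D},\bar{\psi}\circ\tilde{F}_{D}\}_{D^*}=\{(\bar{\varphi}\circ\tilde{F}_{D})\circ i_{D}^{*},(\bar{\psi}\circ\tilde{F}_{D})\circ i_{D}^{*}\}_{A^*}\circ P^{*}$
and aim to recognise the right-hand side as $\{\bar{\varphi},\bar{\psi}\}_{\bar{D}^*}\circ\tilde{F}_{D}$, via the definition (\ref{laPbDstar}) applied on $\bar{D}^*$.

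The first key auxiliary fact is the commutation $\tilde{F}_{D}\circ i_{D}^{*}=i_{\bar{D}}^{*}\circ\tilde{F}$ as maps $A^*\to\bar{D}^*$. Since $D$ is $G$-invariant, the inclusion $i_{D}:D\to A$ is $G$-equivariant and so $\bar{F}\circ i_{D}=i_{\bar{D}}\circ\bar{F}_{D}$ on $D$. Because $\bar{F}$ and $\bar{F}_{D}$ are fiberwise bijective vector bundle morphisms over $F$, while $\tilde{F}$ and $\tilde{F}_{D}$ are obtained as fiberwise inverse transposes of $\bar{F}$ and $\bar{F}_{D}$, a pointwise dualisation of the identity above together with inversion of the relevant transposes produces the desired commutation. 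The second auxiliary fact is precisely (\ref{Importante}), which the paper has already deduced from $\bar{F}_{D}\circ P=\bar{P}\circ\bar{F}$.

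With these two diagram lemmas in hand, I would chain the equalities as follows. First, $(\bar{\varphi}\circ\tilde{F}_{D})\circ i_{D}^{*}=(\bar{\varphi}\circ i_{\bar{D}}^{*})\circ\tilde{F}$, and similarly for $\bar{\psi}$. Second, invoke the Poisson morphism property of $\tilde{F}: A^*\to\bar{A}^*$ (recalled in Section \ref{unconmesys}) to push the Poisson bracket from $A^*$ down to $\bar{A}^*$. Third, use (\ref{Importante}) to rewrite $\tilde{F}\circ P^{*}$ as $\bar{P}^{*}\circ\tilde{F}_{D}$ (modulo orientation of composition) so that the remaining expression is of the form $(\cdot)\circ\bar{P}^{*}\circ\tilde{F}_{D}$. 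Finally, recognising $\{\bar{\varphi}\circ i_{\bar{D}}^{*},\bar{\psi}\circ i_{\bar{D}}^{*}\}_{\bar{A}^*}\circ\bar{P}^{*}$ as $\{\bar{\varphi},\bar{\psi}\}_{\bar{D}^*}$ by the definition (\ref{laPbDstar}) applied on $\bar{D}^*$ closes the computation.

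The main conceptual ingredient is the Poisson morphism property of $\tilde{F}$ itself, which is imported from the general theory of cotangent lifts on principal bundles; everything else is a short diagram chase. The subtlest step will be the first auxiliary commutation $\tilde{F}_{D}\circ i_{D}^{*}=i_{\bar{D}}^{*}\circ\tilde{F}$, because $\tilde{F}$ and $\tilde{F}_{D}$ are not vector bundle duals of $\bar{F}$ and $\bar{F}_{D}$ in the naive sense but instead arise as quotient maps of cotangent spaces, so one must carefully exploit the fiberwise bijectivity of $\bar{F}$, $\bar{F}_{D}$ over $F$ to justify it.
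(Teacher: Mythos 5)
Your argument is essentially the paper's own proof: the paper simply invokes the Poisson--morphism property of $\tilde{F}:A^*=T^*Q\to \bar{A}^*\simeq T^*Q/G$, the definition (\ref{laPbDstar}) of the two nonholonomic brackets, and the commutation (\ref{Importante}), and leaves to the reader exactly the chain of equalities you spell out. Your auxiliary relation $\tilde{F}_{D}\circ i_{D}^{*}=i_{\bar{D}}^{*}\circ\tilde{F}$, obtained by dualising $\bar{F}\circ i_{D}=i_{\bar{D}}\circ\bar{F}_{D}$ fibrewise and using that $\bar{F}_{q}$ and $(\bar{F}_{D})_{q}$ are isomorphisms, is correct and is precisely the implicit step the paper omits; note only that (\ref{Importante}) is to be read as $\tilde{F}\circ P^{*}=\bar{P}^{*}\circ\tilde{F}_{D}$, as you yourself observed.
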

Note that Proposition \ref{nonhobraAtiyah} characterizes the
nonholonomic bracket $\{\cdot, \cdot\}_{\bar{D}^*}$.

We also note that the linear map $(\tilde{F}_{D})_{q} =
(\tilde{F}_{D})_{|D^*_q}: D^*_q \to \bar{D}^*_{F(q)} \simeq
(D^*/G)_{F(q)}$ is a linear isometry, for all $q\in Q$. Therefore,
from Remark \ref{relIgLeMa} and Corollary \ref{relHam-Jaun-1}, it
follows
\begin{corollary}
Let ${\mathcal S}$ the set of the $1$-cocycles $\bar{\alpha}$ of
the skew-symmetric algebroid $\tau_{\bar{D}} = \tau_{D/G}: \bar{D}
= D/G \to \bar{Q} = Q/G$ which are solution of the nonholonomic
Hamilton-Jacobi equation
\[
d^{\bar{D}}((E_{\bar{L}})_{|\bar{D}} \circ \#_{{\mathcal
G}_{\bar{D}}} \circ \bar{\alpha}) = 0
\]
(respectively, which satisfy the strongest  condition
$(E_{\bar{L}})_{|\bar{D}} \circ \#_{{\mathcal G}_{\bar{D}}} \circ
\bar{\alpha} = \mbox{ constant }$). Then, there exists a
one-to-one correspondence between ${\mathcal S}$ and the following
sets:
\begin{enumerate}
\item
The set of the $G$-invariant $1$-cocycles $\alpha$ of the
skew-symmetric algebroid $\tau_{D}: D \to Q$ which are solutions
of the nonholonomic Hamilton-Jacobi equation
\[
d^D((E_{L})_{|D} \circ \#_{{\mathcal G}_{D}} \circ \alpha) = 0
\]
(respectively, which satisfy the strongest condition $(E_{L})_{|D}
\circ \#_{{\mathcal G}_{D}} \circ \alpha = \mbox{ constant }$).
\item
The set of the $G$-invariant $1$-forms $\gamma: Q \to
(D^{\perp})^0 \subseteq T^*Q$ on $Q$ which satisfy the following
conditions
\[
d\gamma \in {\mathcal I}(D^0) \; \mbox{ and } \; d(E_{L} \circ
\#_{g} \circ \gamma)(Q) \subseteq D^0
\]
(respectively, which satisfy the strongest conditions $d\gamma \in
{\mathcal I}(D^0)$ and $E_{L} \circ \#_{g} \circ \gamma = \mbox{
constant }$).
\end{enumerate}
\end{corollary}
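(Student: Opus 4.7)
The plan is to construct the two bijections in the corollary as compositions of natural, $G$-equivariant operations, leaning on the machinery already established in the excerpt. I will split the argument into two independent pieces: first $\mathcal{S} \longleftrightarrow$ (i), and then (i) $\longleftrightarrow$ (ii).

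For the bijection $\mathcal{S} \longleftrightarrow$ (i), the key observation is that $\tilde{F}_{D}:D^{*}\to \bar{D}^{*}\simeq D^{*}/G$ is fiberwise a linear isometry (hence bijective) and $G$-equivariant. Thus for every section $\alpha$ of $\tau_{D^{*}}$ that is $G$-invariant there is a unique $\bar{\alpha}\in\Gamma(\tau_{\bar{D}^{*}})$ with $\tilde{F}_{D}\circ\alpha=\bar{\alpha}\circ F$, and conversely every $\bar{\alpha}$ admits such a unique $G$-invariant lift $\alpha$. By Proposition~\ref{nonhobraAtiyah}, $(\tilde{F}_{D},F)$ is a linear almost Poisson morphism, and because $(\tilde{F}_{D})_{q}$ is an isometry while the potential $\bar{V}$ lifts to $V$, the pair is in fact a Hamiltonian morphism between the Hamiltonian systems $(D,\{\cdot,\cdot\}_{D^{*}},h_{(L,D)})$ and $(\bar{D},\{\cdot,\cdot\}_{\bar{D}^{*}},h_{(\bar{L},\bar{D})})$. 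Applying Corollary~\ref{relHam-Jaun-1} (together with Theorem~\ref{mordiff}(i), which guarantees that $d^{D}\alpha=0$ iff $d^{\bar{D}}\bar{\alpha}=0$) yields both implications: the nonholonomic Hamilton--Jacobi equation (and its stronger form) for $\alpha$ is equivalent to the corresponding equation for $\bar{\alpha}$. Thus $\alpha\mapsto\bar{\alpha}$ is the desired bijection.

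For the bijection (i) $\longleftrightarrow$ (ii), I will use the map $\alpha\mapsto \gamma:=P^{*}\circ\alpha$ of Remark~\ref{relIgLeMa}. That remark already states that this is a bijection from $\mathcal{Z}(\tau_{D^{*}})$ onto $\tilde{\mathcal{Z}}(\tau_{(D^{\perp})^{0}})$, with inverse $\gamma\mapsto i_{D}^{*}\circ\gamma$, and that $d^{A}\gamma\in \mathcal{I}(D^{0})$ is equivalent to $d^{D}\alpha=0$. Since $P$ is $G$-equivariant (because $g$ and $D$ are $G$-invariant), so is $P^{*}$, hence the bijection preserves $G$-invariance. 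The only remaining point is to match the Hamilton--Jacobi conditions. For this I will verify the pointwise identity
\[
\#_{g}\circ\gamma\;=\;\#_{g_{D}}\circ\alpha,
\]
which follows because $\gamma(q)\in(D^{\perp})^{0}$ forces $\#_{g}(\gamma(q))\in D$, and then $g(\#_{g}(\gamma(q)),w)=\gamma(q)(w)=\alpha(q)(Pw)=\alpha(q)(w)$ for all $w\in D_{q}$. Consequently $E_{L}\circ\#_{g}\circ\gamma=(E_{L})_{|D}\circ\#_{g_{D}}\circ\alpha$, and by the final sentence of Remark~\ref{relIgLeMa} the conditions $d^{D}((E_{L})_{|D}\circ\#_{g_{D}}\circ\alpha)=0$ and $d(E_{L}\circ\#_{g}\circ\gamma)(Q)\subseteq D^{0}$ are equivalent. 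The ``strongest condition'' versions correspond trivially under the same identity.

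The only real subtlety is the last computation $\#_{g}\circ(P^{*}\circ\alpha)=\#_{g_{D}}\circ\alpha$; everything else is a direct bookkeeping invocation of Proposition~\ref{nonhobraAtiyah}, Theorem~\ref{mordiff}, Corollary~\ref{relHam-Jaun-1} and Remark~\ref{relIgLeMa}. Composing the two bijections then produces the announced correspondence between $\mathcal{S}$ and set (ii), completing the proof.
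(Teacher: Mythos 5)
Your proposal is correct and follows essentially the same route as the paper: the correspondence $\mathcal{S}\leftrightarrow$ (i) is obtained from Proposition \ref{nonhobraAtiyah} together with the fact that $(\tilde{F}_{D})_{q}$ is a fiberwise isometry and $\bar{V}\circ F=V$, so that Corollary \ref{relHam-Jaun-1} applies, while (i) $\leftrightarrow$ (ii) is exactly the bijection $\alpha\mapsto P^{*}\circ\alpha$ of Remark \ref{relIgLeMa}. Your explicit verification that $\#_{g}\circ P^{*}\circ\alpha=\#_{{\mathcal G}_{D}}\circ\alpha$ (hence $E_{L}\circ\#_{g}\circ\gamma=(E_{L})_{|D}\circ\#_{{\mathcal G}_{D}}\circ\alpha$) just makes explicit a step the paper leaves implicit.
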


\noindent {\it An explicit example:} \emph{The snakeboard}.

The snakeboard is a modified version of the traditional
skateboard, where the rider uses his own momentum, coupled with
the constraints, to generate forward motion. The configuration
manifold is $Q=SE(2)\times \T^2$ with coordinates $(x, y, \theta,
\psi, \phi)$ (see ~\cite{BuZe,MaKo}).

\begin{center}
\includegraphics{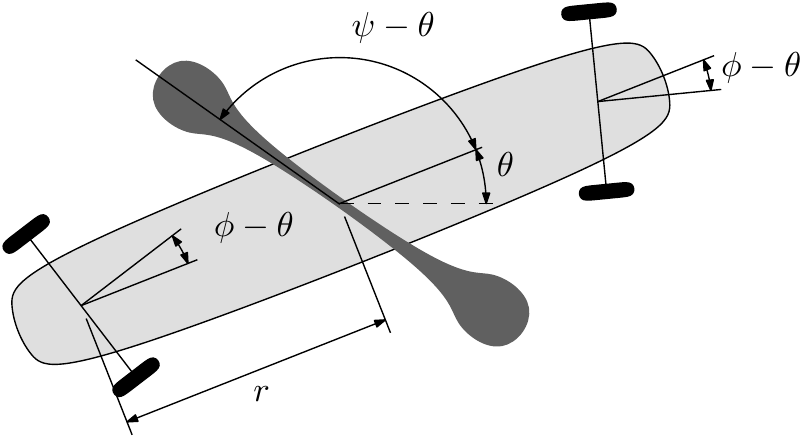}
\end{center}

The  system is described by a Lagrangian
\begin{eqnarray*}
 L(q, \dot{q})= \frac{1}{2}m (\dot{x}^2+\dot{y}^2) + \frac{1}{2}
 (J+2J_1)\dot{\theta}^2
 +\frac{1}{2}J_0(\dot{\theta}+\dot{\psi})^2
 +J_1\dot{\phi}^2
 \end{eqnarray*}
where $m$ is the total mass of the board, $J>0$ is the moment of
inertia of the board, $J_0>0$ is the moment of inertia of the
rotor of the snakeboard mounted on the body's center of mass and
$J_1>0$  is the moment of inertia of each wheel axles. The
distance between the center of the board and the wheels is denoted
by $r$. For simplicity, as in \cite{MaKo}, we assume that
$J+J_0+2J_1=mr^2$.

The inertia matrix representing the kinetic energy of the metric
$g$ on $Q$ defined by the snakeboard is
\[
g=mdx^2+mdy^2+mr^2d\theta^2+J_0 d\theta\otimes d\psi+J_0
d\psi\otimes d\theta + J_0d\psi^2+2J_1 d\phi^2.
\]
Since the wheels are not allowed to slide in the sideways
direction, we impose  the constraints
\begin{align*}
-\dot{x}\sin (\theta+\phi) +\dot{y}\cos (\theta+\phi)
-r\dot{\theta}\cos\phi&=0\\
-\dot{x}\sin (\theta-\phi) +\dot{y}\cos (\theta-\phi)
+r\dot{\theta}\cos\phi&=0.
\end{align*}
To avoid singularities of the distribution defined by the previous
constraints we will assume, in the sequel, that
$\phi\not=\pm\pi/2$.

Define the functions
\begin{eqnarray*}
a&=&-r(\cos\phi\cos(\theta-\phi)+\cos\phi\cos(\theta+\phi))=-2r\cos^2\phi\cos\theta\\
b&=&-r(\cos\phi\sin(\theta-\phi)+\cos\phi\sin(\theta+\phi))=-2r\cos^2\phi\sin\theta\\
c&=&\sin(2\phi).
\end{eqnarray*}
 The constraint  subbundle $\tau_D: D\longmapsto Q$ is
 \[
{D}=\text{span}\left\{ \frac{\partial}{\partial \psi},
\frac{\partial}{\partial \phi}, a\frac{\partial}{\partial
x}+b\frac{\partial}{\partial y}+c\frac{\partial}{\partial
\theta}\right\}.
\]

The Lagrangian function and the constraint subbundle
 are left-invariant under the $SE(2)$ action:
\[ \Phi_g(q)=(\alpha+x\cos\gamma-y\sin \gamma,
\beta+x\sin\gamma+y\cos \gamma, \gamma+\theta, \psi, \phi)
\]
where $g=(\alpha, \beta, \gamma)\in SE(2)$.

We have a principal bundle structure $F: Q\longrightarrow \bar{Q}$
where $\bar{Q}=(SE(2)\times \T^2)/SE(2) \simeq \T^2$, being its
vertical bundle $ {\displaystyle VF=\hbox{span}\,
\left\{\frac{\partial}{\partial x}, \frac{\partial}{\partial y},
\frac{\partial}{\partial \theta}\right\}} $. We have that
\[{\displaystyle S=D\cap VF=\hbox{span}\,
\left\{Y_3=a\frac{\partial}{\partial x}+b\frac{\partial}{\partial
y}+c\frac{\partial}{\partial \theta}\right\}} \] and therefore,
\[
S^{\perp}\cap D=\hbox{span}\, \left\{Y_1=\frac{\partial}{\partial
\phi},  Y_2= \frac{\partial}{\partial
\psi}-\frac{J_0c}{k}Y_3\right\}=\hbox{span}\,
\left\{Y_1=\frac{\partial}{\partial \phi},
Y_2=\frac{\partial}{\partial \psi}-\frac{J_0}{2mr^2}(\tan \phi)
Y_3\right\}
\]
where $k=m(a^2+b^2+c^2r^2)=4mr^2(\cos^2\phi)$ (away form
$\phi=\pm\pi/2$).

Note that if $\{\xi_1, \xi_2, \xi_3\}$ is the canonical basis of
$\frak{se}(2)$
then
\begin{eqnarray*}
Y_2&=&\frac{\partial}{\partial
\psi}-\frac{J_0\sin \phi}{mr^2}\left[-r(\cos\phi)\lvec{\xi_1}+(\sin \phi) \lvec{\xi_3}\right]\\
Y_3&=&-2r(\cos^2\phi)\lvec{\xi_1}+(\sin 2\phi) \lvec{\xi_3}\\
\end{eqnarray*}
where $\lvec{\xi_i}$ $(i=1, 2, 3$) is the left-invariant vector
field of $SE(2)$ such that $\lvec{\xi_i}(e)=\xi_i$, $e$ being the
identity element of $SE(2)$.

Next, we will denote by $\{X_1, X_2, X_3\}$ the $g$-orthonormal
basis  of $D$ given by
\begin{eqnarray*}
X_1&=&\frac{1}{\sqrt{2J_1}}\frac{\partial}{\partial \phi},\\
X_2&=&\frac{1}{\sqrt{f(\phi)}}\left(\frac{\partial}{\partial
\psi}-\frac{J_0\sin \phi}{mr^2}\left[-r(\cos\phi)\lvec{\xi_1}+(\sin \phi) \lvec{\xi_3}\right]\right)\\
X_3&=&\frac{1}{\sqrt{m}}\left[-(\cos\phi)\lvec{\xi_1}+\frac{1}{r}(\sin\phi) \lvec{\xi_3}\right], \\
\end{eqnarray*}
where $\displaystyle{f(\phi)=J_0-\frac{J_0^2\sin^2\phi}{mr^2}}.$
The vector fields $\{X_1, X_2\}$ describe changes in the internal
angles $\phi$ and $\psi$, while $X_3$ represents the instantaneous
rotation when the internal angles are fixed.

Consider now the corresponding Atiyah algebroid
\[
TQ/SE(2)\simeq (T\T^2\times T\,SE(2))/SE(2)\longrightarrow
\bar{Q}=\T^2.
\]
Using the left translations on $SE(2)$, we have that the tangent
bundle of $SE(2)$ may be identified with the product manifold
$SE(2)\times \frak{se}(2)$ and therefore the Atiyah algebroid is
identified with the  vector bundle
$\tilde{\tau}_{\T^2}=\tau_{\bar{A}}: \bar{A}=T\T^2\times
\frak{se}(2)\longrightarrow \T^2. $ The canonical basis of
$\tau_{\bar{A}}: T\T^2\times \frak{se}(2)\longrightarrow \T^2$ is
$ {\displaystyle \left\{ \frac{\partial}{\partial \psi},
\frac{\partial }{\partial \phi}, \xi_1, \xi_2, \xi_3\right\}.} $
The anchor map and the linear bracket of the Lie algebroid
$\tau_{\bar{A}}: T\T^2\times \frak{se}(2)\longrightarrow \T^2$ is
given by
\begin{eqnarray*}
&&\rho_{\bar{A}}(\frac{\partial}{\partial
\psi})=\frac{\partial}{\partial \psi}, \qquad
\rho_{\bar{A}}(\frac{\partial}{\partial
\phi})=\frac{\partial}{\partial \phi}, \qquad
\rho_{\bar{A}}(\xi_i)=0, \quad
i=1, 2, 3\\
&& \lcf\xi_1, \xi_3\rcf_{\bar{A}}=-\xi_2, \qquad \lcf\xi_2,
\xi_3\rcf_{\bar{A}}=\xi_1,
\end{eqnarray*}
being equal to zero the rest of the fundamental Lie brackets.

We select the orthonormal basis of sections,  $\{X'_1, X'_2, X'_3,
X'_4, X'_5\}$, where
\begin{eqnarray*}
{X}'_1&=&
\displaystyle\frac{1}{\sqrt{2J_1}}\frac{\partial}{\partial
\phi},\\[10pt]
{X}'_2&=&
\displaystyle\frac{1}{\sqrt{f(\phi)}}\left(\frac{\partial}{\partial
\psi}-\frac{J_0\sin \phi}{mr^2}\left[-r(\cos\phi)\xi_1+(\sin \phi) \xi_3\right]\right)\\[10pt]
X'_3&=& \displaystyle\frac{1}{\sqrt{m}}\left[-(\cos\phi)
\xi_1+\frac{1}{r}(\sin\phi) \xi_3\right],
\end{eqnarray*}
and  $\{X'_4, X'_5\}$ is an orthonormal  basis of sections of the
orthogonal complement to $\bar{D}$, $\bar{D}^{\perp}$, with
respect to the induced bundle metric ${\mathcal G}_{\bar{A}}$.

Taking the induced coordinates $(\psi, \phi, v^1, v^2, v^3, v^4,
v^5)$ on $T\T^2\times \frak{se}(2)$ by this basis of sections, we
deduce that the  space of orbits $\bar{D}$ of the action of
$SE(2)$ on $D$ has as local equations, $v^4=0$ and $v^5=0$,  being
a basis of sections of $\bar{D}$, $\{X'_1, X'_2, X'_3\}$.
Moreover, in these coordinates the reduced Lagrangian $\bar{L}:
T\T^2\times \frak{se}(2)\longrightarrow \R$ is
\[
\bar{L}=\frac{1}{2}\left(
(v^1)^2+(v^2)^2+(v^3)^2+(v^4)^2+(v^5)^2\right)\;.
\]
Now, we consider the reduced nonholonomic mechanical system
$(\bar{L}, \bar{D})$.

After, some straightforward computations  we deduce that
\[
\lcf X'_1, X'_2\rcf_{\bar{D}}= \displaystyle -\frac{J_0\cos
\phi}{r\sqrt{2J_1mf(\phi)}}X'_3, \qquad \lcf X'_1,
X'_3\rcf_{\bar{D}}= \displaystyle\frac{J_0\cos \phi}{r\sqrt{2J_1m
f(\phi)}}X'_2, \qquad \lcf X'_2, X'_3\rcf_{\bar{D}}=0\; .
\]

Therefore, the non-vanishing structure functions are:
\[
C_{12}^{3}=-C_{21}^3= \displaystyle-\frac{J_0\cos
\phi}{r\sqrt{2J_1mf(\phi)}},\qquad C_{13}^{2}=-C_{31}^2=
\displaystyle\frac{J_0\cos \phi}{r\sqrt{2J_1mf(\phi)}}.
\]
Moreover,\begin{eqnarray*} &\rho_{\bar{D}}(X'_1)=
\displaystyle\frac{1}{\sqrt{2J_1}}\frac{\partial}{\partial \phi},
\qquad \rho_{\bar{D}}(X'_2) =
\displaystyle\frac{1}{\sqrt{f(\phi)}}\frac{\partial}{\partial
\psi}, \qquad \rho_{\bar{D}}(X'_3)=0\; .&
\end{eqnarray*}
This  shows that $\rho_{\bar{D}}(\bar{D})=T_q\T^2$ and then the
skew-symmetric algebroid $\bar{D}\longrightarrow \T^2$ is
completely nonholonomic.

The  local expression of the vector field $\xi_{(\bar{L},
\bar{D})}$ is
\begin{eqnarray*}
\xi_{(\bar{L}, \bar{D})}&=&
\displaystyle\frac{v^1}{\sqrt{2J_1}}\frac{\partial}{\partial
\phi}+ \frac{v^2}{\sqrt{f(\phi)}}\frac{\partial}{\partial
\psi}\displaystyle-\frac{J_0\cos
\phi}{r\sqrt{2J_1mf(\phi)}}v^1v^3\frac{\partial}{\partial v^2}
+\displaystyle\frac{J_0\cos
\phi}{r\sqrt{2J_1mf(\phi)}}v^1v^2\frac{\partial}{\partial v^3}
\end{eqnarray*}
Let $\{(X')^1, (X')^2, (X')^3\}$ be the dual basis of $\bar{D}^*$.
It induces a local coordinate system: $(\phi, \psi, p_1, p_2,
p_3)$ on $\bar{D}^*$ and, therefore,   the non-vanishing terms of
the nonholonomic bracket are:
\begin{eqnarray*}
&\{\phi, p_1\}_{\bar{D}^*}=  \displaystyle\frac{1}{\sqrt{2J_1}},
\qquad \{\psi, p_2\}_{\bar{D}^*}=\frac{1}{\sqrt{f(\phi)}},&\\
&\{p_1, p_2\}_{\bar{D}^*}=  \displaystyle\frac{J_0\cos \phi}{r\sqrt{2J_1mf(\phi)}}p_3,&\\
&\{p_1, p_3\}_{\bar{D}^*}= \displaystyle -\frac{J_0\cos
\phi}{r\sqrt{2J_1m f(\phi)}}p_2.&
\end{eqnarray*}

Now, we study the Hamilton-Jacobi equations for the snakeboard
system. A section $\alpha: \T^2\longrightarrow \bar{D}^*$,
$\alpha= \alpha_1(\phi, \psi)(X')^1+\alpha_2(\phi, \psi)(X')^2+
\alpha_3(\phi, \psi)(X')^3$, is a 1-cocycle
($d^{\bar{D}}\alpha=0$) if and only if:
  \begin{eqnarray}
 0&=& \frac{1}{\sqrt{2J_1}}\frac{\partial \alpha_2}{\partial \phi}-\frac{1}{\sqrt{f(\phi)}}\frac{\partial \alpha_1}{\partial
\psi}+\displaystyle\frac{J_0\cos
\phi}{r\sqrt{2J_1mf(\phi)}}\alpha_3\label{equation-ref1}
  \\
 0&=& \frac{1}{\sqrt{2J_1}}\frac{\partial \alpha_3}{\partial \phi}\displaystyle -\frac{J_0\cos
\phi}{r\sqrt{2J_1m f(\phi)}}\alpha_2\label{equation-ref2}\\
 0&=&\frac{1}{\sqrt{f(\phi)}}\frac{\partial\alpha_3}{\partial
\psi}\label{equation-ref3}\; .
\end{eqnarray}

Finally, since the skew-symmetric algebroid is completely
nonholonomic, the Hamilton-Jacobi equation is rewritten as
\begin{equation}\label{equation-ref4}
(\alpha_1(\phi, \psi))^2 +(\alpha_2(\phi, \psi))^2 +
(\alpha_3(\phi, \psi))^2=\hbox{constant}
\end{equation}

Now, we will use this  equation for studying   explicit solutions
for the snakeboard, showing the availability of our methods for
obtaining new insights in nonholonomic dynamics.

{}From Equation (\ref{equation-ref3}) we obtain that
$\alpha_3=\alpha_3(\phi)$. Then it is clear that also
$\alpha_2=\alpha_2(\phi)$. Assume that $\alpha_1=\hbox{constant}$.
Therefore, Equations (\ref{equation-ref1}) and
(\ref{equation-ref2}) are now, in this case, a system of ordinary
differential equations:
\begin{eqnarray}
 0&=& \frac{d \alpha_2}{d \phi}+\displaystyle\frac{J_0\cos \phi}{r\sqrt{mf(\phi)}}\alpha_3
 \label{equation-ref1-1}
  \\
 0&=& \frac{d \alpha_3}{d \phi}\displaystyle -\frac{J_0\cos
\phi}{r\sqrt{m f(\phi)}}\alpha_2\; .\label{equation-ref2-2}
\end{eqnarray}
Moreover, observe that all the solutions of these equations
automatically satisfy Equation (\ref{equation-ref4}) since
\[
\alpha_2\frac{d \alpha_2}{d \phi} +
 \alpha_3\frac{d \alpha_3}{d \phi} = 0\, \hbox{  and  } \alpha_1=\hbox{constant}
\]
Solving explicitly the system of equations (\ref{equation-ref1-1})
and (\ref{equation-ref2-2}) we obtain that
\begin{eqnarray*}
\alpha_2(\phi)&=&C_1\sqrt{ f(\phi)}+\frac{J_0C_2}{r\sqrt{m}}\sin\phi\\
\alpha_3(\phi)&=&\frac{J_0C_1}{r\sqrt{m}}\sin\phi-C_2\sqrt{
f(\phi)}
\end{eqnarray*}
with $C_1, C_2$ arbitrary constants. Therefore,
\[
\alpha(\phi, \psi)=(\phi, \psi; \sqrt{2J_1}C_0, C_1\sqrt{ f(\phi)}
+\frac{J_0C_2}{r\sqrt{m}}\sin\phi,
\frac{J_0C_1}{r\sqrt{m}}\sin\phi-C_2\sqrt{ f(\phi)})
\]
is an 1-cocycle of the skew-symmetric algebroid $\bar{D}\to \T^2$,
for all $(C_0,C_1, C_2)\in \R^3$ and moreover it satisfies
Equations (\ref{equation-ref4}). Hence, we can use Corollary
\ref{nonhoHam-Jaeq} to obtain solutions of the reduced snakeboard
problem. First, we calculate the integral curves of the vector
field $\xi_{(\bar{L}, \bar{D})\alpha}$:
\begin{eqnarray*}
\dot{\phi}(t)&=&C_0\\
\dot{\psi}(t)&=&\frac{1}{\sqrt{f(\phi(t))}}\left(C_1\sqrt{ f(\phi(t))}+\frac{J_0C_2}{r\sqrt{m}}\sin\phi(t)\right)\\
&=&C_1 + \frac{J_0C_2}{r\sqrt{m f(\phi(t))}}\sin\phi(t)
\end{eqnarray*}
whose solutions are:
\begin{eqnarray*}
\phi(t)&=&C_0t+C_3\\
\psi(t)&=&C_1t  -\frac{C_2}{C_0}\log\left[
\sqrt{2}\left(\sqrt{J_0}\cos (C_0 t+C_3)
+\sqrt{mr^2-J_0\sin^2 (C_0t+C_3)}\right)\right]+C_4\quad \hbox{ (if $C_0\not=0$)}\\
\psi(t)&=&C_1t + \frac{\sqrt{J_0}C_2 t
\sin(C_3)}{\sqrt{mr^2-J_0\sin^2(C_3)}}+C_4\quad \hbox{ (if
$C_0=0$)}
\end{eqnarray*}
for all constants $C_i\in \R$, $1\leq i\leq 5$. Now, by a direct
application of the nonholonomic equation we obtain that
\begin{eqnarray*}
v^1(t)&=&\sqrt{2J_1}C_0\\
v^2(t)&=&C_1\sqrt{ f(C_0t+C_3)}+\frac{J_0C_2}{r\sqrt{m}}\sin(C_0t+C_3)\\
v^3(t)&=&\frac{J_0C_1}{r\sqrt{m}}\sin(C_0t+C_3)-C_2\sqrt{
f(C_0t+C_3)}
\end{eqnarray*}
are solutions of the reduced nonholonomic problem.


\section{Conclusions and Future Work}
In this paper we have elucidated the geometrical framework for the
Hamilton-Jacobi equation. Our formalism is valid for nonholonomic
mechanical systems. The basic geometric ingredients are a vector
bundle, a linear almost Poisson bracket and a Hamiltonian function
both on the dual bundle.  We also have discussed the behavior of
the theory under Hamiltonian morphisms and its applicability to
reduction theory. Some examples are studied in detail and, as a
consequence, it is shown the utility of our framework to integrate
the dynamical equations. However, in this direction more work
must be done.

In particular, as a future research, we will study new particular
examples, testing candidates for solutions of the nonholonomic
Hamilton-Jacobi equation of the form $\alpha = d^D f$, for some
$f\in C^{\infty}(Q)$ (if there exists) and  moreover we will study
the complete solutions for the Hamilton-Jacobi equation using the
groupoid theory. In this line, we will study the construction of
numerical integrators via Hamilton-Jacobi theory \cite{Hair}.
 We will
also discuss the extension of our formalism to time-dependent
Lagrangian systems subjected to affine constraints in the
velocities. It would be interesting to describe the
Hamilton-Jacobi theory for variational constrained problems,
giving a geometric interpretation of the Hamilton-Jacobi-Bellman
equation for optimal control systems. Finally, extensions to
classical field theories in the present context could be
developed.

\appendix

\section*{Appendix}

\newcounter{apendice}
\setcounter{apendice}{0}
\renewcommand{\thesection}{A}
\setcounter{equation}{0} \setcounter{apendice}{1}

Let $\{\cdot, \cdot\}_{D^*}$ be a linear almost Poisson structure
on a vector bundle $\tau_{D}: D \to Q$, $(\lcf \cdot, \cdot
\rcf_{D}, \rho_{D})$ be the corresponding skew-symmetric Lie
algebroid structure on $D$ and $\alpha: Q \to D^*$ be a section of
$\tau_{D^*}: D^* \to Q$. If $q \in Q$ then we may choose local
coordinates $(q^U) = (q^i, q^a)$ on an open subset $U$ of $Q$,
$q\in U$, and a basis of sections $\{X_{A}\} = \{X_{i},
X_{\gamma}\}$ of the vector bundle $\tau_{D}^{-1}(U) \to U$ such
that
\begin{equation}\label{anclaenq}
\rho_{D}(X_{i})(q) = \displaystyle \frac{\partial}{\partial
q^i}_{|q}, \; \; \; \rho_{D}(X_{\gamma})(q) = 0.
\end{equation}
Suppose that
\begin{equation}\label{ecsdeestructura}
\rho_{D}(X_{A}) = \rho^{U}_{A}\displaystyle
\frac{\partial}{\partial q^U}, \; \; \; \lcf X_{A}, X_{B}\rcf_{D}
= C_{AB}^{C}X_{C}
\end{equation}
and that the local expression of $\alpha$ in $U$ is
\begin{equation}\label{exprelocalalpha}
\alpha(q^U) = (q^U, \alpha_{A}(q^U)).
\end{equation}
Denote by $\Lambda_{D^*}$ the linear almost Poisson $2$-vector on
$D^*$ and by $(q^U, p_{A}) = (q^i, q^a, p_{i}, p_{\gamma})$ the
corresponding local coordinates on $D^*$. Then, from
(\ref{LambdaDstar}), it follows that
\begin{equation}\label{2-vectorenelpunto}
\Lambda_{D^*}(\alpha(q)) = \displaystyle \frac{\partial}{\partial
q^i}_{|\alpha(q)} \wedge \frac{\partial}{\partial
p_{i}}_{|\alpha(q)} - \frac{1}{2} C_{AB}^{C}(q)\alpha_{C}(q)
\frac{\partial}{\partial p_{A}}_{|\alpha(q)} \wedge
\frac{\partial}{\partial p_{B}}_{|\alpha(q)}.
\end{equation}
Moreover, using (\ref{anclaenq}), (\ref{ecsdeestructura}) and
(\ref{exprelocalalpha}), we obtain that
\begin{equation}\label{difalpha}
\begin{array}{rcl}
(d^D\alpha)(q)(X_{i}(q), X_{j}(q)) &=& \displaystyle
\frac{\partial \alpha_{j}}{\partial q^i}_{|q} - \frac{\partial
\alpha_{i}}{\partial q^j}_{|q} - C_{ij}^{A}(q)\alpha_{A}(q), \\
[5pt] (d^D\alpha)(q)(X_{i}(q), X_{\gamma}(q)) &=& \displaystyle
\frac{\partial \alpha_{\gamma}}{\partial q^i}_{|q} -
C_{i\gamma}^{A}(q)\alpha_{A}(q),\\ [10pt]
(d^D\alpha)(q)(X_{\gamma}(q), X_{\nu}(q)) & =& -
C_{\gamma\nu}^{A}(q)\alpha_{A}(q).
\end{array}
\end{equation}
On the other hand, let ${\mathcal L}_{\alpha, D}(q)$ be the
subspace of $T_{\alpha(q)}D^*$ defined by (\ref{DefLalD}). Then,
from (\ref{anclaenq}) and (\ref{exprelocalalpha}), we deduce that
\begin{equation}\label{basedeLalD}
{\mathcal L}_{\alpha, D}(q) = \langle\{\displaystyle
\frac{\partial}{\partial q^i}_{|\alpha(q)} + \frac{\partial
\alpha_{A}}{\partial q^i}_{|q}\frac{\partial}{\partial
p_{A}}_{|\alpha(q)}\}\rangle
\end{equation}
which implies that
\begin{equation}\label{baseanulador}
({\mathcal L}_{\alpha, D}(q))^{0} = \langle\{dq^a(\alpha(q)),
dp_{j}(\alpha(q)) - \displaystyle \frac{\partial
\alpha_{j}}{\partial q^i}_{|q} dq^i(\alpha(q)),
dp_{\gamma}(\alpha(q)) - \displaystyle \frac{\partial
\alpha_{\gamma}}{\partial q^i}_{|q} dq^i(\alpha(q))\}\rangle.
\end{equation}
In addition, using (\ref{2-vectorenelpunto}), one may prove that
\begin{equation}\label{basedelortg}
\begin{array}{rcl}
\#_{\Lambda_{D^*}}(dq^a(\alpha(q))) & = & 0, \\ [5pt]
\#_{\Lambda_{D^*}}(dp_{j}(\alpha(q)) - \displaystyle
\frac{\partial \alpha_{j}}{\partial q^i}_{|q} dq^i(\alpha(q))) &=&
-\displaystyle \frac{\partial}{\partial q^j}_{|\alpha(q)} -
(\frac{\partial \alpha_{j}}{\partial q^i}_{|q} -
C_{ij}^{C}(q)\alpha_{C}(q))\frac{\partial}{\partial
p_{i}}_{|\alpha(q)}\\ [10pt] && \displaystyle -
C^C_{j\gamma}(q)\alpha_{C}(q) \frac{\partial}{\partial
p_{\gamma}}_{|\alpha(q)}, \\ [10pt]
\#_{\Lambda_{D^*}}(dp_{\gamma}(\alpha(q)) - \displaystyle
\frac{\partial \alpha_{\gamma}}{\partial q^i}_{|q}
dq^i(\alpha(q))) &=& -\displaystyle (\frac{\partial
\alpha_{\gamma}}{\partial q^i}_{|q} -
C_{i\gamma}^{C}(q)\alpha_{C}(q))\frac{\partial}{\partial
p_{i}}_{|\alpha(q)}\\[10pt]&& \displaystyle - C^C_{\gamma\nu}(q)\alpha_{C}(q)
\frac{\partial}{\partial p_{\nu}}_{|\alpha(q)}.
\end{array}
\end{equation}

\noindent {\it Proof of Proposition \ref{primerresultado}.} From
(\ref{difalpha}), (\ref{basedeLalD}), (\ref{baseanulador}) and
(\ref{basedelortg}), we deduce the result. \hfill$\Box$

\noindent {\it Proof of Proposition \ref{segundoresultado}.}
Suppose that
\[
\beta_{\alpha(q)} = \lambda_{U}dq^{U}(\alpha(q)) +
\mu^{A}dp_{A}(\alpha(q)) \in T_{\alpha(q)}^*D^*.
\]
Then, using (\ref{2-vectorenelpunto}) and (\ref{difalpha}), it
follows that
\[
\beta_{\alpha(q)} \in Ker \#_{\Lambda_{D^*}}(\alpha(q))
\Longleftrightarrow \mu^i = 0, \; \; \lambda_{i} = -\displaystyle
\frac{\partial \alpha_{\gamma}}{\partial q^i}_{|q} \mu^{\gamma},
\; \; \mbox{ for all } i.
\]
Thus, from (\ref{baseanulador}), we conclude that
\[
\beta_{\alpha(q)} \in ({\mathcal L}_{\alpha, D}(q))^0.
\]
\hfill$\Box$

\end{document}